\newcommand{\dd}{\mathinner{.\,.}}
\newcommand{\sub}{\subseteq}
\newcommand{\sm}{\setminus}
\newcommand{\Parsing}{\mathcal{P}}
\crefname{algocf}{Algorithm}{Algorithms}
\crefname{equation}{Eq.}{Eq.}
\newtheorem{observation}[theorem]{Observation}
\newtheorem{claim}[theorem]{Claim}
\begin{document}

\markboth{M. B. Ettienne et al.}{Optimal-Time Dictionary-Compressed Indexes}

\title{Optimal-Time Dictionary-Compressed Indexes}

\author{%
Anders Roy Christiansen
\affil{The Technical University of Denmark, Denmark}
Mikko Berggren Ettienne\affil{The Technical University of Denmark, Denmark}
Tomasz Kociumaka\affil{Bar-Ilan University, Israel, and University of Warsaw,
Poland}
Gonzalo Navarro\affil{CeBiB and University of Chile, Chile}
Nicola Prezza\affil{University of Pisa, Italy}
}

\begin{abstract}
We describe the first self-indexes able to count and locate pattern
occurrences in optimal time within a space bounded by the size of the most
popular dictionary compressors. To achieve this result we combine several
recent findings, including \emph{string attractors} --- new combinatorial
objects encompassing most known compressibility measures for highly repetitive
texts ---, and grammars based on \emph{locally-consistent parsing}. 
	
More in detail, let $\gamma$ be the size of the smallest attractor for a text $T$ of length $n$. The measure $\gamma$ is an (asymptotic) lower bound to the size of dictionary compressors based on Lempel--Ziv, context-free grammars, and many others. The smallest known text representations in terms of attractors use space $O(\gamma\log(n/\gamma))$, and
our lightest indexes work within the same asymptotic space. Let $\epsilon>0$ be a suitably small constant fixed at construction time, $m$ be the pattern length, and $occ$ be the number of its text occurrences. Our index  counts pattern occurrences in $O(m+\log^{2+\epsilon}n)$ time, and locates them in $O(m+(occ+1)\log^\epsilon n)$ time. 
These times already outperform those of most dictionary-compressed indexes, while obtaining the least asymptotic space for any index searching within $O((m+occ)\,\textrm{polylog}\,n)$ time.
Further, by increasing the space to $O(\gamma\log(n/\gamma)\log^\epsilon n)$, we reduce the
locating time to the optimal $O(m+occ)$, and within
$O(\gamma\log(n/\gamma)\log n)$
space we can also count in optimal $O(m)$ time. No dictionary-compressed index had obtained this time before. All our indexes can be
constructed in $O(n)$ space and $O(n\log n)$ expected time.

As a byproduct of independent interest, we show how to build, in $O(n)$ expected time and without knowing the size $\gamma$ of the smallest attractor (which is NP-hard to find), a run-length context-free grammar of size $O(\gamma\log(n/\gamma))$ generating (only) $T$.
As a result, our indexes can be built without knowing $\gamma$.
\end{abstract}

\begin{bottomstuff}
Kociumaka supported by ISF grants no. 824/17 and 1278/16 and by an ERC grant
MPM under the EU's Horizon 2020 Research and Innovation Programme (grant no.
683064).
Navarro supported by Fondecyt grant 1-170048, Chile; Basal Funds FB0001,
Conicyt, Chile.
Prezza supported by the project MIUR-SIR CMACBioSeq (``Combinatorial methods
for analysis and compression of biological sequences'') grant no.~RBSI146R5L.

A preliminary  version of this article appeared in {\em Proc. LATIN'18} \cite{CE18}.

Author's addresses:
Anders Roy Christiansen, The Technical University of Denmark, Denmark,
{\tt aroy@dtu.dk}. 
Mikko Berggren Ettienne,  The Technical University of Denmark, Denmark,
{\tt miet@dtu.dk}. 
Tomasz Kociumaka, Department of Computer Science, Bar-Ilan University, Ramat
Gan, Israel \and Institute of Informatics, University of Warsaw,
Poland, {\tt kociumaka@mimuw.edu.pl}. 
Gonzalo Navarro, CeBiB -- Center for Biotechnology and Bioengineering, Chile
\and Department of Computer Science, University of Chile, Chile, {\tt
gnavarro@dcc.uchile.cl}. 
Nicola Prezza, Department of Computer Science, University of Pisa, Italy,
{\tt nicola.prezza@di.unipi.it}. 
\end{bottomstuff}


\keywords{Repetitive string collections; Compressed text indexes; Attractors; Grammar
compression; Locally-consistent parsing}

\maketitle


\section{Introduction}

The need to search for patterns in large string collections lies at the heart of many text retrieval, analysis, and mining tasks, and techniques to support it efficiently have been studied for decades: the suffix tree, which is the landmark solution, is over 40 years old \cite{DBLP:conf/focs/Weiner73,DBLP:journals/jacm/McCreight76}. The recent explosion of data in digital form led the research since 2000 towards {\em compressed self-indexes}, which support text access and searches within compressed space \cite{DBLP:journals/csur/NavarroM07}. This research, though very successful, is falling short to cope to a new wave of data that is flooding our storage and processing capacity with volumes of higher orders of magnitude that outpace Moore's Law \cite{Plos15}. Interestingly enough, this massive increase in data size is often not accompanied with a proportional increase in the amount of information that data carries: much of the fastest-growing data is {\em highly repetitive}, for example thousands of genomes of the same species, versioned document and software repositories, periodic sky surveys, and so on. Dictionary compression of those datasets typically reduces their size by two orders of magnitude \cite{GNP18}. Unfortunately, previous self-indexes build on statistical compression, which is unable to capture repetitiveness \cite{KN13}; therefore, a new generation of compressed self-indexes based on dictionary compression is emerging.

 Examples of successful compressors from this family include (but are not limited to) the Lempel--Ziv factorization~\cite{LZ76}, of size $z$; context-free grammars~\cite{KY00} and run-length context-free grammars~\cite{DBLP:conf/mfcs/NishimotoIIBT16}, of size $g$; bidirectional macro schemes
\cite{SS82}, of size $b$; and collage systems \cite{KidaMSTSA03}, of size $c$. Other compressors that are not dictionary-based but also perform well on repetitive text collections are the run-length Burrows--Wheeler 
transform~\cite{BWT}, of size $\rho$, and the CDAWG~\cite{blumer1987complete}, of size $e$.
A number of compressed self-indexes have been built on top of those
compressors; \citeN{GNP18} give a thorough review. 

Recently, \citeN{KP18} showed that all the above-mentioned
repetitiveness measures (i.e., $z$, $g$, $b$, $c$, $\rho$, $e$) are never asymptotically smaller than the size $\gamma$ of a new
combinatorial object called \emph{string attractor}. This and subsequent
works~\cite{KP18,NP18,prezza2019optimal} showed that efficient access and searches can be
supported within $O(\gamma\log(n/\gamma))$ space. By the nature of this new
repetitiveness measure, such data structures are universal, in the sense that
they can be used on top of a wide set of dictionary-compressed 
representations of $T$.

\paragraph{Our results}
In this article we obtain the best results on attractor-based indexes, including
the first optimal-time search complexities within space bounded in terms of $\gamma$, $z$, $g$, $b$,
or $c$. We combine and improve upon three recent results:

\begin{enumerate}
\item \citeN[Thm.~2]{NP18} presented the first index that builds on
an attractor of size $\gamma$ of a text $T[1..n]$. It uses 
$O(\gamma\log(n/\gamma))$ space and finds the $occ$ occurrences of a pattern 
$P[1..m]$ in time $O(m\log n + occ (\log\log(n/\gamma)+\log^\epsilon \gamma))$ 
for any constant $\epsilon>0$.
\item \citeN[Thm.~2(3)]{CE18} presented an index that builds on
the Lempel--Ziv parse of $T$, of $z \ge \gamma$ phrases, which uses 
$O(z\log(n/z))$ space and searches
in time\footnote{This is the conference version of the present article, where we mistakenly claim a slightly better time of $O(m + \log^\epsilon z + occ(\log\log n + \log^\epsilon z))$. The error can be traced back to the wrong claim that our two-sided range structure, built on $O(z\log(n/z))$ points, answers queries in $O(\log^\epsilon z)$ time (the correct time is, instead, $O(\log^\epsilon (z\log(n/z)))$). The second occurrence of $\log^\epsilon z$, however, is correct, because the missing term is absorbed by $O(\log\log n)$.} $O(m + \log^\epsilon(z\log(n/z)) + occ(\log\log n + \log^\epsilon z))$.
\item \citeN[Thm.~5]{Nav18} presented the first index that builds on the 
Lempel--Ziv parse of $T$ and counts the number of occurrences of $P$ in $T$
(i.e., computes $occ$) in time $O(m\log n + m\log^{2+\epsilon} z)$, using $O(z\log(n/z))$ space.
\end{enumerate}

Our contributions are as follows:

\begin{enumerate}
\item
We obtain, in space $O(\gamma\log(n/\gamma))$, an index that lists all the
occurrences of $P$ in $T$ in time $O(m + \log^\epsilon \gamma + 
occ \log^\epsilon (\gamma\log(n/\gamma)))$,
thereby obtaining the best space and improving the time from previous works \cite{CE18,NP18}. 
\item
We obtain, in space $O(\gamma\log(n/\gamma))$, an index that counts the
occurrences of $P$ in $T$ in time $O(m+\log^{2+\epsilon} 
(\gamma\log(n/\gamma)))$, which outperforms the previous result \cite{Nav18} 
both in time and space. 
\item
Using more space, $O(\gamma\log(n/\gamma)\log^\epsilon n)$, we list the
occurrences in optimal $O(m+occ)$ time, and within space
$O(\gamma\log(n/\gamma)\log n)$, we count them in optimal  
$O(m)$ time. 
\end{enumerate}

We can build all our structures in $O(n\log n)$ expected time and $O(n)$ working
space, without the need to know the size $\gamma$ of the smallest attractor.

Our first contribution uses the minimum known asymptotic space,
$O(\gamma\log(n/\gamma))$, for any dictionary-compressed index searching in
time $O((m+occ)\,\mathrm{polylog}\,n)$ \cite{GNP18}. Only recently
\cite{NP18}, it has been shown that it is possible to search within this
space. Indeed, our new index outperforms most dictionary-compressed indexes,
with a few notable exceptions like \citeN{gagie2014lz77}, who use
$O(z\log(n/z))$ space and $O(m\log m + occ \log\log n)$ search time (but,
unlike us, assume a constant alphabet), and \citeN{PhBiCPM17}, who use  $O(z\log(n/z)\log\log z)$ space and $O(m+occ \log\log n)$ search time without making any assumption on the alphabet size.
Our second contribution lies on a less explored area, since the first index able to count efficiently within dictionary-bounded space is very recent \cite{Nav18}.

Our third contribution yields the first indexes with space bounded in terms of $\gamma$, $z$, $g$, $b$, or
$c$, multiplied by any $O(\textrm{polylog}\,n)$, that searches in optimal time. Such optimal times have been obtained, instead, by using 
$O(\rho\log(n/\rho))$ space \cite{GNP18}, or using $O(e)$ space \cite{BC17}. Measures $\rho$ and $e$, however, are not related to dictionary compression and, more importantly, have no known useful upper bounds in terms of $\gamma$. Further, experiments \cite{BCGPR15,GNP18} show that they are usually considerably larger than $z$ on repetitive texts.

As a byproduct of independent interest, we show how to build a run-length context-free grammar (RLCFG) of size 
$O(\gamma\log(n/\gamma))$ generating (only) $T$, where $\gamma$ is the size of 
the smallest attractor, in $O(n)$ expected time {\em and without the need to 
know the attractor}. We use this result to show that our indexes do not need to know an attractor, nor its minimum possible size $\gamma$ (which is NP-hard to obtain \cite{KP18}) in order to achieve their attractor-bounded results. This makes our results much more practical.
Another byproduct is the generalization of our results to arbitrary CFGs and, especially, RLCFGs, yielding slower times in $O(g)$ space, which can potentially be $o(\gamma\log(n/\gamma))$.

\paragraph{Techniques}
A key component of our result is the fact that one can build a locally-consistent and 
locally-balanced grammar generating (only) $T$ such that only a few splits of
a pattern $P$ must be considered in order to capture all of its ``primary''
occurrences \cite{KU96}. Previous parsings had obtained $O(\log m \log^* n)$
\cite{NIIBT15} and $O(\log n)$ \cite{DBLP:conf/soda/GawrychowskiKKL18} splits,
but now we build on a parsing by \citeN{DBLP:journals/algorithmica/MehlhornSU97} to obtain $O(\log m)$ splits with a grammar of size $O(\gamma\log(n/\gamma))$.

Our first step is to define a variant of Mehlhorn et al.'s randomized parsing and prove, in Section~\ref{sec:lcp}, that it enjoys several locality properties we require later for indexing. In Section~\ref{sec:locality}, we use the parsing to build a RLCFG with the local balancing and local consistency properties we need. We then show, in Section~\ref{sec:attractor}, that the size of this grammar is bounded by $O(\gamma\log(n/\gamma))$, by proving that new nonterminals appear only around attractor positions. 

In that section, we also show that the grammar can be built without knowing the minimum size $\gamma$ of an attractor of $T$. This is important because, unlike $z$, which can be computed in $O(n)$ time, finding $\gamma$ is NP-hard \cite{KP18}. For this sake we define a new measure of compressibility, $\delta \le \gamma$, which can be computed in $O(n)$ time and can be used to bound the size of the grammar. 

Section~\ref{sec:index} describes our index.
We show how to parse the pattern in linear time using the same text grammar, and how to do efficient substring extraction and Karp--Rabin fingerprinting from a 
RLCFG. Importantly, we prove that only $O(\log m)$ split points are necessary in our grammar. All these elements are needed to obtain time linear in $m$. We also build on existing techniques \cite{CNspire12} to obtain time linear in $occ$ for the ``secondary'' occurrences; the primary ones are found in a two-dimensional data structure and require more time. Finally, by using a larger two-dimensional structure and introducing new techniques to handle short patterns, we raise the space to $O(\gamma\log(n/\gamma)\log^\epsilon n)$ but obtain the first dictionary-compressed index using optimal $O(m+occ)$ time.

In Section~\ref{sec:counting} we 
use the fact that only $O(\log m)$ splits must be considered to reduce the counting time of 
\citeN{Nav18}, while making its space attractor-bounded as well. This requires handling the run-length rules of RLCFGs, which turns out to require new ideas exploiting string periodicities. Further, by handling short patterns separately and raising the space to $O(\gamma\log(n/\gamma)\log n)$, we obtain the first dictionary-compressed index that counts in optimal time, $O(m)$.
 
 Along the article we obtain various results on accessing and indexing specific RLCFGs. We generalize them to arbitrary CFGs and RLCFGs in Appendix~\ref{sec:rlcfgs}.
 
\medskip
An earlier version of this article appeared in {\em Proc. LATIN'18} \cite{CE18}. This article is an  exhaustive rewrite where we significantly extend and improve upon the conference results. We use a slightly different grammar, which requires re-proving all the results, in particular correcting and completing many of the proofs in the conference paper. We have also reduced the space by building on attractors instead of Lempel--Ziv parsing, used better techniques to report secondary occurrences and handle short patterns, and ultimately obtained optimal locating time.
All the results on counting are also new.


\section{Basic Concepts} \label{sec:basics}

\paragraph*{Strings and texts}
A {\em string} is a sequence $S[1\dd \ell] = S[1] S[2] \cdots S[\ell]$ of {\em
symbols}. The symbols belong to an {\em alphabet} $\Sigma$, which is a finite 
subset of the integers. The {\em length} of $S$ is written as
$|S|=\ell$. 

A string $Q$ is a \emph{substring} of $S$ if $Q$ is empty or $Q=S[i] \cdots S[j]$ 
for some indices $1\le i \le j \le \ell$.
The \emph{occurrence} of $Q$ at position $i$ of $S$ is a \emph{fragment} of $S$
denoted $S[i\dd j]$. We then also say that $S[i\dd j]$ \emph{matches} $Q$.
We assume implicit casting of fragments to the underlying substrings 
so that $S[i\dd j]$ may also denote $S[i]\cdots S[j]$ in contexts requiring strings rather than fragments.

A  {\em suffix} of $S$ is a fragment of the form $S[i\dd \ell]$, and a {\em prefix}
is a fragment of the form $S[1\dd i]$. 
The juxtaposition of strings and/or 
symbols represents their concatenation, and the exponentiation denotes the
iterated concatenation. The {\em reverse} of $S[1\dd \ell]$ is $S^{rev} = S[\ell] S[\ell-1]
\cdots S[1]$.

 We will index a string $T[1\dd n]$, called the {\em text}. We assume our text
 to be flanked by special symbols $T[1]=\#$ and $T[n]=\$$ that belong to 
 $\Sigma$ but occur nowhere else in $T$. This, of course, does not change any 
 of our asymptotic results, but it simplifies matters. 

\paragraph*{Karp--Rabin signatures}
{\em Karp--Rabin fingerprinting} \cite{DBLP:journals/ibmrd/KarpR87} assigns to every string $S[1\dd \ell]$ a
signature $\kappa(S) = (\sum_{i=1}^\ell S[i]\cdot c^{i-1}) \bmod \mu$ for
a suitable integers $c$ and a prime number $\mu$. It is possible to build a signature
formed by a pair of functions $\langle \kappa_1,\kappa_2 \rangle$ 
guaranteeing no collisions between substrings of $T[1\dd n]$, in 
$O(n\log n)$ expected time \cite{DBLP:journals/jda/BilleGSV14}.

\paragraph*{With high probability}
The term \emph{with high probability} (\emph{w.h.p.}) means with probability 
at least $1-n^{-c}$ for an arbitrary constant parameter $c$,
where $n$ is the input size (in our case, the length of the text).

\paragraph*{Model of computation}
We use the RAM model with word size $w = \Omega(\log n)$, allowing classic arithmetic
and bit operations on words in constant time.
Our logarithms are to the base $2$ by default.


\section{Locally-Consistent Parsing} \label{sec:lcp}

A string $S[1\dd n]$ can be parsed in a \emph{locally 
consistent} way, meaning that equal substrings are largely parsed in the same 
form.  We use a variant of the parsing of \citeN{DBLP:journals/algorithmica/MehlhornSU97}. 

Let us define a \emph{run} in a string as a 
maximal substring repeating one symbol. 
The parsing proceeds in two passes. First, it groups the runs into 
\emph{metasymbols}, which are seen as single symbols. 
The resulting sequence is denoted $\hat{S}[1\dd \hat{n}]$. 
The following definition describes the process precisely and defines mappings between $S$ and $\hat{S}$.

\newcommand{\run}[1]{\textrm{\framebox{$#1$}}}
\begin{definition} \label{def:Sb}
The string $\hat{S}[1\dd \hat{n}]$ is obtained from a string $S[1\dd n]$ by replacing every
distinct run $a^\ell$ in $S$ by a special metasymbol $\run{a^\ell}$
so that two occurrences of the same run $a^\ell$ are replaced by the same
metasymbol. The alphabet $\hat{\Sigma}$ of $\hat{S}$ consists of the metasymbols that represent runs in $S$,
that is $\hat{\Sigma} = \{\textrm{\framebox{$a^\ell$}}: a^\ell\text{ is a run in }S\}$. 

A position $S[i]$ that belongs to a run $a^\ell$ is \emph{mapped} to the position $\hat{S}[\hat{i}]$
of the corresponding metasymbol $\run{a^\ell}$, denoted $\hat{i}=map(i)$.
A position $\hat{S}[\hat{i}]$ is \emph{mapped back}
to the maximal range $imap(\hat{i}) = [fimap(\hat{i})\dd limap(\hat{i})]$ of positions in 
$S$ that map to $\hat{i}$. That is, if $S[i\dd i+\ell-1]$ is a run in $S$ that
maps to $\hat{i}$, then $fimap(\hat{i})=i$ and $limap(\hat{i})=i+\ell-1$.
\end{definition}

The string $\hat{S}$ is then parsed into {\em blocks}. A bijective function $\pi : 
\Sigma \rightarrow [1\dd |\Sigma|]$ is chosen uniformly at random; we call it
a {\em permutation}. We then extend $\pi$ to $\hat{\Sigma}$ so that $\pi(\run{a^\ell}) = \pi(a)$,
that is, the value on a metasymbol is inherited from the underlying symbol. Note that no two consecutive symbols in $\hat{S}$ have the same $\pi$ value.
We then define local minima in $\hat{S}$, and these are used to parse $\hat{S}$ (and $S$) into blocks.

\begin{definition} \label{def:localmin}
Given a string $S$, its corresponding string $\hat{S}[1\dd \hat{n}]$, and a permutation 
$\pi$ on the alphabet of $S$, a \emph{local minimum} of $\hat{S}$ is defined as any
position $\hat{i}$ such that $1<\hat{i}<\hat{n}$ and  $\pi(\hat{S}[\hat{i}-1]) > \pi(\hat{S}[\hat{i}]) < 
\pi(\hat{S}[\hat{i}+1])$. 
\end{definition}

\begin{definition} \label{def:parse}
The {\em parsing} of $\hat{S}$ partitions it into a sequence of {\em blocks}. The
blocks end at position $\hat{n}$ and at every local 
minimum. The parsing of $\hat{S}$ induces a parsing on $S$: If a block ends at 
$\hat{S}[\hat{i}]$, then a block ends at $S[limap(\hat{i})]$.
\end{definition}

Note that, by definition, the first block starts at $S[1]$.
When applied on texts $S[1\dd n]$, it will hold that $\hat{S}[1]=\#$ and 
$\hat{S}[\hat{n}]=\$$, so $\hat{S}$ will also be a text. Further, we will always force that
$\pi(\$)=1$ and $\pi(\#)=2$, which guarantees that there cannot be local minima
in $\hat{S}[1\dd 2]$ nor in $\hat{S}[\hat{n}-1\dd \hat{n}]$. Together with the fact that there cannot
be two consecutive local minima, this yields the following observation.

\begin{observation} \label{obs:length2}
Every block in $S$ or $\hat{S}$ is formed by at least two consecutive elements 
(symbols or metasymbols, respectively).
\end{observation}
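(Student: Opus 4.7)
The plan is to verify the observation directly from the definitions, leveraging the two structural facts already flagged by the authors: local minima cannot be adjacent, and the convention $\pi(\$)=1$, $\pi(\#)=2$ forbids local minima at the boundary positions $2$ and $\hat{n}-1$.

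First I would argue that $\hat{S}$ contains no two consecutive local minima. Indeed, if position $\hat{i}$ is a local minimum then by \Cref{def:localmin} we have $\pi(\hat{S}[\hat{i}])<\pi(\hat{S}[\hat{i}+1])$, whereas position $\hat{i}+1$ being a local minimum would require the opposite inequality $\pi(\hat{S}[\hat{i}])>\pi(\hat{S}[\hat{i}+1])$. Next I would verify the boundary statement: position $\hat{i}=2$ cannot be a local minimum because that would need $\pi(\hat{S}[1])>\pi(\hat{S}[2])$, but $\pi(\hat{S}[1])=\pi(\#)=2$ and the only element with a smaller $\pi$-value is $\$$, which lies only at position $\hat{n}$; similarly, $\hat{i}=\hat{n}-1$ cannot be a local minimum since it would require $\pi(\hat{S}[\hat{n}-1])<\pi(\hat{S}[\hat{n}])=\pi(\$)=1$, which is impossible.

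With these two facts in hand, the statement for $\hat{S}$ follows by inspecting the three block types induced by \Cref{def:parse}. The first block begins at position $1$ and ends at the first block boundary, which is either a local minimum (necessarily at index $\ge 3$) or position $\hat{n}$ itself; in either case it spans at least two metasymbols. An internal block starts right after a local minimum $\hat{i}$ and ends at the next local minimum $\hat{j}$; since local minima cannot be consecutive, $\hat{j}\ge \hat{i}+2$ and the block has length at least two. Finally, the last block ends at $\hat{n}$ and starts right after some local minimum at index $\le \hat{n}-2$, yielding again length at least two.

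To transfer the bound to $S$, I would invoke \Cref{def:Sb}: the parsing of $\hat{S}$ induces a parsing of $S$ where each metasymbol $\run{a^\ell}$ is expanded into its run $a^\ell$ of length $\ell\ge 1$. Since each block of $\hat{S}$ contains at least two metasymbols and each metasymbol expands to at least one symbol of $S$, every block of $S$ contains at least two consecutive symbols. The main subtlety in this plan is only bookkeeping: making sure the inductive argument covers the three boundary cases (first block, internal blocks, last block) consistently, and that the first and last blocks are properly handled when $\hat{n}$ is small, which is guaranteed because $T[1]=\#$ and $T[n]=\$$ force $\hat{n}\ge 2$ and $\hat{S}[1]\ne \hat{S}[\hat{n}]$.
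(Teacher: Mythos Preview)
Your proposal is correct and follows exactly the approach the paper sketches in the paragraph preceding the observation: the paper simply notes that the convention $\pi(\$)=1$, $\pi(\#)=2$ rules out local minima in $\hat{S}[1\dd 2]$ and $\hat{S}[\hat{n}-1\dd \hat{n}]$, and that there cannot be two consecutive local minima, then states the observation without further argument. You have merely made these two facts and the ensuing case analysis explicit, which is a faithful (and welcome) unpacking of the paper's one-line justification.
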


\begin{definition}\label{def:bb}
    We say that a position $p<n$ of the parsed text $S$ is a \emph{block boundary}
    if a block ends at position $p$.
    For every non-empty fragment $S[i\dd j]$ of $S$, we define \[B(i,j)=\{p - i : i \le p < j\text{ and $p$ is a block boundary}\}.\]
    Moreover, for every integer $c\ge 0
    $, we define subsets $L(i,j,c)$ and $R(i,j,c)$ of $B(i,j)$ consisting of the $\min(c,|B(i,j)|)$ smallest and largest elements of $B(i,j)$, respectively.    
\end{definition}

Observe that any fragment $S[i\dd j]$ intersects a sequence of $1+|B(i,j)|$ blocks
(the first and
the last block might not be contained in the fragment). We are interested in 
\emph{locally contracting} parsings, where this number of blocks
is smaller than the fragment's length by a constant factor.

\begin{definition} \label{def:locally-contractive}
A parsing is \emph{locally contracting} if there exist constants $\alpha$ and 
$\beta<1$ such that $|B(i,j)|\le \alpha+\beta |S[i\dd j]|$ for every fragment $S[i\dd j]$ of $S$.
\end{definition}

\begin{lemma} \label{lem:locally-contracting}
The parsing of $S$ from \Cref{def:parse} is locally contracting with $\alpha=0$ and $\beta=\frac12$.
\end{lemma}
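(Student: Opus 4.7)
The plan is to derive the lemma as a direct consequence of \Cref{obs:length2}. Since every block has length at least two, the set of positions in $S$ that end a block (the block boundaries, in the sense of \Cref{def:bb}) forms a 2-spaced subset of $\{1,\ldots,n-1\}$. The lemma then reduces to a counting argument on how many elements of such a 2-spaced set can fall into an index interval.

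Concretely, I would unfold \Cref{def:bb}: $B(i,j)$ records block boundaries $p$ with $i\le p<j$, i.e., $p\in\{i,i+1,\ldots,j-1\}$. Suppose $B(i,j)$ contains $k$ boundaries $p_1<p_2<\cdots<p_k$. By \Cref{obs:length2}, any block has length $\ge 2$, so for any two consecutive block boundaries $p_l<p_{l+1}$ of $S$ we have $p_{l+1}-p_l\ge 2$ (the block ending at $p_{l+1}$ starts at $p_l+1$ and spans at least two positions). Therefore $p_k-p_1\ge 2(k-1)$, and combining with $p_1\ge i$ and $p_k\le j-1$ yields $2(k-1)\le j-1-i$, hence
\[
|B(i,j)|=k\le \frac{j-i+1}{2}=\tfrac{1}{2}\,|S[i\dd j]|.
\]
This is exactly the required inequality $|B(i,j)|\le \alpha+\beta |S[i\dd j]|$ with $\alpha=0$ and $\beta=\tfrac12$.

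There is really no obstacle here: the entire combinatorial content has been front-loaded into \Cref{obs:length2}, whose justification (in the text preceding it) is that the forced values $\pi(\$)=1$, $\pi(\#)=2$ ensure no local minima appear at the two endpoints of $\hat S$, while the parsing of $S$ lifts back from $\hat S$ via the $limap$ function, and metasymbols plus the no-two-consecutive-local-minima property guarantee each block covers at least two positions of $S$. Once that observation is in hand, the lemma is just the pigeonhole-style spacing count above.
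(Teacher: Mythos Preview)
Your proof is correct and follows essentially the same approach as the paper: both derive the bound directly from \Cref{obs:length2} by noting that block boundaries are at least two apart and then counting how many can fit in the interval $[i\dd j-1]$. One minor notational slip: by \Cref{def:bb} the elements of $B(i,j)$ are the offsets $p-i$ rather than the positions $p$ themselves, but this does not affect the cardinality or the spacing argument, so the conclusion stands unchanged.
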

\begin{proof}
By \Cref{obs:length2}, adjacent positions in $S$ cannot both be block boundaries.
Hence, $|B(i,j)|\le \lceil \frac{j-i}{2}\rceil = \lfloor\frac{j-i+1}{2}\rfloor = \lfloor\frac12|S[i\dd j]|\rfloor \le \frac12|S[i\dd j]|$.
\end{proof}
We formally define \emph{locally consistent} parsings as follows.
\begin{definition}\label{def:locally-consistent}
A parsing is \emph{locally consistent} if there exists a constant $c_{p}$ 
such that for every pair of matching fragments $S[i\dd j]=S[i'\dd j']$ it holds that
$B(i,j)\sm B(i',j') \sub L(i,j,c_p)\cup R(i,j,c_p)$, that is $B(i,j)$ and $B(i',j')$
differ by at most $c_p$ smallest and $c_p$ largest elements.
\end{definition}

Next, we prove local consistency of our parsing.
\begin{lemma}\label{lem:alt}
    The parsing of $S$ from \Cref{def:parse} is locally consistent with $c_p=1$.
    More precisely, if $S[i\dd j]=S[i'\dd j']$ are matching fragments of $S$,
    then \[B(i,j)\sm \{limap(map(i))-i\} =  B(i',j') \sm \{limap(map(i))-i\}.\] 
\end{lemma}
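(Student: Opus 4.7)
The plan is to reduce the equality of the block-boundary sets to a comparison of local minima in the metasymbol sequence $\hat{S}$ inside the two occurrences. Write $\hat{i}=map(i)$, $\hat{j}=map(j)$, $\hat{i}'=map(i')$, and $\hat{j}'=map(j')$. First, I would show that the interior metasymbols agree: because $S[i\dd j]=S[i'\dd j']$, every maximal run of $S$ strictly contained in $(i,j)$ coincides as a substring with the corresponding run strictly contained in $(i',j')$, so these interior runs produce identical factors $\hat{S}[\hat{i}+1\dd \hat{j}-1]=\hat{S}[\hat{i}'+1\dd \hat{j}'-1]$, and in particular $\hat{j}-\hat{i}=\hat{j}'-\hat{i}'$. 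The extreme metasymbols $\hat{S}[\hat{i}]$ vs.\ $\hat{S}[\hat{i}']$ (and likewise $\hat{S}[\hat{j}]$ vs.\ $\hat{S}[\hat{j}']$) may encode runs of different lengths, but they share their underlying character, since $S[i]=S[i']$ and $S[j]=S[j']$. Because $\pi$ depends only on this character, the sequences $\pi(\hat{S}[\hat{i}+k])$ and $\pi(\hat{S}[\hat{i}'+k])$ agree for every $0\le k\le \hat{j}-\hat{i}$.

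Next, I would translate block-boundary offsets into local minima. A contribution to $B(i,j)$ arises from a block boundary $p\in[i,j-1]$, which corresponds to a local minimum $\hat{k}$ with $limap(\hat{k})=p$. Since $j\in imap(\hat{j})$ forces $limap(\hat{j})\ge j$, such $\hat{k}$ must lie in $[\hat{i},\hat{j}-1]$. For $\hat{k}\in[\hat{i}+1,\hat{j}-1]$, the local-minimum test $\pi(\hat{S}[\hat{k}-1])>\pi(\hat{S}[\hat{k}])<\pi(\hat{S}[\hat{k}+1])$ uses only positions in $[\hat{i},\hat{j}]$, and therefore yields the same verdict in both fragments. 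Moreover, for such interior $\hat{k}$ one verifies $fimap(\hat{k})>limap(\hat{i})\ge i$ and $limap(\hat{k})<fimap(\hat{j})\le j$, so the run at $\hat{k}$ sits strictly inside the fragment and the offset $limap(\hat{k})-i$ is dictated entirely by $S[i\dd j]$. This places the contributions of interior $\hat{k}$'s to $B(i,j)$ in bijection with the analogous contributions to $B(i',j')$ at exactly the same offsets.

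The only case left out is $\hat{k}=\hat{i}$, whose status as a local minimum depends on $\pi(\hat{S}[\hat{i}-1])$, a value that lies outside the fragment and may genuinely differ from $\pi(\hat{S}[\hat{i}'-1])$. If $\hat{i}$ is a local minimum, it contributes the block boundary $limap(\hat{i})=limap(map(i))$, i.e., the offset $limap(map(i))-i$ that is removed from both sides of the claimed equality. To close the argument, I would check that the symmetric exclusion $limap(map(i'))-i'$ matches: when $limap(map(i))-i<j-i$, the run containing $i$ terminates strictly inside $S[i\dd j]$ at a position fixed by the fragment's content, so the same holds in $S[i'\dd j']$ and $limap(map(i'))-i'=limap(map(i))-i$; when instead the offset is $\ge j-i$, it is out of range in both $B(i,j)$ and $B(i',j')$, and its exact value is immaterial to the set equality.

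The main obstacle I anticipate is correctly handling the asymmetry built into the definition of $B(i,j)$: only $p<j$ is counted, so the metasymbol $\hat{j}$ never contributes to the set while $\hat{i}$ can. Once this asymmetry is pinned down, the remainder flows from the simple observation that $\pi$ depends only on the underlying character, making the $\pi$-value sequence inside a fragment invariant under a string match.
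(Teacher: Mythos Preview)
Your proposal is correct and follows essentially the same approach as the paper, just in different coordinates. The paper first rewrites the block-boundary condition entirely in $S$-space: a position $q$ with $1<q<n$ is a block boundary iff $\pi(S[q])<\pi(S[q+1])$ and $\pi(S[q])<\pi(S[r])$, where $r=fimap(map(q))-1$ is the last position to the left of $q$ with $S[r]\neq S[q]$. With this characterization in hand, the paper compares positions $p$ and $p'=p-i+i'$ directly: if $p>limap(map(i))$ then $r\ge i$ and $r'-i'=r-i$, so the two tests coincide; if $p<limap(map(i))$ then $S[p]=S[p+1]$ and neither is a boundary. You instead stay in $\hat{S}$, establish that the two mapped fragments have the same length, identical interior metasymbols, and identical $\pi$-sequences (since $\pi$ ignores run lengths), and then compare local minima position by position. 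Both arguments isolate the same exceptional offset $limap(map(i))-i$ for the same reason: it is the only place where the test reaches outside the fragment on the left. Your route is slightly longer because you must set up the $\hat{S}$-correspondence explicitly, whereas the paper's $S$-level characterization makes the locality of the test immediate; conversely, your version makes the role of the metasymbol structure more transparent.
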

\begin{proof}
By definition, a block boundary is a position $q$ such that 
$q=limap(\hat{q})$ for a local minimum $\hat{S}[\hat{q}]$ in $\hat{S}$.
Hence, a position $q$, with $1 < q < n$, is a block boundary if and only if $\pi(S[q])< \pi(S[q+1])$
and $\pi(S[q]) < \pi(S[r])$, where $r=fimap(map(q))-1$ is the rightmost position to left of $q$ with $S[r]\ne S[q]$.

Consider a position $p$, with $i < p < j$, and the corresponding position $p'=p-i+i'$.
If $p > limap(map(i))$, then the positions $r=fimap(map(p))-1$ and $r'=fimap(map(p'))-1$
satisfy $r'-i'=r-i\ge 0$. Hence, $p$ is a block boundary if and only if $p'$ is one.
On the other hand, if $p < limap(map(i))$, then neither $p$ nor $p'$ is a block boundary
because $S[p]=S[p+1]$ and $S[p']=S[p'+1]$. 

Consequently, only the position $p = limap(map(i))$ is a block boundary not necessarily if and only if $p'$ is one.
That is, $B(i,j)\sm  \{limap(map(i))-i\}=B(i',j')\sm  \{limap(map(i))-i\}$.
Moreover, since $limap(map(i))-i$ may only be the leftmost element of $B(i,j)$,
this yields $B(i,j)\sm B(i',j')\sub L(i,j,1)$, and therefore the parsing is locally consistent with $c_p=1$.
\end{proof}

 We conclude this section by defining
\emph{block extensions} and proving that they are sufficiently 
long to ensure that the block is preserved within the occurrences of its extension.
This property will be use several times in subsequent sections.

\begin{definition} \label{def:extended-block}
Let $S[i\dd j]$, with $1<i<j<n$, be a block in $S$. The \emph{extension} of the block $S[i\dd j]$
is defined as $S[i^e\dd j^e]$, where
$i^e=fimap(map(i-1))-1$ and $j^e=j+1$. 
\end{definition}

Note that the first and last blocks cannot be extended. 
For the remaining blocks $S[i\dd j]$, the definition is sound because $map(i-1)>1$ and $j<n$ since $map(i-1)$ and $map(j)$ are local minima of $S'$. Further, note that the block extension spans only
the last symbol of the metasymbol $\hat{S}[map(i^e)]$ and the first of $\hat{S}[map(j^e)]$.

\begin{lemma} \label{lem:extend}
Let $S[i^e\dd j^e]$ be the extension of a block $S[i\dd j]$.
If $S[r'\dd s']$ matches $S[i^e\dd j^e]$, then
$S[r'\dd s']$ contains the same block $S[r\dd s]=S[i\dd j]$, whose extension is
precisely $S[r^e\dd s^e]=S[r'\dd s']$. Furthermore, $r-r^e=i-i^e$ and $s^e-s = j^e-j$.
\end{lemma}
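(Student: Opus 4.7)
The plan is to invoke Lemma~\ref{lem:alt} on the matching fragments $S[i^e\dd j^e]$ and $S[r'\dd s']$ and then read off the block structure inside the match. First I would pin down $B(i^e,j^e)$ explicitly. The candidate positions $p$ with $i^e\le p<j^e$ are $i^e,i^e+1,\ldots,j$. Among these, positions $i^e+1,\ldots,i-2$ lie strictly inside the run $S[i^e+1\dd i-1]$ that contains $i-1$ (this run starts at $fimap(map(i-1))=i^e+1$) and hence are not block boundaries; positions $i,\ldots,j-1$ lie strictly inside the block $S[i\dd j]$ and are not block boundaries either; while $i-1$ and $j$ are block boundaries, since they end the block preceding $S[i\dd j]$ and the block $S[i\dd j]$ itself. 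Because $S[i^e]\neq S[i^e+1]$ by the definition of $i^e$, the run containing $i^e$ ends there, so $limap(map(i^e))=i^e$, meaning the offset $0$ is exactly the one that Lemma~\ref{lem:alt} permits us to drop. This yields $B(i^e,j^e)\sm\{0\}=\{i-1-i^e,\,j-i^e\}$, and by the lemma the same equality holds for $B(r',s')\sm\{0\}$.

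I would then define $r=r'+(i-i^e)$ and $s=r'+(j-i^e)$. The previous step tells us that $r-1$ and $s$ are block boundaries of $S$, and that no position strictly between them is a block boundary (any such position would contribute an offset in the open interval $(i-1-i^e,\,j-i^e)$, which is not in $B(r',s')$). Therefore $S[r\dd s]$ is a block of $S$, and, being a fragment of the matching copy $S[r'\dd s']=S[i^e\dd j^e]$, it is identical to $S[i\dd j]$.

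It remains to verify that the extension of $S[r\dd s]$ coincides with $S[r'\dd s']$. The right endpoint is immediate: $s^e=s+1=r'+(j+1-i^e)=r'+(j^e-i^e)=s'$. For the left endpoint $r^e=fimap(map(r-1))-1$, one must identify the run containing $r-1$. On one hand, the matching gives $S[r'+1\dd r-1]=S[i^e+1\dd i-1]$, a constant run, so the run containing $r-1$ starts no later than $r'+1$; on the other hand, $S[r']=S[i^e]\neq S[i^e+1]=S[r'+1]$, so it cannot extend back to $r'$. Since $r-1$ is a block boundary and block boundaries are necessarily ends of runs, the run also ends at $r-1$. Thus $fimap(map(r-1))=r'+1$ and $r^e=r'$. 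The two identities $r-r^e=i-i^e$ and $s^e-s=j^e-j$ follow by direct arithmetic.

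The main obstacle I foresee is the careful bookkeeping at the left end: one has to pin down the run containing $r-1$ both from above and from below. The right end of this run is controlled by $r-1$ itself being a block boundary, while its left end is controlled by the symbol $S[r']$, which sits \emph{outside} the new block $S[r\dd s]$ yet \emph{inside} the matching fragment $S[r'\dd s']$; this is precisely what allows us to import the inequality $S[r']\neq S[r'+1]$ from the corresponding inequality $S[i^e]\neq S[i^e+1]$. Without observing that the extended fragment reaches back one symbol beyond $i-1$, the argument would not localize the run in $S$ on both sides.
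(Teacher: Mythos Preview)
Your proof is correct and follows essentially the same approach as the paper: apply Lemma~\ref{lem:alt}, identify $B(i^e,j^e)\sm\{0\}=\{i-1-i^e,\,j-i^e\}$, read off the block $S[r\dd s]$, and then verify its extension equals $S[r'\dd s']$. You are more explicit than the paper in two places---the paper obtains $B(i^e,j^e)=\{i-1-i^e,\,j-i^e\}$ in one line by noting $map(i^e)=map(i-1)-1$ and $map(j^e)=map(j)+1$ and citing Observation~\ref{obs:length2}, and it dispatches $r^e=r'$ with the single remark that it ``follows from the fact that $S[r'\dd s']$ and $S[i^e\dd j^e]$ match''---whereas you unpack both steps carefully, in particular the localization of the run containing $r-1$.
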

\begin{proof}
Observe that $limap(map(i^e)) = i^e$, so \Cref{lem:alt} yields $B(r',s')\sm \{0\} = B(i^e,j^e)\sm \{0\}$.
Moreover, $map(i^e)=map(i-1)-1$ and $map(j^e)=map(j)+1$,
so $B(i^e,j^e) = \{i-1-i^e, j-i^e\}$ due to \Cref{obs:length2}.
Hence, $ B(r',s') \sm \{ 0 \} = \{i-1-i^e,j-i^e\}$,
and therefore $S[r\dd s]$ is a block, where $r = i-i^e+r'$ and $s = j-i^e+r'$. \Cref{fig:extend} gives an example.

To complete the proof, notice that $s^e=s+1=s'$ and $r^e=limap(map(r-1))-1=r'$ follows from the fact that $S[r'\dd s']$ and $S[i^e\dd j^e]$ match.
\end{proof}

\begin{figure}[t]
\begin{center}
\includegraphics[width=0.4\textwidth]{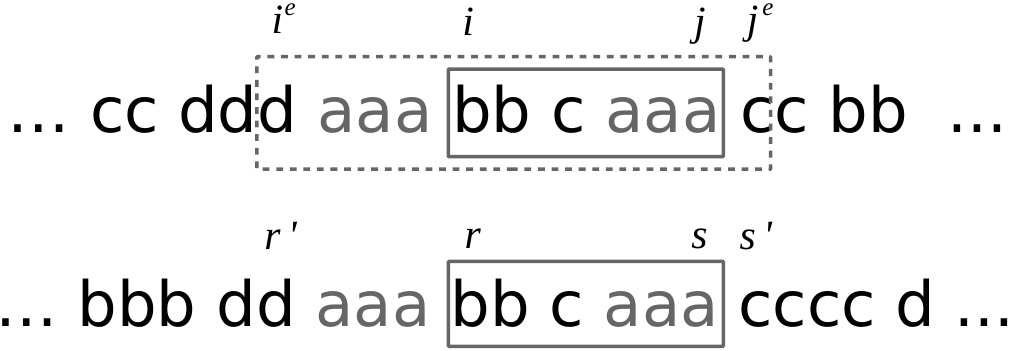}
\end{center}
\caption{Illustration of Lemma~\ref{lem:extend}. Local minima are
shown in gray. Recall $r'=r^e$ and $s'=s^e$.}
\label{fig:extend}
\end{figure}


\section{Grammars with Locality Properties}
\label{sec:locality}

Consider a context-free grammar (CFG) that generates a string $S$ and only $S$ 
\cite{KY00}. Each nonterminal must be the left-hand side in exactly one production, 
and the {\em size} $g$ of the grammar is the sum of the right-hand sides of the productions.
It is NP-complete to compute the smallest grammar for a string $S$
\cite{DBLP:journals/tcs/Rytter03,DBLP:journals/tit/CharikarLLPPSS05}, but it is possible to build grammars of size 
$g = O(z\log(|S|/z))$ if the Lempel--Ziv parsing of $S$ consists of $z$ phrases \cite[Lemma~8]{DBLP:conf/esa/Gawrychowski11}.\footnote{There are older
constructions \cite{DBLP:journals/tcs/Rytter03,DBLP:journals/tit/CharikarLLPPSS05}, but they refer to a restricted 
Lempel--Ziv variant where sources and phrases cannot overlap.}

If we allow, in addition, rules of the form $A \rightarrow A_1^s$, where $s\ge 2$, taken to be
of size 2 for technical convenience, the result is a {\em run-length context-free 
grammar (RLCFG)} \cite{DBLP:conf/mfcs/NishimotoIIBT16}. These grammars encompass CFGs and are 
intrinsically more powerful; for example, the 
smallest CFG for the string family $S=a^n$ is of size $\Theta(\log n)$ whereas already an RLCFG of size $O(1)$ can 
generate it. 

The {\em parse tree} of a CFG has internal nodes labeled with nonterminals
and leaves labeled with terminals. The root is the initial symbol and the 
concatenation of the leaves yields $S$: the $i$th leaf is labeled $S[i]$. 
If $A \rightarrow A_1\cdots A_s$, then any
node labeled $A$ has $s$ children, labeled $A_1, \ldots, A_s$. 
In the parse tree of a RLCFG, rules $A \rightarrow A_1^s$ are represented as a node labeled $A$ with $s$ children nodes labeled $A_1$. The following definition describes the substring of $S$ generated by each node.

\begin{definition}
  If the leaves descending from a parse tree node $v$ are the $i$th to the $j$th leaves, we say that $v$ {\em generates} $S[i\dd j]$ and that $v$ is {\em
  projected} to the interval $proj(v) = [i\dd j]$.
  \end{definition}
  
The subtrees of equally labeled nodes are identical and generate the same strings, so we speak of the strings generated by the grammar symbols. 
We call $\exp(A)$ the {\em expansion} of nonterminal $A$, that is, the string
it generates (or the concatenation of the leaves under any node labeled $A$ in the
parse tree), and $|A| = |\exp(A)|$. For terminals $a$, we assume $\exp(a)=a$.

A grammar is said to be {\em balanced} if
the parse tree is of height $O(\log n)$. A stricter concept is the following one.

\begin{definition} \label{def:locally-balanced}
A grammar is {\em locally balanced} if there exists a constant $b$ such
that, for any nonterminal $A$, the height of any parse tree node labeled $A$ is at most
$b \cdot \log |A|$.
\end{definition}

\subsection{From parsings to balanced grammars}
\label{sec:ec}

We build an RLCFG on a 
text $T[1\dd n]$ using our parsing of Section~\ref{sec:lcp}. 
In the first pass, we collect the distinct runs
$a^\ell$ with $\ell\ge 2$ and create run-length nonterminals of the form $A \rightarrow 
a^\ell$ to replace the corresponding runs in $T$. The resulting sequence is 
analogous to $\hat{T}$, where a nonterminal $A \to a^\ell$ stands for the metasymbol
\framebox{$a^\ell$}, and the terminal $a$ stands for the metasymbol $\run{a^1}$.

Next, we choose a permutation $\pi$ and perform a
pass on the new text $\hat{T}$, defining the blocks based on local minima according to \Cref{def:parse}. Each 
distinct block $A_1\cdots A_k$ is replaced by a distinct nonterminal $A$ with the rule $A \rightarrow A_1 \cdots A_k$ (each $A_i$ can be a symbol of $\Sigma$ 
or a run-length nonterminal created in the first pass). 
The blocks are then replaced by those created nonterminals $A$, which results in a
string $T'$.
The string $T'$ is of length $n' \le \lfloor n/2 \rfloor$, by 
Observation~\ref{obs:length2}. Note that the first and last symbols of
$T'$ expand to blocks that contain \# and \$, respectively, and thus they are 
unique too. We can
then regard $T'$ as a text, by having its first nonterminal, $T'[1]$,
play the role of \#, and the last, $T'[n']$, play the role of \$. 

The process is then repeated again on $T'$, and iterated for $h\le \lfloor\log n\rfloor$ rounds, until a single nonterminal is obtained. This is 
the initial symbol of the grammar. We denote by $T_r[1\dd n_r]$ the text created in
round $r$, so $T_0=T$ and $T_1=T'$. We also denote by $\hat{T}_r[1\dd \hat{n}_r]$ the intermediate
text obtained by collapsing runs in $T_r$.
Figure~\ref{fig:grammar} exemplifies the grammars we build and the corresponding parse tree.

\begin{figure}[ht]
  \begin{center}
    \begin{minipage}{.45\textwidth}
       \centering{\bf Permutations}
        \end{minipage}%
        \begin{minipage}{.55\textwidth}
            \centering{\bf Grammar}
             \end{minipage}
    \begin{minipage}{.15\textwidth}
    \footnotesize
        \begin{align*}
            \pi(\mathsf{\$_3}) &=  1\\
            \pi(\mathsf{\#_3}) &=  2\\
            & \\
            \pi(\mathsf{\$_2}) &=  1\\
            \pi(\mathsf{\#_2}) &=  2\\
            \pi(\mathsf{E}) &=  3\\
            \pi(\mathsf{D}) &=  4
        \end{align*}
    \end{minipage}%
    \begin{minipage}{.15\textwidth}
        \footnotesize
            \begin{align*}
                \pi(\mathsf{\$_1}) &= 1\\
                \pi(\mathsf{\#_1}) &= 2\\
                \pi(\mathsf{A}) &= 3\\
                \pi(\mathsf{C}) &= 4\\
                \pi(\mathsf{B}) &= 5
            \end{align*}
        \end{minipage}%
        \begin{minipage}{.15\textwidth}
            \footnotesize
                \begin{align*}
                    \pi(\mathsf{\$}) &= 1\\
                    \pi(\mathsf{\#}) &= 2\\
                    \pi(\mathsf{d}) &= 3\\
                    \pi(\mathsf{c}) &= 4\\
                    \pi(\mathsf{b}) &= 5\\
                    \pi(\mathsf{a}) &= 6
                \end{align*}
            \end{minipage}%
    \begin{minipage}{.18\textwidth}
        \footnotesize
        \begin{align*}
            \mathsf{S} &\to \mathsf{\#_3\$_3}\\
            & \\
            \mathsf{\$_3} &\to \mathsf{DD\$_2}\\
            \mathsf{\#_3} &\to \mathsf{\#_2 DDE}
        \end{align*}
    \end{minipage}%
    \begin{minipage}{.18\textwidth}
        \footnotesize
        \begin{align*}
            \mathsf{\$_2} &\to \mathsf{B\$_1}\\
            \mathsf{\#_2} &\to \mathsf{\#_1 B AA}\\
            \mathsf{E} &\to \mathsf{BBAA}\\
            \mathsf{D} &\to \mathsf{BC}
        \end{align*}
    \end{minipage}%
    \begin{minipage}{.18\textwidth}
        \footnotesize
        \begin{align*}
            \mathsf{\$_1} &\to \mathsf{c\$}\\
            \mathsf{\#_1} &\to \mathsf{\#ad}\\
            \mathsf{A} &\to \mathsf{aac}\\
            \mathsf{C} &\to \mathsf{aabc}\\
            \mathsf{B} &\to \mathsf{bd}\\
        \end{align*}
    \end{minipage}

{\bf Parse tree}

    \begin{tikzpicture}[xscale=0.56, yscale=0.9]
    \foreach \x[count=\i] in {\#,a,d,b,d,a,a,c,a,a,c,b,d,a,a,b,c,b,d,a,a,b,c,b,d,b,d,a,a,c,a,a,c,b,d,a,a,b,c,b,d,a,a,b,c,b,d,c,\$}{
        \draw (\i/2, 0) node[anchor=mid] {$\mathsf{\x}$};
    }
    \foreach \i in {0,1,...,49}{
        \draw[densely dotted] (\i/2+.25, -.25) -- (\i/2+.25, .25);
    }
    \foreach \i in {0,1,2,3,4,5,7,8,10,11,12,13,15,16,17,18,19,21,22,23,24,25,26,27,29,30,32,33,34,35,37,38,39,40,41,43,44,45,46,47,48,49}{
        \draw[thick] (\i/2+.25, -.25) -- (\i/2+.25, .25);
    }
    \foreach \x/\i in {\#_1/2, B/4.5, A/7, A/10, B/12.5, C/15.5, B/18.5, C/21.5, B/24.5, B/26.5, A/29, A/32, B/34.5, C/37.5, B/40.5, C/43.5, B/46.5, \$_1/48.5}  {
        \draw (\i/2, .5) node[anchor=mid] {$\mathsf{\x}$};
    }
    \foreach \i in {0,3,5,8,11,13,17,19,23,25,27,30,33,35,39,41,45,47,49}{
        \draw[densely dotted] (\i/2+.25, .25) -- (\i/2+.25, .75);
    }
    \foreach \i in {0,3,5,11,13,17,19,23,27,33,35,39,41,45,47,49}{
        \draw[thick] (\i/2+.25, .25) -- (\i/2+.25, .75);
    }
    \foreach \x/\i in {\#_2/6, D/14.5, D/20.5, E/28.5, D/36.5, D/42.5, \$_2/47.5}  {
        \draw (\i/2, 1) node[anchor=mid] {$\mathsf{\x}$};
    }
    \foreach \i in {0,11,17,23,33,39,45,49}{
        \draw[densely dotted] (\i/2+.25, .75) -- (\i/2+.25, 1.25);
    }
    \foreach \i in {0,11,23,33,45,49}{
        \draw[thick] (\i/2+.25, .75) -- (\i/2+.25, 1.25);
    }
    \foreach \x/\i in {\#_3/17, \$_3/41.5}  {
        \draw (\i/2, 1.5) node[anchor=mid] {$\mathsf{\x}$};
    }
    \foreach \i in {0,33,49}{
        \draw[thick] (\i/2+.25, 1.25) -- (\i/2+.25, 1.75);
    }
    \foreach \x/\i in {S/25}  {
        \draw (\i/2, 2) node[anchor=mid] {$\mathsf{\x}$};
    }
    \foreach \i in {0,49}{
        \draw[thick] (\i/2+.25, 1.75) -- (\i/2+.25, 2.25);
    }
    \foreach \y in {-.25,.25,...,2.25}{
        \draw[thick] (0.25, \y) -- (24.75, \y);
    }

    \foreach \i in {1,2,...,49}{
        \draw(\i/2, -.25) node[below]{\tiny $\i$};
    }
\end{tikzpicture}
\end{center}
\caption{An example of the construction of our grammar. 
The top-left part shows the permutations $\pi$ 
assigned in each level, and the top-right part gives the complete grammar built 
(for simplicity we omit run-length nonterminals). The parse tree, shown on the bottom, also omits run-length nonterminals. The texts $T_r$ correspond to the subsequent levels of the parse tree (starting from the bottom).
Level-$r$ block boundaries that are not run boundaries are depicted using dotted lines.
For example, $T_2 = \mathsf{\#_2DDEDD\$_2}$, $n_2=7$, and the level-2 block boundaries are $11,17,23,33,39,45$.
On the other hand, $\hat{n}_2=5$ and the level-2 run boundaries are
$11,23,33,45$. The corresponding parsings $\Parsing_2$ and $\hat{\Parsing}_2$ decompose $T$ as
$\mathsf{\#adbdaacaac| bdaabc | bdaabc | bdbdaacaac |  bdaabc | bdaabc | bdc\$}$ and 
$\mathsf{\#adbdaacaac| bdaabcbdaabc | bdbdaacaac |  bdaabcbdaabc | bdc\$}$, respectively.
}
\label{fig:grammar}
\end{figure}

The height of the grammar is at most 
$2h\le 2\lfloor\log n\rfloor$, because we create run-length rules and then 
block-rules in each round. This grammar is then balanced because, by
Observation~\ref{obs:length2}, $n_r \le n/2^r$. Moreover,
the grammar is locally balanced.

\begin{lemma} \label{lem:locally-balanced}
The grammar we build from our parsing is locally balanced with $b=2$.
\end{lemma}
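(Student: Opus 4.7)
My plan is to prove by simultaneous induction on the round number $r\ge 0$ that every nonterminal $A$ introduced in round $r$ satisfies $|A|\ge 2^r$ and that every parse-tree node labeled by $A$ has height at most $2r$. Combining the two bounds yields $\mathrm{height}(A)\le 2r\le 2\log|A|$, which is exactly the local-balance condition with $b=2$.

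The base case $r=0$ covers the terminals of $\Sigma$: they have $|a|=1$ and appear as leaves of height $0$. For the inductive step I handle separately the two kinds of nonterminals created in round $r$. A round-$r$ run-length nonterminal has a rule $A\to X^s$ with $s\ge 2$ and $X$ a symbol of $T_{r-1}$; the inductive hypothesis gives $|X|\ge 2^{r-1}$, so $|A|=s\,|X|\ge 2^r$, while the parse-tree node has height $1+\mathrm{height}(X)\le 1+2(r-1)=2r-1$. A round-$r$ block nonterminal has a rule $A\to X_1\cdots X_k$ whose $k$ children correspond to the metasymbols of the associated block in $\hat{T}_{r-1}$; each child $X_i$ is either a symbol of $T_{r-1}$ or a round-$r$ run-length nonterminal. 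Observation~\ref{obs:length2} forces $k\ge 2$, and each $|X_i|\ge 2^{r-1}$ by the inductive hypothesis or by the previous case, so $|A|\ge 2\cdot 2^{r-1}=2^r$. Since the children have height at most $\max(2(r-1),2r-1)=2r-1$, the height of the $A$-node is at most $2r$.

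I expect the argument to be mostly bookkeeping, because Observation~\ref{obs:length2} already does the combinatorial work for the size bound. The one place that rewards a little care is the interleaving of the two kinds of nonterminals inside a single round: a round-$r$ block nonterminal can have a round-$r$ run-length nonterminal as a child, which adds one extra parse-tree level on top of the $2(r-1)$ levels inherited from earlier rounds. That extra level is exactly why the final bound is $2\log|A|$ rather than $\log|A|$, fixing the constant $b=2$ instead of $1$.
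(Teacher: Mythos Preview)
Your proof is correct and follows the same approach as the paper: both establish that a nonterminal formed in round $r$ has expansion length at least $2^r$ (via Observation~\ref{obs:length2}) and parse-tree height at most $2r$, yielding the bound $2r\le 2\log|A|$. The paper merely sketches this in three sentences, whereas you spell out the induction carefully and in particular make explicit how run-length and block nonterminals interleave within a round to account for the factor $2$ in the height bound.
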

\begin{proof}
Because of Observation~\ref{obs:length2}, any subtree rooted at a nonterminal 
$A$ in the parse tree (at least) doubles the number of nodes per round towards
the leaves. If $A$ is formed in round $r$, then the subtree has height at
most $2r$, and the expansion satisfies $|A| \ge 2^r$. The height of the subtree
rooted at $A$ is thus at most $2r \le 2 \log |A|$.
\end{proof}

\subsection{Local consistency properties}
We now formalize the concept of \emph{local consistency} for our grammars.
For each $r\in[0\dd h]$, the subsequent characters of $T_r$
naturally correspond to nodes of the parse tree of $T$, and the fragments
$T[i\dd j]$ generated by these nodes form a decomposition of $T$.
We denote this parsing of $T$ by $\Parsing_r$. In other words,
$T[i\dd j]$ is a block of $\Parsing_r$ if and only if $[i\dd j]=proj(v)$ for some node $v$
labeled by a symbol in $T_r$.
We refer to the blocks and block boundaries in this parsing as \emph{level-$r$ blocks}
and \emph{level-$r$ block boundaries}. 
Analogously, we define a parsing $\hat{\Parsing}_r$ with blocks corresponding to subsequent symbols of $\hat{T}_r$,
and we refer to the underlying blocks and block boundaries as \emph{level-$r$ runs} and \emph{level-$r$ run boundaries};
see \Cref{fig:grammar}.

Note that every level-$r$ run boundary is also a level-$r$ block boundary,
and every level-$(r+1)$ block boundary is also a level-$r$ run boundary.
Moreover, by \Cref{obs:length2} at most one out of every two subsequent level-$r$ run boundaries
can be a level-$(r+1)$ block boundary.

\begin{definition}
  For every non-empty fragment $T[i\dd j]$ of $T$,
the sets defined according to \Cref{def:bb} for the parsing $\Parsing_r$ are denoted $B_r(i,j)$, $L_r(i,j,c)$, and $R_r(i,j,c)$. Analogously, we denote by $\hat{B}_r(i,j)$, $\hat{L}_r(i,j,c)$, and $\hat{R}_r(i,j,c)$
the sets defined for the parsing $\hat{\Parsing}_r$.
\end{definition}

These notions let us reformulate \Cref{lem:alt} so that it is directly applicable at every level $r$.
\begin{lemma}\label{lem:altr}
  If matching fragments $T[i\dd j]$ and $T[i'\dd j']$ both consist of full level-$r$ blocks,
  then the corresponding fragments of $T_r$ also match, so
  $B_r(i,j)=B_r(i',j')$ and $\hat{B}_r(i,j)=\hat{B}_r(i',j')$. Moreover, $B_{r+1}(i,j)\sm \{\min \hat{B}_r(i,j)\} = B_{r+1}(i',j')\sm \{\min \hat{B}_r(i,j)\}$  if $\hat{B}_r(i,j)\ne \emptyset$,
  and $B_{r+1}(i,j)=B_{r+1}(i',j')=\emptyset$ otherwise.
\end{lemma}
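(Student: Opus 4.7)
The plan is to deduce the three conclusions from \Cref{lem:alt} applied at the right levels, with an induction on $r$ establishing the matching-in-$T_r$ assertion.

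I would prove by induction on $r$ that if $T[i\dd j]=T[i'\dd j']$ consist of full level-$r$ blocks, then $T_r[a\dd b]=T_r[a'\dd b']$, where $[a\dd b]$ and $[a'\dd b']$ index the $T_r$-symbols whose expansions form the two fragments. The base case $r=0$ is immediate since $T_0=T$. For the inductive step, each full level-$r$ block decomposes into full level-$(r-1)$ blocks, so the inductive hypothesis supplies matching $T_{r-1}[c\dd d]=T_{r-1}[c'\dd d']$. Applying \Cref{lem:alt} to these fragments of $T_{r-1}$ yields
\[
B(c,d)\sm\{limap(map(c))-c\}=B(c',d')\sm\{limap(map(c))-c\},
\]
where $B$, $map$, and $limap$ refer to the parsing of $T_{r-1}$ into $T_r$. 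I would then argue that the removed position is in fact not a block boundary in either fragment, so the equality holds without any removal. Indeed, because $T[i\dd j]$ consists of full level-$r$ blocks, $c-1$ is a block boundary of the parsing of $T_{r-1}$, meaning $\hat{T}_{r-1}[map(c-1)]$ is a local minimum; hence $\pi(\hat{T}_{r-1}[map(c-1)])<\pi(\hat{T}_{r-1}[map(c)])$, which prevents $\hat{T}_{r-1}[map(c)]$ from being a local minimum, so $limap(map(c))$ is not a block boundary. The symmetric argument works for $c'$. Consequently $B(c,d)=B(c',d')$, so the sequences of blocks in the parsings of $T_{r-1}[c\dd d]$ and $T_{r-1}[c'\dd d']$ align. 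Because the grammar assigns a unique $T_r$-symbol to each distinct block, these matching block sequences force $T_r[a\dd b]=T_r[a'\dd b']$. The edge cases $c=1$ or $d=n_{r-1}$ are trivial since the $\#$ and $\$$ sentinels pin $i$ and $j$ to the matching occurrence.

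Given the matching in $T_r$, the equalities $B_r(i,j)=B_r(i',j')$ and $\hat{B}_r(i,j)=\hat{B}_r(i',j')$ follow immediately, as the level-$r$ block and run boundaries within the fragment are recorded entirely by the matching $T_r$-sequence, with their offsets in $T$ given by cumulative expansion lengths of prefixes of $T_r[a\dd b]$, identical for both occurrences. For the final statement I would apply \Cref{lem:alt} once more, now to the matching fragments $T_r[a\dd b]=T_r[a'\dd b']$. Block boundaries of the parsing of $T_r$ correspond to level-$(r+1)$ block boundaries in $T$, and the excluded position $limap(map(a))-a$ in $T_r$ translates, via cumulating expansion lengths, to the offset $|\exp(T_r[a\dd limap(map(a))])|-1$ in $T$, which is exactly $\min\hat{B}_r(i,j)$---the end of the first level-$r$ run inside $T[i\dd j]$. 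This yields the desired equality of $B_{r+1}$-sets modulo $\min\hat{B}_r(i,j)$. If $\hat{B}_r(i,j)=\emptyset$, the fragment lies within a single level-$r$ run, and since every level-$(r+1)$ block boundary must also be a level-$r$ run boundary, no such boundary can fall strictly inside, so $B_{r+1}(i,j)=B_{r+1}(i',j')=\emptyset$.

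The main obstacle is the inductive step, and in particular the observation that the exception produced by \Cref{lem:alt} is vacuous when the fragment starts immediately after a block boundary: a local minimum in $\hat{T}_{r-1}$ at position $map(c-1)$ forces the $\pi$-value at $map(c)$ to be larger, ruling out another local minimum there. Once this is recognized, the rest of the proof is bookkeeping that translates positions in $T_{r-1}$ and $T_r$ into offsets inside $T$ through expansion lengths.
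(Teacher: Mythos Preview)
Your proposal is correct and follows essentially the same route as the paper: induction on $r$, invoking \Cref{lem:alt} at each level and arguing that the excepted position is not actually a block boundary because the fragment begins right after one. The only cosmetic differences are that the paper phrases the ``exception is vacuous'' step via \Cref{obs:length2} (two consecutive level-$(r-1)$ run boundaries cannot both be level-$r$ block boundaries) rather than unfolding the local-minimum definition, and it organizes the induction jointly over all three claims (using the ``moreover'' conclusion at level $r-1$ as input to the first two claims at level $r$) instead of isolating the matching-in-$T_r$ statement; the underlying argument is the same.
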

\begin{proof}
We proceed by induction on $r$. The first two claims hold trivially for $r=0$: the fragments $T[i\dd j]$ and $T[i'\dd j']$ of $T_0=T$ clearly match, and $B_0(i,j)=[0\dd j-i-1]=B_0(i',j')$.
For $r>0$, on the other hand, $T[i\dd j]$ and $T[i'\dd j']$ consist of full level-$(r-1)$ blocks,
so the inductive assumption yields that the corresponding fragments of $T_{r-1}$ also match and
that $B_r(i,j)=B_{r}(i',j')=\emptyset$ or $B_{r}(i,j)\sm \{\min \hat{B}_{r-1}(i,j)\} = B_{r}(i',j')\sm \{\min \hat{B}_{r-1}(i,j)\}$. In the latter case, we observe that $i-1$ and $i+\min \hat{B}_{r-1}(i,j)$ are subsequent level-$(r-1)$ run boundaries while $i-1$ is a level-$r$ block boundary, or $i=1$ and $i+\min \hat{B}_{r-1}(i,j)$ is the leftmost level-$(r-1)$ run boundary. Either way, $i+\min \hat{B}_{r-1}(i,j)$ cannot be a level-$r$ block boundary due to \Cref{obs:length2}, so $B_{r}(i,j)\sm \{\min \hat{B}_{r-1}(i,j)\}=B_r(i,j)$. A symmetric argument proves that $B_{r}(i',j')\sm \{\min \hat{B}_{r-1}(i,j)\}=B_{r}(i',j')$,
which lets conclude that $B_r(i,j)=B_{r}(i',j')$. Hence, the matching fragments of $T_{r-1}$ corresponding to $T[i\dd j]$ and $T[i'\dd j']$
are parsed into the same blocks so the corresponding fragments of $T_r$ also match.

To prove the other two claims for arbitrary $r\ge 0$, notice that the fragments of $T_r$ corresponding to $T[i\dd j]$ and $T[i'\dd j']$  are occurrences of the same string, denoted $P_r$. Hence, $\hat{B}_r(i,j)$ and $\hat{B}_r(i',j')$ are equal as they both correspond to the run boundaries in $P_r$.
If $P_r$ consists of a single run (i.e., if $\hat{B}_r(i,j)=\emptyset$), then clearly $B_{r+1}(i,j)=B_{r+1}(i',j')= \emptyset$.
Otherwise, \Cref{lem:alt} implies  $B_{r+1}(i,j)\sm \{\min \hat{B}_r(i,j)\} = B_{r+1}(i',j')\sm \{\min \hat{B}_r(i,j)\}$.
\end{proof}

Nevertheless, we define \emph{local consistency} of a grammar as a stronger property than the one expressed in \Cref{lem:altr}:
we require that $B_r(i,j)$ and $B_r(i',j')$ resemble each other even if the matching fragments $T[i\dd j]$ and $T[i'\dd j']$ do not consist of full blocks.

\begin{definition} \label{def:lcg}
The grammar we build is {\em locally consistent} if there is a
constant $c_{g}$ such that the parsings $\Parsing_r$ are all locally consistent
with constant $c_g$.
\end{definition}

In the rest of this section, we prove that our grammar is locally consistent
with constant $c_g = 3$. Our main tool is the following construction of sets $B_r(P)$ and $\hat{B}_r(P)$, consisting of the positions (relative to $P$) of \emph{context-insensitive} level-$r$ block and run boundaries
that are common to all occurrences of $P$ in $T$. Despite these sets are defined based on an occurrence of $P$ in $T$,
we show in \Cref{lem:lcg} that they do not depend on the choice of the occurrence.
\begin{definition}\label{def:bp}
  Let $P$ be a substring of $T$ and let $T[i\dd j]$ be its arbitrary occurrence in $T$.
  The sets $B_r(P)$ and $\hat{B}_r(P)$ for $r\ge 0$ are defined recursively, with $X+\delta= \{x+\delta : x\in X\}$.
  \begin{equation*}
    B_r(P)=\left\{\!\begin{aligned}
     &[0\dd |P|-2] && \text{if }r=0,\\
      &B_r(i+1+\min \hat{B}_{r-1}(P),i+\max B_{r-1}(P))+1+\min\hat{B}_{r-1}(P) && \text{if }\hat{B}_{r-1}(P)\ne \emptyset,\\
      &\emptyset && \text{if }\hat{B}_{r-1}(P)=\emptyset;
    \end{aligned}\right.
  \end{equation*}
  \begin{equation*}
     \hat{B}_r(P)=\left\{\begin{aligned}
       &\hat{B}_r(i+1+\min B_r(P), i+\max B_r(P))+1+\min B_r(P) &
       & \text{if }B_r(P)\ne \emptyset,\\
       &\emptyset && \text{if }B_r(P)=\emptyset.
     \end{aligned}\right.
   \end{equation*}
\end{definition}
Our index also relies on aset $M(P)$ designed as a superset of $B_r(i,j)\sm B_r(P)$ for every~$r$ and every occurrence $T[i\dd j]$ of $P$.
In other words, $M(P)$ contains, for each $r$, positions within $P$ that may be level-$r$ block boundaries in some
but not necessarily all occurrences of~$P$. 
\begin{definition}\label{def:m}
  For a substring $P$ of $T$, the set $M(P)$ is defined to contain $\min B_r(P)$ and $\max B_r(P)$ for every $r\ge 0$ with $B_r(P)\ne \emptyset$,
  and $\min \hat{B}_r(P)$ for every $r\ge 0$ with $\hat{B}_r(P)\ne \emptyset$. 
\end{definition}

\begin{example}
  Consider $P= \mathsf{dbdaacaacbdaabcbdaabcbd}$
  with occurrences $T[3\dd 25]$ and $T[25\dd 47]$ in text of \Cref{fig:grammar}.
  For $r=0$, we define $B_0(P)=[0\dd 21]$ and set $\hat{B}_0(P)=\{1,2,4,5,7,8,9,10,12,13,14,15,16,18,19,20\}=1+\hat{B}_0(4,24)=1+\hat{B}_0(26,46)$.
  For $r=1$, we set $B_1(P)=\{2,5,8,10,14,16,20\}=2+B_1(5,24)=2+B_1(27,46)$
  and $\hat{B}_1(P) = \{8,10,14,16\}=3+\hat{B}_1(6,23)=3+\hat{B}_1(28,45)$.
  For $r=2$, we set $B_2(P)=\{14\}=9+B_2(12,19)=9+B_2(34,41)$
  and $\hat{B}_2(P)=\emptyset=15+\hat{B}_2(18,17)=15+\hat{B}_2(40,39)$.
  For $r\ge 3$, we have $B_r(P)=\hat{B}_r(P)=\emptyset$.
  Consequently, $M(P)=\{0,1,2,8,14,20,21\}$.
  \end{example}

We now show $B_r(P)$ contains all the level-$r$ block boundaries in any occurrence of $P$ in $T$ except possibly the first 3 and the last one, but those missing boundaries belong to $M(P)$.

\begin{lemma}\label{lem:lcg}
  For every substring $P$ of $T$ and every $r\ge 0$,
  the sets $B_r(P)$ and $\hat{B}_r(P)$ do not depend on the choice of an occurrence $T[i\dd j]$ of $P$.
  Moreover,
  \begin{equation} \label{eq:lcg}
  B_r(P)\cup L_r(i,j,3)\cup R_r(i,j,1)=B_r(i,j)\sub B_r(P)\cup M(P).
  \end{equation}
\end{lemma}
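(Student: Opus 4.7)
The proof will proceed by induction on $r$. The base case $r=0$ is immediate: $B_0(P) = [0\dd|P|-2]$ clearly coincides with $B_0(i,j)$ for every occurrence, so the equation is trivial; unrolling the definition of $\hat{B}_0(P)$ shows it equals $\{k : 1\le k\le |P|-3,\ P[k+1]\ne P[k+2]\}$, which depends only on $P$.

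For the inductive step, I assume the two claims hold for all smaller indices. Let $k_1 = \min\hat{B}_{r-1}(P)$ and $k_2 = \max B_{r-1}(P)$, assuming the non-degenerate case $\hat{B}_{r-1}(P)\ne\emptyset$. By the inductive hypothesis, $k_1$ and $k_2$ are occurrence-independent, and the inclusions $\hat{B}_{r-1}(P)\sub\hat{B}_{r-1}(i,j)\sub B_{r-1}(i,j)$ (the first from unrolling the definition of $\hat{B}_{r-1}(P)$ and the second from the containment of level-$(r-1)$ run boundaries in level-$(r-1)$ block boundaries) together with $B_{r-1}(P)\sub B_{r-1}(i,j)$ show that the positions $I-1=i+k_1$ and $J=i+k_2$ are level-$(r-1)$ block boundaries in $T$. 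Thus $T[I\dd J]$ consists of full level-$(r-1)$ blocks, and for a second occurrence $T[i'\dd j']$, the matching fragment $T[I'\dd J']$ enjoys the same property. Applying \Cref{lem:altr} at level $r-1$ yields $\hat{B}_{r-1}(I,J)=\hat{B}_{r-1}(I',J')$ and
\[
B_r(I,J)\sm\{\min\hat{B}_{r-1}(I,J)\} = B_r(I',J')\sm\{\min\hat{B}_{r-1}(I,J)\}.
\]

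The main obstacle will be showing that the exceptional element $\min\hat{B}_{r-1}(I,J)$ has occurrence-independent membership in $B_r(I,J)$, which is what forces the definition of $B_r(P)$ to be well-posed. This membership reduces to whether the first metasymbol $\hat{T}_{r-1}[\hat{I}]$ of the fragment corresponding to $T[I\dd J]$ is a local minimum, which in turn depends on the $\pi$-values of $\hat{T}_{r-1}[\hat{I}-1]$, $\hat{T}_{r-1}[\hat{I}]$, and $\hat{T}_{r-1}[\hat{I}+1]$. Since $\pi(\run{a^\ell})=\pi(a)$, these values depend only on the underlying level-$(r-1)$ nonterminals, not on the run lengths. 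The right two nonterminals correspond to level-$(r-1)$ runs lying strictly inside $P$, so they are inherently determined by $P$. The delicate case is the left metasymbol, whose run may extend past position $i$; I plan to identify its underlying block using a copy guaranteed to lie inside the first level-$(r-1)$ run of $P$, relying on the fact that a run is a concatenation of identical copies of its block and that the grammar assigns the same nonterminal to identical substrings. An analogous argument, applied to a sub-fragment consisting of full level-$r$ blocks, will then give occurrence-independence of $\hat{B}_r(P)$.

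For the equation, the inclusion $B_r(P)\sub B_r(i,j)$ holds by construction, as every element corresponds to a level-$r$ block boundary inside $[I\dd J]$. Any element of $B_r(i,j)\sm B_r(P)$ must fall into $[0\dd k_1]$ or $(k_2\dd |P|-2]$; local consistency of the underlying parsings together with the asymmetric trimming in \Cref{def:bp} will bound the number of such left-end elements by three and right-end elements by one, justifying the identity $B_r(P)\cup L_r(i,j,3)\cup R_r(i,j,1)=B_r(i,j)$. Finally, each such boundary element coincides with $\min B_r(P)$, $\max B_r(P)$, or $\min\hat{B}_r(P)$ (according to which side of the trimming it lies and whether it is a block or a run boundary), and all of these quantities are explicitly included in $M(P)$ by \Cref{def:m}, yielding $B_r(i,j)\sub B_r(P)\cup M(P)$.
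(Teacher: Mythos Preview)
Your well-definedness argument takes an unnecessary detour. By applying \Cref{lem:altr} at level $r-1$ directly to the fragment $T[I\dd J]$ that appears in the definition of $B_r(P)$, you create a \emph{fresh} exceptional element $\min\hat B_{r-1}(I,J)$ that you must then neutralise via the $\pi$-value discussion. The paper instead applies \Cref{lem:altr} to the \emph{larger} fragment $G=T[i+\min B_{r-1}(P)+1\dd i+\max B_{r-1}(P)]$, which also consists of full level-$(r-1)$ blocks. The exceptional position of \emph{that} application is precisely $i+\min\hat B_{r-1}(P)=I-1$, the one position already trimmed away by \Cref{def:bp}; hence every level-$r$ boundary in $[I,J)$ is occurrence-independent for free, with no appeal to $\pi$-values. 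This is not a coincidence: the left endpoint $1+\min\hat B_{r-1}(P)$ in \Cref{def:bp} is chosen exactly so that one application of \Cref{lem:altr} to $G$ settles both $\hat B_{r-1}(P)$ and $B_r(P)$ simultaneously.

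Your last paragraph, for the inclusion $B_r(i,j)\subseteq B_r(P)\cup M(P)$, has a genuine gap. An element $q\in B_r(i,j)\setminus B_r(P)$ cannot ``coincide with $\min B_r(P)$, $\max B_r(P)$, or $\min\hat B_r(P)$'': when defined, those three positions lie \emph{in} $B_r(P)$. What actually happens is a dichotomy that you never state: either $q\in\{\min B_{r-1}(P),\min\hat B_{r-1}(P),\max B_{r-1}(P)\}$, which are in $M(P)$ by \Cref{def:m}, or $q\in B_{r-1}(i,j)\setminus B_{r-1}(P)$, which lies in $M(P)$ by the \emph{inductive hypothesis} at level $r-1$. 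Without invoking this second branch the argument does not close. The companion ``three left, one right'' bound needs the same case split: one first shows the stray positions sit among at most five leftmost and two rightmost level-$(r-1)$ \emph{run} boundaries of $T[i\dd j]$, and then halves via \Cref{obs:length2}. (Incidentally, your range $(k_2,|P|-2]$ should be $[k_2,|P|-2]$, since $k_2=\max B_{r-1}(P)$ can itself be a level-$r$ boundary and is never in $B_r(P)$.)
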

\begin{proof}
We proceed by induction on $r$, proving the independence of $\hat{B}_r(P)$ only at step $r+1$.
In the base case, $B_0(P)=[0\dd |P|-2]$ 
does not depend on the choice of the occurrence, and
Eq.~\eqref{eq:lcg} is satisfied because $B_0(P)=B_0(i,j)$.

For the inductive step, we assume the claims hold for $B_r(P)$.
If $B_r(P)=\emptyset$, then $\hat{B}_r(P)=B_{r+1}(P)=\emptyset$ do not depend on the occurrence of $P$.
The inductive assumption yields $B_{r+1}(i,j)\sub B_r(i,j)\sub M(P) = B_{r+1}(P) \cup M(P)$ and $|B_{r+1}(i,j)| \le |B_r(i,j)|=|L_r(i,j,3)\cup R_r(i,j,1)|\le 4$,
so $L_{r+1}(i,j,3)\cup R_{r+1}(i,j,1)=B_{r+1}(i,j)$ and Eq.~\eqref{eq:lcg} is satisfied.

We henceforth assume that $B_r(P)\ne \emptyset$.
Since $B_r(P)\sub B_r(i,j)$, both $i+\min B_r(P)$ and $i+\max B_r(P)$ are level-$r$ block boundaries,
and therefore $T[i+\min B_r(P)+1\dd i+\max B_r(P)]$ consists of full level-$r$ blocks.
We conclude from \Cref{lem:altr} that $\hat{B}_r(P)$, as defined in \Cref{def:bp}, does not depend on the occurrence of $P$.
Moreover, the only position between $i+\min B_r(P)$ and $i+\max B_r(P)$ that may or may not be a level-$(r+1)$ block boundary depending on the context of $T[i\dd j]$
is $i+\min \hat{B}_r(P)$ provided that $\hat{B}_r(P) \ne \emptyset$. In particular, $B_{r+1}(P)$, as defined in \Cref{def:bp},
also does not depend on the occurrence~of~$P$.

To prove that $B_{r+1}(P)$ satisfies Eq.~\eqref{eq:lcg}, we consider two cases. First, suppose that $\hat{B}_r(P)=\emptyset$, that is,
there are no level-$r$ run boundaries between $i+\min B_r(P)$ and $i+\max B_r(P)$. 
Since $\hat{B}_r(i,j)\sub B_r(i,j)$, the inductive assumption $B_r(i,j)=B_r(P)\cup L_r(i,j,3)\cup R_r(i,j,1)$ implies 
$\hat{B}_{r}(i,j)\sub  \{\min B_r(P), \max B_r(P)\} \cup L_{r}(i,j,3)\cup R_r(i,j,1)$,
while $B_r(i,j)\sub B_r(P) \cup M(P)$ yields $\hat{B}_{r}(i,j) \sub  \{\min B_r(P), \max B_r(P)\} \cup M(P) = M(P)$,
where the equality follows from \Cref{def:m}.
The former assertions yields $|\hat{B}_r(i,j)|\le 6$,
and since $B_{r+1}(i,j)\sub \hat{B}_r(i,j)$ cannot contain two consecutive elements of $\hat{B}_r(i,j)$ by \Cref{obs:length2}, 
we conclude that $|B_{r+1}(i,j)|\le 3$.
In particular, since $B_{r+1}(P)=\emptyset$ according to \Cref{def:bp}, we have $B_{r+1}(P) \cup L_{r+1}(i,j,3)\cup R_{r+1}(i,j,1)=B_{r+1}(i,j)\sub B_{r+1}(P) \cup M(P)$ as claimed.

Next, suppose that $\hat{B}_r(P)\ne\emptyset$. \Cref{def:bp} clearly implies $B_{r+1}(P)\sub B_{r+1}(i,j)$,
so it remains to prove that $B_{r+1}(i,j)$ is a subset of both $B_{r+1}(P)\cup L_{r+1}(i,j,3)\cup R_{r+1}(i,j,1)$ and $B_{r+1}(P)\cup M(P)$.
We take $q\in B_{r+1}(i,j)$ and consider three cases. 
\begin{enumerate}[(1)]
\item If $q\le \min \hat{B}_r(P)$, then $q\in (L_r(i,j,3)\cap M(P))\cup \{\min B_r(P),\min \hat{B}_r(P)\}$ and therefore $q\in \hat{L}_r(i,j,5)$.%
\footnote{By the choice of $\hat{B}_r(P)$ in \Cref{def:bp}, there are no level-$r$ run boundaries between $\min B_r(P)$ and $\min \hat{B}_r(P)$.
  Note that $q < \min B_r(P)$ yields $q \not\in B_r(P)$. Since $q\in B_r(i,j)$, by the inductive assumption $q\notin B_r(P)$ implies $q\in L_r(i,j,3)\cap M(P)$ ($q\notin R_r(i,j,1)$ because $q < \min B_r(P)\in B_r(i,j)$).  For the same reason, $L_r(i,j,3)\cup \{\min B_r(P)\}\sub L_r(i,j,4)$
  and $\min \hat{B}_r(P)\in \hat{L}_r(i,j,5)$.}
Since $B_{r+1}(i,j)$ cannot contain two consecutive elements of $\hat{B}_r(i,j)$ due to \Cref{obs:length2},
$q\in B_{r+1}(i,j) \cap \hat{L}_r(i,j,5)$ implies $q\in L_{r+1}(i,j,3)$. 
Finally, $q\in M(P)\cup \{\min B_r(P),\min \hat{B}_r(P)\} = M(P)$, where the equality holds due to \Cref{def:m}.
\item If $q \ge \max B_r(P)$, then $q\in (R_r(i,j,1)\cap M(P))\cup \{\max B_r(P)\}$ and therefore $q\in \hat{R}_r(i,j,2)$.%
\footnote{Note that $q > \max B_r(P)$ yields $q \not\in B_r(P)$. Since $q\in B_r(i,j)$, by the inductive assumption $q\notin B_r(P)$ implies $q\in R_r(i,j,1)\cap M(P)$ ($q\notin L_r(i,j,3)$ because $q > \max B_r(P) > \min \hat{B}_r(P) > \min B_r(P)$ and these 3 elements belong to $B_r(i,j)$). For the same reason,  $R_r(i,j,1)\cup \{\max B_r(P)\} \sub R_r(i,j,2)$.}
Since $B_{r+1}(i,j)$ cannot contain two consecutive elements of $\hat{B}_r(i,j)$ due to \Cref{obs:length2},
$q\in B_{r+1}(i,j) \cap R_r(i,j,2)$ implies $q\in R_{r+1}(i,j,1)$. Finally, $q\in M(P)\cup \{\max B_r(P)\} = M(P)$, where the equality holds due to \Cref{def:m}.
\item If $\min \hat{B}_r(P) < q < \max B_r(P)$, then $q+i$ is a level-$(r+1)$ block boundary and $q\in B_{r+1}(P)$ by \Cref{def:bp}.\qed
\end{enumerate}
\end{proof}

  \Cref{lem:lcg} implies that the grammar constructed in this section is locally consistent with $c_g=3$. 
  We conclude this section with a further characterization of the set $M(P)$.

  \begin{lemma}\label{lem:m}
    For each substring $P=T[i\dd j]$, the set $M(P)$ satisfies the following properties:
    \begin{enumerate}[(a)]
      \item\label{it:mleft} If $B_r(i,j)\ne \emptyset$ for some $r\ge 0$, then $\min B_r(i,j)\in M(P)$,
      \item\label{it:mleftprim} If $\hat{B}_r(i,j)\ne \emptyset$ for some $r\ge 0$, then $\min \hat{B}_r(i,j)\in M(P)$,
      \item\label{it:msize} $|M(P)|\le 3\lceil\log |P|\rceil$.
    \end{enumerate}
  \end{lemma}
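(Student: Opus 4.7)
Parts (a) and (b) will follow by inspecting the inclusion $B_r(i,j)\sub B_r(P)\cup M(P)$ from \Cref{lem:lcg}. For (a), if $B_r(P)=\emptyset$ the inclusion collapses to $B_r(i,j)\sub M(P)$ and the claim is immediate; otherwise, $\min B_r(P)\in B_r(P)\sub B_r(i,j)$ bounds $\min B_r(i,j)\le \min B_r(P)$, while $\min B_r(i,j)\in B_r(i,j)\sub B_r(P)\cup M(P)$ either lands in $M(P)$ (done) or back in $B_r(P)$, yielding the reverse inequality and hence equality with $\min B_r(P)\in M(P)$.

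For (b), I will use $\hat{B}_r(i,j)\sub B_r(i,j)$; the only interesting case is $\min\hat{B}_r(i,j)\in B_r(P)\sm M(P)$. Since $\{\min B_r(P),\max B_r(P)\}\sub M(P)$ whenever $B_r(P)\neq\emptyset$, the minimum must lie strictly interior to $[\min B_r(P),\max B_r(P)]$. The fragment $T[i+\min B_r(P)\dd i+\max B_r(P)]$ consists of full level-$r$ blocks, so by \Cref{lem:altr} its run boundaries are context-insensitive and therefore captured by $\hat{B}_r(P)$; in particular $\min\hat{B}_r(i,j)\in\hat{B}_r(P)$. Combining $\hat{B}_r(P)\sub\hat{B}_r(i,j)$ with the global minimality of $\min\hat{B}_r(i,j)$ then forces equality with $\min\hat{B}_r(P)\in M(P)$.

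For (c), \Cref{def:m} shows that each level $r$ contributes at most three elements to $M(P)$, and by \Cref{def:bp}, $\hat{B}_r(P)=\emptyset$ whenever $B_r(P)=\emptyset$, so $|M(P)|\le 3\,|\{r\ge 0:B_r(P)\ne\emptyset\}|$. The plan is to establish a strict halving $|B_{r+1}(P)|<|B_r(P)|/2$. First, run boundaries are block boundaries and \Cref{def:bp} places $\hat{B}_r(P)$ strictly inside the admissible range of $B_r(P)$, so $\hat{B}_r(P)\sub B_r(P)\sm\{\min B_r(P),\max B_r(P)\}$ and $|\hat{B}_r(P)|\le |B_r(P)|-2$. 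Second, level-$(r+1)$ block boundaries are local minima of $\hat{T}_r$, and \Cref{obs:length2} forbids consecutive local minima; the relevant sub-fragment of $\hat{T}_r$ spans approximately $|\hat{B}_r(P)|$ metasymbols (one initial metasymbol, starting just after $\min\hat{B}_r(P)$, plus one new metasymbol for each of the $|\hat{B}_r(P)|-1$ subsequent interior run boundaries), so at most $\lceil|\hat{B}_r(P)|/2\rceil\le (|B_r(P)|-1)/2$ of them are local minima. Setting $a_r=|B_r(P)|$, the recurrence $2(a_{r+1}+1)\le a_r+1$ then gives $a_r+1\le |P|/2^r$, so $B_r(P)\ne\emptyset$ forces $r\le \log|P|-1$, leaving at most $\lceil\log|P|\rceil$ active levels and $|M(P)|\le 3\lceil\log|P|\rceil$.

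The main obstacle is the precise metasymbol bookkeeping in (c): one must carefully track how the endpoints of the $\hat{T}_r$-sub-fragment align with run boundaries (the left endpoint starts a fresh metasymbol just after $\min\hat{B}_r(P)$, while the right endpoint may lie inside a metasymbol), so as to confirm that the number of local-minimum candidates is at most $|\hat{B}_r(P)|$ and consequently that the recursion is strictly contracting by more than a factor of two.
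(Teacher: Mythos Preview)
Your arguments for (a) and (b) are correct and essentially identical to the paper's: both use the sandwich $B_r(P)\subseteq B_r(i,j)\subseteq B_r(P)\cup M(P)$ from \Cref{lem:lcg} and then, for (b), locate $\min\hat{B}_r(i,j)$ relative to $\{\min B_r(P),\max B_r(P)\}$ and invoke the definition of $\hat{B}_r(P)$.

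For (c) your argument is also correct, but it is more involved than necessary. The paper bypasses $\hat{B}_r(P)$ entirely and proves the simple halving $|B_r(P)|\le \tfrac12|B_{r-1}(P)|$ in one step: every level-$r$ block boundary is a level-$(r-1)$ block boundary, and the range defining $B_r(P)$ in \Cref{def:bp} starts strictly after $\min B_{r-1}(P)$, so $B_r(P)\subseteq B_{r-1}(P)\setminus\{\min B_{r-1}(P)\}$; moreover, by \Cref{obs:length2} no two consecutive elements of $B_{r-1}(P)$ can both lie in $B_r(P)$. Picking at most every other element from a set of size $|B_{r-1}(P)|-1$ gives $|B_r(P)|\le \lfloor |B_{r-1}(P)|/2\rfloor$, hence $|B_r(P)|< |P|/2^r$ and the bound follows. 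Your detour through $|\hat{B}_r(P)|\le |B_r(P)|-2$ and then $|B_{r+1}(P)|\le \lceil|\hat{B}_r(P)|/2\rceil$ yields the slightly sharper recurrence $a_{r+1}\le (a_r-1)/2$, which you do not need, and it is exactly this detour that creates the ``metasymbol bookkeeping'' worry you flag at the end. That worry can be resolved (the relevant run boundaries are precisely $\hat{B}_r(P)\setminus\{\min\hat{B}_r(P)\}$, so the count is at most $\lceil(|\hat{B}_r(P)|-1)/2\rceil$), but the paper's one-line argument on $B_{r-1}(P)$ avoids it altogether.
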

  \begin{proof} 
    To prove \eqref{it:mleft} for any $r$,
    note that $B_r(P) \sub B_{r}(i,j)\sub B_r(P)\cup M(P)$ by \Cref{lem:lcg}.
    If $\min B_r(i,j) \notin B_r(P)$, then it belongs to $M(P)$.
    Otherwise, it must be equal to $\min B_r(P)$, which is in $M(P)$ by \Cref{def:m}.

    The proof of \eqref{it:mleftprim} is similar:
    Since $\hat{B}_r(i,j) \sub B_r(i,j)$, either $\min \hat{B}_r(i,j)\in B_r(i,j) \sm B_r(P) \sub M(P)$,
    or $\min \hat{B}_r(i,j)\in B_r(P)$. If $\min \hat{B}_r(i,j) \in \{\min B_r(P),\max B_r(P)\}$, then it is in $M(P)$ by \Cref{def:m}.
    Otherwise, $\min B_r(P) < \min \hat{B}_r(i,j) < \max B_r(P)$
    and, by the choice of $\hat{B}_r(P)$ in \Cref{def:bp},  $\min \hat{B}_r(i,j)=\min \hat{B}_r(P)$ is also in $M(P)$ by \Cref{def:m}.

    To prove \eqref{it:msize}, notice that $|B_r(P)| \le \frac12|B_{r-1}(P)|$ holds for all $r$
 due to \Cref{def:bp}, \Cref{obs:length2}, and $\min B_{r-1}(P)\notin B_r(P)$. 
    This implies
    $|B_r(P)| \le |B_0(P)| \cdot 2^{-r} < |P|\cdot 2^{-r}$,
    and therefore $\hat{B}_r(P) = B_r(P) = \emptyset$ for $r \ge \log |P|$.
    \Cref{def:m} now yields the claim.
  \end{proof}


\section{Bounding our Grammar in terms of Attractors} \label{sec:attractor}

Let us first define the concept of attractors in a string \cite{KP18}.

\begin{definition}[\cite{KP18}] \label{def:attractor}
An \emph{attractor} of a string $S$ a set $\Gamma\sub[1\dd n]$ of positions in $S$
such that each non-empty substring 
$Q$ of $S$ has an occurrence $S[i\dd j]$ containing an attractor
position, i.e., satisfying $i\le p \le j$ for some $p\in \Gamma$.
\end{definition}

In this section, we show that the RLCFG of Section~\ref{sec:locality} is of size 
$g = O(\gamma\log(n/\gamma))$, where $\gamma$ is the minimum size of an 
attractor of $T$. The key is to prove that distinct nonterminals are 
formed only around the attractor elements. 
For this, we first prove that
$T'[1\dd n']$, where the blocks of $T$ are converted into nonterminals,
contains an attractor of size at most $3\gamma$. 

\begin{lemma} \label{lem:4gamma}
Let $\Gamma$ be an attractor of $T$, and let $\Gamma' = \bigcup_{p\in \Gamma} [p'-1\dd p'+1]$, where $p'$ is the 
position in $T'$ of the nonterminal that covers $p$ in $T$. 
Then $\Gamma'$ is an attractor of $T'$.
\end{lemma}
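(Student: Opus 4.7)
The plan is to show that every non-empty substring $Q'$ of $T'$ has some occurrence intersecting $\Gamma'$. Let $Q = T[s_0\dd t_0]$ be the corresponding substring of $T$ for one fixed occurrence $Q' = T'[s'\dd t']$; by construction, $Q$ is aligned with the $T$-parsing and decomposes as $B_1 \cdots B_k$, where $B_1, \ldots, B_k$ are consecutive blocks. I assume for now that $s' > 1$ and $t' < n'$; the remaining boundary cases are handled separately by invoking the uniqueness of $T'[1]$ and $T'[n']$ together with the mandatory inclusion of $1, n \in \Gamma$.

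Let $\tilde Q = T[s_0^e\dd t_0^e]$ be the extension in the sense of \Cref{def:extended-block}, with $s_0^e = fimap(map(s_0-1))-1$ and $t_0^e = t_0+1$. Apply the attractor property of $\Gamma$ to $\tilde Q$ to obtain an occurrence $T[u\dd v] = \tilde Q$ containing some $p \in \Gamma$. The key intermediate claim is that, inside $T[u+(s_0-s_0^e)\dd u+(t_0-s_0^e)]$, the blocks $B_1, \ldots, B_k$ appear as full $T$-parse blocks at consecutive positions. To prove it, I would first invoke \Cref{lem:extend} on $B_1$, whose extension is a prefix of $\tilde Q$, to conclude that $B_1$ sits at the expected offset and that $|B_1|-1$ is a block boundary of the suffix $T[u+(s_0-s_0^e)\dd v]$. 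Then \Cref{lem:alt}, applied to the matching block-aligned fragments $T[u+(s_0-s_0^e)\dd v]$ and $T[s_0\dd t_0^e]$, ensures that their interior block-boundary sets agree except possibly at $q_0 = limap(map(s_0)) - s_0$; since $q_0 \le |B_1|-1$ and the only potentially disputed value (when $q_0 = |B_1|-1$) is already guaranteed by the preceding step, every boundary $|B_1|-1, |B_1|+|B_2|-1, \ldots, |Q|-1$ is preserved, so $B_1, \ldots, B_k$ are indeed full $T$-blocks at the expected offsets.

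Consequently, in $T'$, the $k+2$ consecutive positions $u_0', u_0'+1, \ldots, u_0'+k+1$ represent, respectively, the block $B_L$ containing $u$, the blocks $B_1, \ldots, B_k$, and the block $B_R$ containing $v$. The middle $k$ of these positions spell precisely $Q'$, giving one occurrence $T'[u_0'+1\dd u_0'+k]$ of $Q'$ in $T'$. The attractor position $p \in [u\dd v]$ lies in exactly one of the blocks $B_L, B_1, \ldots, B_k, B_R$, so $p' \in \{u_0', u_0'+1, \ldots, u_0'+k+1\}$; in every case, at least one of $p'-1, p', p'+1$ (all in $\Gamma'$ by definition) belongs to $[u_0'+1\dd u_0'+k]$, yielding a $\Gamma'$-position inside the constructed occurrence of $Q'$.

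The main obstacle is the intermediate claim that the $k$ blocks of $Q$ survive rigidly inside every occurrence of $\tilde Q$: this requires carefully combining the single-block guarantee of \Cref{lem:extend} with the interior-consistency guarantee of \Cref{lem:alt}. It is precisely this claim that justifies the $\pm 1$ inflation in the definition of $\Gamma'$, which absorbs the possibility that the boundary blocks $B_L$ and $B_R$ of the occurrence differ from their counterparts in the reference instance of $\tilde Q$.
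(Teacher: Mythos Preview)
Your proof is correct and follows the same overall architecture as the paper's: take the concatenation of blocks underlying $Q'$, pass to its extension $\tilde Q$, find an attractor-crossing occurrence of $\tilde Q$, show that the blocks $B_1,\ldots,B_k$ reappear intact there, and conclude that the corresponding occurrence of $Q'$ in $T'$ meets $\Gamma'$ once the $\pm 1$ neighbourhood is taken into account.

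The only notable difference is in how the intermediate claim (that $B_1,\ldots,B_k$ survive as full blocks in the new occurrence) is established. You bootstrap from a single application of \Cref{lem:extend} to $B_1$ and then push the remaining boundaries through via \Cref{lem:alt}, doing a small case analysis on whether $q_0=limap(map(s_0))-s_0$ equals $|B_1|-1$ or is strictly smaller. The paper instead observes that the extension of \emph{every} block $B_i$ is already a substring of $\tilde Q$ (the left end of $B_i$'s extension reaches back only into $B_{i-1}$, and the right end is $+1$), and simply invokes \Cref{lem:extend} once per block. This gives the same conclusion without touching \Cref{lem:alt} and without any case analysis, so it is a bit more economical; on the other hand, your route makes more explicit why a single ``exceptional'' boundary is all that can go wrong, which is the moral content of the $\pm 1$ inflation in $\Gamma'$.
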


\begin{figure}[t]
    \begin{center}
    \includegraphics[width=\textwidth]{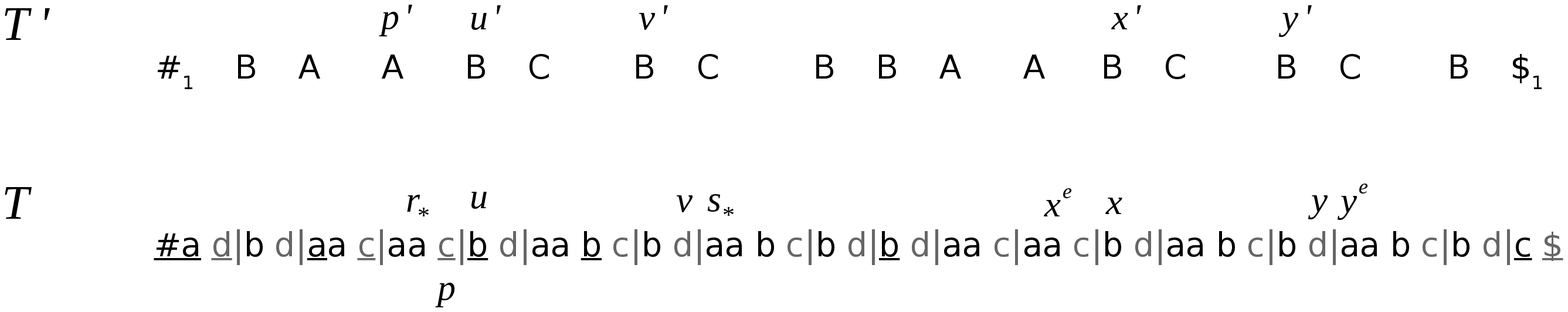}
    \end{center}
    \caption{Illustration of Lemma~\ref{lem:4gamma}. We underlined positions in $T$ corresponding to a particular string attractor.}
    \label{fig:4gamma}
    \end{figure}

\begin{proof}
Figure~\ref{fig:4gamma} illustrates the proof.
Consider an arbitrary substring $T'[x'\dd y']$, with $x' \ge 3$ and $y' \le
n'-2$; otherwise the substring crosses an attractor because $1$ and $n$ are
in $\Gamma$. This is a sequence of consecutive
nonterminals, each corresponding to a block in $T$. Let $T[x\dd y]$
be the substring of $T$ formed by all the blocks that map to $T'[x'\dd y']$. The
union of their extensions is also a substring $T[x^e\dd y^e]$ of $T$. Since $\Gamma$ is an attractor in $T$, 
there exists a copy $T[r_*\dd s_*] = T[x^e\dd y^e]$ that includes an element $p\in\Gamma$, 
$r_* \le p \le s_*$. 

Consider any block $T[i\dd j]$ inside $T[x\dd y]$. Its extension
$T[i^e\dd j^e]$ is contained in $T[x^e\dd y^e]$,
so a copy 
$T[r'\dd s']$ of $T[i^e\dd j^e]$ appears inside $T[r_*\dd s_*]$. By 
Lemma~\ref{lem:extend}, the block $T[i\dd j]$ also forms a 
block $T[r\dd s]$ inside $T[r'\dd s']$, at the same relative position; furthermore, 
$T[r'\dd s']=T[r^e\dd s^e]$ is the extension of $T[r\dd s]$.

Since this happens for every block $T[i\dd j]$ inside $T[x\dd y]$,
which is a sequence of blocks, it follows that $T[x\dd y]$ 
appears inside $T[r_*\dd s_*]$, as a subsequence $T[u\dd v]$ of blocks;
furthermore, its extension $T[u^e\dd v^e]$ coincides with $T[r_*\dd s_*]$ and thus 
contains $p$.
Moreover, $T[u\dd v]$ maps to a substring $T'[u'\dd v']=T'[x'\dd y']$.

Since $v^e = v+1$ and $u^e = fimap(map(u-1))-1$, due to Observation~\ref{obs:length2},
the fragments $T[u^e \dd u-1]$ and $T[v+1\dd v^e]$ are contained within single blocks. Therefore, 
the position $p'$ to which 
$p$ is mapped in $T'$ belongs to $T'[u'-1\dd v'+1]$. Consequently, $T'[u'\dd v']$ contains a position in 
$\Gamma'$.
\end{proof}

We now show that the first round contributes $O(\gamma)$ to the size of the
final RLCFG. In this bound, we only count the the sizes of the generated rules;
the whole accounting will be done in Theorem \ref{thm:rlcfg}. The idea
is to show that the $3$ distinct blocks formed around each attractor element
have expected length $O(1)$.

\begin{lemma} \label{lem:linear}
The first round of parsing contributes $O(\gamma)$ to the grammar size, in
expectation. Further, a parsing producing a grammar of size $O(\gamma)$ is
found in $O(n)$ expected time provided that $\gamma$ is known.
\end{lemma}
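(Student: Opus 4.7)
The strategy combines the attractor on $T'$ from Lemma~\ref{lem:4gamma} with an analysis of the expected block lengths induced by a uniformly random permutation $\pi$. By Lemma~\ref{lem:4gamma}, $T'$ admits an attractor $\Gamma'$ with $|\Gamma'|\le 3\gamma$. Since every character of $T'$ is a block nonterminal created in the first round, the attractor property applied to single-character substrings of $T'$ implies that each distinct such nonterminal occurs at some position $p'\in\Gamma'$, so at most $3\gamma$ distinct block nonterminals are produced. Writing $L(p')$ for the length, in metasymbols, of the block at position $p'$ of $T'$, the total right-hand-side size of the block rules is then at most $\sum_{p'\in\Gamma'}L(p')$.

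I would rewrite this sum, with harmless overcounting, as $\sum_{p\in\Gamma}\bigl(L(p'-1)+L(p')+L(p'+1)\bigr)$ using the explicit form of $\Gamma'$ from Lemma~\ref{lem:4gamma}. Fix $p\in\Gamma$ and let $\hat{p}$ be the position in $\hat{T}$ of the metasymbol that covers $p$. The three blocks at positions $p'-1, p', p'+1$ of $T'$ correspond to an interval of $\hat{T}$ whose endpoints are the second local minima of $\pi$ on each side of $\hat{p}$. Because consecutive positions of $\hat{T}$ carry distinct underlying symbols, for any window of three consecutive positions of $\hat{T}$ the middle position is a local minimum with probability at least $1/3$ under a uniformly random $\pi$ (the value is $1/3$ when the three underlying symbols are distinct and $1/2$ when the outer two coincide). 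A short calculation then shows that the probability of a length-$k$ window of $\hat{T}$ containing no local minimum decays geometrically in $k$, so the expected distance from $\hat{p}$ to the second local minimum on either side is $O(1)$. Hence $\mathbb{E}[L(p'-1)+L(p')+L(p'+1)]=O(1)$, and summing over $p\in\Gamma$ gives expected total block-rule size $O(\gamma)$.

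Every distinct run-length nonterminal appears as a child in at least one distinct block right-hand side, so the number of distinct run-length nonterminals is bounded by the total size of the distinct block right-hand sides, i.e., $O(\gamma)$ in expectation; since each such rule contributes $2$ to the grammar, the run-length rules add at most $O(\gamma)$, completing the expected-size bound. For the algorithmic claim, assuming $\gamma$ is known, I would sample $\pi$ uniformly, construct $\hat{T}$, compute local minima, form the blocks, identify equal blocks via Karp--Rabin signatures, and sum the right-hand-side sizes, all in $O(n)$ time. By Markov's inequality, the resulting size exceeds any prefixed constant multiple of its expectation with probability at most $\tfrac12$, so $O(1)$ resamples in expectation yield a first round of size $O(\gamma)$, for total expected time $O(n)$.

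The main obstacle is the geometric-decay claim on gaps between consecutive local minima: the events ``position $i$ is a local minimum'' are not independent, and the marginal probability depends on whether three consecutive positions of $\hat{T}$ have three distinct underlying symbols or only two. Both cases must be handled separately in the short-window calculation, but in each case the marginal of $1/3$ suffices to drive the decay, so the probabilistic step is standard rather than novel.
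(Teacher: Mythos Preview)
Your overall plan matches the paper's structure: use Lemma~\ref{lem:4gamma} to get at most $3\gamma$ distinct block nonterminals, bound the expected combined length (in metasymbols) of the three blocks around each attractor position in $\hat T$, charge the run-length rules to the block right-hand sides, and finish with Markov plus resampling. The run-length accounting and the Las-Vegas step are handled exactly as in the paper. The only substantive difference is the argument that those three blocks have expected total length $O(1)$, and there your justification has a gap.

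Marginal probabilities alone do not yield tail bounds for intersections of correlated events. Concretely, if the window content is $abcabc\cdots$, then the events ``position $3i+2$ is a local minimum'' for $i=0,1,2,\ldots$ are all the \emph{same} event (each asks whether $\pi(a)>\pi(b)<\pi(c)$), so disjoint triples give no independent information whatsoever. The window still always contains an interior local minimum once its length reaches $5$, but that fact comes from the global shape of the $\pi$-values, not from the $1/3$ marginals. So your claim of geometric decay is true, yet ``the marginal of $1/3$ suffices to drive the decay'' is not a proof of it, and calling the step ``standard'' undersells the issue.

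The paper fills this gap with a structural observation rather than a marginal one. The stretch of $\hat T$ from $\hat p$ to the second local minimum on one side is covered by at most two unimodal (increasing-then-decreasing) runs, hence has at most three direction changes; splitting it into four equal quarters forces some quarter to be monotone. A strictly monotone piece of length $d$ in $\hat T$ must consist of $d$ pairwise distinct underlying symbols (equal symbols force equal $\pi$-values, contradicting strict monotonicity between distinct consecutive values), and only $2$ out of $d!$ permutations make $d$ distinct symbols monotone. A union bound over the four quarters gives tail probability at most $8/(\ell^+/4)!$, and summing $\sum_{\ell^+} 8/(\ell^+/4)!$ makes $\mathbb E[\ell^+]=O(1)$ immediate. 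If you want to keep your geometric-decay framing, this ``no local minimum $\Rightarrow$ long monotone subwindow $\Rightarrow$ factorial bound'' step is exactly the structural ingredient you need in place of the marginal argument.
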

\begin{proof}
Let us first focus on block-forming rules; we consider the run-length rules in the next paragraph. The right-hand sides of the block-forming rules correspond to the distinct blocks formed in $\hat{T}$, that is, to single symbols in $T'$.
All the distinct symbols in $T'$, in turn, appear at positions of $\Gamma'$.
By \Cref{lem:4gamma}, $\Gamma'$ is of size at most $3\gamma$; therefore, there are at most $3\gamma$ distinct blocks in $\hat{T}$ and in $T$ (i.e., those containing attractor elements of $T$ and their neighboring blocks), and thus at most $3\gamma$ distinct nonterminals are formed in the grammar.

We must also show, however, that the sum of the sizes of the right-hand sides of those $3\gamma$ productions also add up to $O(\gamma)$. Consider a block of $\hat{T}$ of length $\ell$. The right-hand side of its corresponding production is $\ell$. Each element of $\hat{T}$ can be a 
metasymbol, however, so the grammar may indeed include $\ell$ further run-length nonterminals,
contributing up to $2\ell$ to the grammar size. Therefore, each distinct block of length $\ell$ in $\hat{T}$ contributes at most $3\ell$ 
to the grammar size. We now show that $\ell=O(1)$ in expectation for the $3\gamma$ blocks specified above.

Consider an attractor element $p$ and its position $\hat{p}=map(p)$ when mapped to $\hat{T}$.
Let $T[i\dd j]$ be the block containing $p$ and let $T[i'..j']$ be its concatenation
with the adjacent blocks ($T[i'\dd i-1]$ and $T[j+1\dd j']$).
Moreover, let $\hat{T}[\hat{i}\dd \hat{j}]$ and $\hat{T}[\hat{i}'\dd \hat{j}']$ be the corresponding fragments of $\hat{T}$,
with $\hat{i} = map(i)$, $\hat{j} = map(j)$, $\hat{i}' = map(i')$, and $\hat{j}' = map(j)$.

Let $\ell^+ = \hat{j}'-\hat{p}+1$, $\ell^- = \hat{p}-\hat{i}'$, and $\ell = \ell^+ +\ell^-$. Then, $3\ell$
is the maximum possible contribution of attractor $p$ to the grammar size via 
nonterminals that represent these blocks. 

The area $\hat{T}[\hat{p}\dd \hat{j}']$ contains at most 2 local minima,
at $\hat{j}$ and $\hat{j}'$ (unless $\hat{j}' = \hat{n}$).
Note that, between two consecutive local minima, we have a sequence of 
nondecreasing values of $\pi$ and then a sequence of nonincreasing values of 
$\pi$. Our area can be covered by 2 such ranges. Hence, if we split the substring of length $\ell^+$ into 4 equal parts of length
$\ell^+/4$, at least one of them must be monotone (i.e., nondecreasing or 
nonincreasing) with respect to $\pi$.

Note that consecutive symbols in $\hat{T}$ are always different. Further, if there 
are repeated symbols in a length-$d$ substring of $\hat{T}$, then it cannot be monotone with respect to $\pi$. If
all the symbols are different, instead, exactly one out of $d!$ 
permutations $\pi$ will make the substring increasing and one out of $d!$
will make it decreasing, where $d$ is the length of the substring.

As a result, at most 2 out of $(\ell^+/4)!$ permutations can make one of our
length-$(\ell^+/4)$ substrings monotone. If we choose permutations $\pi$ 
uniformly at random, then the probability that at least one of our 4 substrings 
is monotone is at most $8 / (\ell^+/4)!$. Since this upper-bounds the probability
that $\hat{j}' \ge \hat{p}+\ell^+$,
the expected value of $\ell^+$ is $O(1)$.\footnote{Because 
$\sum_{k \ge 1} 1/k! = e-1$.}

An analogous argument holds for $\ell^-$ since $\hat{T}[\hat{i}'\dd \hat{p}-1]$
can also be covered by at most 2 ranges
between consecutive local minima. Adding the expectations
of the contributions $3\ell$ over the $\gamma$ attractor elements, we obtain
$O(\gamma)$.

If the expectation is of the form $c \cdot \gamma$, then at least half
of the permutations produce a grammar of size at most $2c\cdot\gamma$, 
and thus a Las Vegas algorithm finds a permutation producing a grammar
of size at most $2c \cdot \gamma$ after $O(1)$ attempts in expectation. Since
at each attempt we parse $T[1\dd n]$ in time $O(n)$, 
we find a suitable permutation in $O(n)$ expected time provided we know $\gamma$.
\end{proof}

We now perform $O(\log(n/\gamma))$ rounds of locally-consistent
parsing, where the output $T'$ of each round 
is the input to the next. The length of the string halves in each 
iteration, and the grammar grows only by $O(\gamma)$ in each round.

\begin{theorem} \label{thm:rlcfg}
Let $T[1\dd n]$ have an attractor of size $\gamma$. Then there exists a
locally-balanced locally-consistent RLCFG of size 
$g = O(\gamma\log(n/\gamma))$ and height $O(\log(n/\gamma))$ 
that generates (only) $T$, and it can be built in $O(n)$ expected time
and $O(g)$ working space if $\gamma$ is known.
\end{theorem}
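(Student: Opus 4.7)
The plan is to build the grammar by iterating the two-pass parsing of Section~\ref{sec:locality}, beginning with $T_0=T$ and continuing until some round $h$ at which $|T_h|\le c\gamma$ for a suitable constant $c$; the surviving string $T_h$ is then declared the right-hand side of a fresh start symbol. Each round replaces the runs in $T_{r-1}$ by fresh run-length nonterminals and then each distinct block (bounded by local minima of $\hat{T}_{r-1}$) by a fresh block-forming nonterminal, producing $T_r$. By \Cref{obs:length2}, $|T_r|\le \lfloor |T_{r-1}|/2\rfloor$, so iterating until $|T_h|\le c\gamma$ yields $h\le \log(n/\gamma)+O(1)$; the start rule contributes only $O(\gamma)$ to the grammar and one additional layer to the parse tree.

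The structural properties fall out from earlier work. Local balance with $b=2$ is exactly \Cref{lem:locally-balanced}. Local consistency with $c_g=3$ is exactly \Cref{lem:lcg}, which is phrased at an arbitrary level $r$. The height bound is $2h+1=O(\log(n/\gamma))$.

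The main obstacle is the size bound $g=O(\gamma\log(n/\gamma))$. I plan to invoke \Cref{lem:linear} once per round, which gives an $O(\gamma_{r-1})$ expected contribution at round $r$, where $\gamma_{r-1}$ is any attractor size of $T_{r-1}$. A naive recursion through \Cref{lem:4gamma} only yields $\gamma_r\le 3^r\gamma$, which would blow up exponentially. The crucial step, which I expect to be the main obstacle, is to show that an attractor of $T$ directly induces an attractor of $T_r$ of size $O(\gamma)$, uniformly in $r$. Concretely, I would define $\Gamma^{(r)}=\bigcup_{p\in\Gamma}[p^{(r)}-1\dd p^{(r)}+1]$, where $p^{(r)}$ is the position in $T_r$ of the level-$r$ block containing $p$, and verify the attractor property by transferring occurrences through \Cref{lem:lcg}: any substring $Q$ of $T_r$ corresponds to a level-$r$-block-aligned substring $Q^*$ of $T$, and any occurrence of $Q^*$ containing $p\in\Gamma$ can, by local consistency, be retracted to a block-aligned occurrence of $Q$ in $T_r$ that contains a position in $[p^{(r)}-1\dd p^{(r)}+1]$. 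With $|\Gamma^{(r)}|\le 3\gamma$ in hand for every $r$, \Cref{lem:linear} gives an $O(\gamma)$ expected contribution per round; summing over the $h=O(\log(n/\gamma))$ rounds and adding the $O(\gamma)$ start rule yields $g=O(\gamma\log(n/\gamma))$.

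For the construction, each round on $T_{r-1}$ runs in $O(|T_{r-1}|)$ expected time by the Las Vegas analysis inside \Cref{lem:linear}: once a constant-factor upper bound on the attractor size of $T_{r-1}$ is known (which follows from the uniform bound above), we repeat with fresh random permutations until the produced rules have total size $O(\gamma)$, succeeding in $O(1)$ attempts in expectation. Summing the geometric series $\sum_r |T_r| = O(n)$ gives total expected time $O(n)$. Intermediate strings are discarded level by level, so the space used at any moment is dominated by the current $T_r$ plus the grammar produced so far, which is $O(n)$ overall and can be reduced to $O(g)$ once $T_0=T$ need no longer be stored.
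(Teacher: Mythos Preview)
Your overall plan is right—iterate the parsing for $h=O(\log(n/\gamma))$ rounds, invoke \Cref{lem:linear} once per round, and finish with an $O(\gamma)$-size start rule—and the structural properties follow from the lemmas you cite. The gap is precisely where you placed the main obstacle. Your set $\Gamma^{(r)}=\bigcup_{p\in\Gamma}[p^{(r)}-1\dd p^{(r)}+1]$ is not shown to be an attractor of $T_r$, and the proposed ``retraction via \Cref{lem:lcg}'' does not go through: local consistency says that for two matching fragments the level-$r$ block boundaries agree except at a few extremes, but it does \emph{not} say that an arbitrary occurrence $T[i'\dd j']$ of $Q^*$ is itself level-$r$-block-aligned. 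There is therefore no occurrence of $Q$ in $T_r$ sitting at that spot to retract to; at best you recover a proper substring of $Q$ aligned to the level-$r$ blocks there, which does not establish the attractor property for $Q$.

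The paper in fact takes the iterative route you discarded, but with one extra observation that kills the $3^r$ blow-up. Instead of bounding $|\Gamma_r|$ directly, it tracks the \emph{shape} of $\Gamma_r$: it always consists of $\gamma$ runs of consecutive positions, of width~$m_r$. Under \Cref{lem:4gamma}, a width-$m_r$ run maps to at most $1+\lfloor m_r/2\rfloor$ consecutive positions in $T_{r+1}$ (each block has length $\ge 2$), and the $\pm 1$ expansion then yields $m_{r+1}=3+\lfloor m_r/2\rfloor$. This recursion starts at $m_0=1$ and stabilizes at $m_r\le 5$, so $|\Gamma_r|\le 5\gamma$ for every $r$; \Cref{lem:linear} then gives the $O(\gamma)$-per-round bound you need. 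This stabilization is the missing idea. (A smaller point: your working-space remark ``reduced to $O(g)$ once $T_0$ need no longer be stored'' does not address $|T_1|\le n/2$, which can still far exceed $g$; the paper's $O(g)$ accounting instead bounds the per-round permutation and hash table by $O(\gamma)$ via $|\Sigma_r|\le 5\gamma$, which in turn relies on the $|\Gamma_r|\le 5\gamma$ bound above.)
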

\begin{proof}
We apply the grammar construction described in Section~\ref{sec:ec}, which by
Lemmas~\ref{lem:locally-balanced} and \ref{lem:lcg}, is locally balanced and
locally consistent.

We first show that we can build an attractor $\Gamma_r$ for
each $T_r$ formed by $\gamma$ runs of $m_r \in O(1)$ consecutive positions.
This is clearly true for $T_0$, with $m_0=1$. Now assume
this holds for $T_r$. When parsing $T_r$ into blocks to form $T_{r+1}$, each 
run of $m_r$ consecutive attractor positions is parsed into at most $1+\lfloor 
m_r/2 \rfloor$ consecutive symbols $p'$ in $T_{r+1}$, as seen in the proof of
Lemma~\ref{lem:locally-contracting}. Lemma~\ref{lem:4gamma} then 
shows that, if we expand each such mapped attractor positions $p'$ to 
$[p'-1\dd p'+1]$, we obtain an attractor $\Gamma_{r+1}$ for $T_{r+1}$.
The union of the expansions of $1+\lfloor m_r/2 \rfloor$ consecutive positions
$p'$ creates a run of length $m_{r+1} = 3+\lfloor m_r/2 \rfloor$. It then holds
that $\Gamma_{r+1}$ is formed by $\gamma$ runs of at most $m_{r+1}$ positions.

The sequence of values $m_r$ stabilizes. If we solve $m = 3+\lfloor m/2\rfloor$,
we obtain $\lceil m/2 \rceil = 3$. This solves for $m =5$ or $m=6$. Indeed, 
the value is 5 and is reached soon: $m_0=1$, $m_1=3$, $m_2=4$, $m_3 = 5$, 
$m_4 = 5$. Therefore, we safely use $m_r \le 5$ in the following.

The only distinct
blocks in each $T_r$ are those forming $\Gamma_{r+1}$.
Therefore, the parsing of each text $T_r$ produces at most $5\gamma$ distinct
nonterminal symbols.
By \Cref{lem:linear}, we can find in $O(n_r)$ expected time a permutation $\pi_r$
such that the contribution of the $r$th round to the grammar size is $O(|\Gamma_r|)=O(\gamma)$.

The sum of the lengths of all $T_r$s is at most $2n$, thus the total expected
construction cost is $O(n)$. We stop after $r^*=\log(n/\gamma)$ rounds. By then,
$T_{r^*}$ is of length at most $\gamma$ and the cumulative size of the grammar
is $O(\gamma \cdot r^*) = O(\gamma \log(n/\gamma))$. We add a final rule 
$S \rightarrow T_{r^*}$, which adds $\gamma$ to the grammar size. The height
of the grammar is $O(r^*) = O(\log(n/\gamma))$.

As for the working space, at each new round $r$ we generate a permutation $\pi_r$
of $|\Sigma_r|$ cells. Since the alphabet size is a lower bound to the attractor
size, it holds that $|\Sigma_r|\le 5\gamma$. We store the 
distinct blocks that arise during the parsing in a hash table. These are at 
most $5\gamma$ as well, and thus a hash table of size
$O(\gamma)$ is sufficient. The rules themselves, which grow by $O(\gamma)$ in
each round, add up to $O(g)$ total space.
\end{proof}

\subsection{Building the grammar without an attractor}
\label{sec:delta}

Since finding the size $\gamma$ of the smallest attractor is NP-complete
\cite{KP18}, it is interesting that we can find a RLCFG similar to that of
Theorem~\ref{thm:rlcfg} without having to find an attractor nor knowing 
$\gamma$. The key idea is to build on another measure, $\delta$, that
lower-bounds $\gamma$ and is simpler to compute.

\begin{definition} \label{def:delta}
Let $T(\ell)$ be the total number of distinct substrings of length $\ell$ in 
$T$. Then $$\delta = \max \{ T(\ell)/\ell,~\ell \ge 1 \}.$$
\end{definition}

Measure $\delta$ is related to the expression $d_\ell(w)/\ell$, used by \citeN{RRRS13}
to approximate $z$. Analogously to their result \cite[Lem.~4]{RRRS13}, we have the following bound
in terms of attractors.

\begin{lemma} \label{lem:delta}
It always holds $\delta \le \gamma$.
\end{lemma}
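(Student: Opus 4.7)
The plan is to prove the bound for every length $\ell$ separately, that is, to show $T(\ell) \le \gamma \cdot \ell$; taking the maximum over $\ell$ then yields $\delta \le \gamma$.

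Fix a length $\ell \ge 1$ and let $\Gamma$ be an attractor of $T$ with $|\Gamma|=\gamma$. By \Cref{def:attractor}, every distinct length-$\ell$ substring $Q$ of $T$ admits an occurrence $T[i\dd j]$ (with $j-i+1=\ell$) that covers some $p\in \Gamma$, i.e.\ $i\le p\le j$. The idea is to charge each distinct substring $Q$ to one such pair $(p, i)$ and then bound the number of pairs.

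For any fixed attractor position $p\in \Gamma$, the number of length-$\ell$ fragments $T[i\dd i+\ell-1]$ containing $p$ is at most $\ell$, because the starting index $i$ must satisfy $p-\ell+1 \le i \le p$. Hence the total number of pairs $(p,i)$ with $p\in\Gamma$ and $T[i\dd i+\ell-1]$ containing $p$ is at most $\gamma\cdot \ell$. Since distinct substrings of length $\ell$ are charged to distinct such pairs, $T(\ell)\le \gamma\cdot \ell$, so $T(\ell)/\ell\le \gamma$. Taking the maximum over $\ell$ gives $\delta\le\gamma$, as claimed.

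There is no real obstacle here: the argument is a direct counting application of the definition of an attractor, analogous to the bound of \citeN{RRRS13} for $d_\ell(w)/\ell$. The only thing to be slightly careful about is that each distinct length-$\ell$ substring is counted at least once (which is guaranteed by the attractor property) and that each occurrence it is counted for is a legitimate fragment of $T$ (which trivially holds).
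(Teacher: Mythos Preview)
Your proof is correct and follows essentially the same counting argument as the paper: every distinct length-$\ell$ substring has an occurrence covering some attractor position, and there are at most $\gamma\ell$ such occurrences. The paper states this in one sentence, while you spell out the charging argument explicitly, but the underlying idea is identical.
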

\begin{proof}
Since every length-$\ell$ substring of $T$ must have a copy containing
an attractor position, it follows that there are at most $\ell \cdot \gamma$ 
distinct such substrings, that is, $T(\ell)/\ell \le \gamma$ for all $\ell$. 
\end{proof}

\begin{lemma}\label{lem:compute delta}
Measure $\delta$ can be computed in $O(n)$ time and space from $T[1\dd n]$.
\end{lemma}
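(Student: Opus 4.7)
The plan is to reduce the computation of $\delta$ to a linear scan over the suffix array and LCP array of $T$. First I would build the suffix array $SA[1\dd n]$ and the LCP array $\LCP[1\dd n]$ of $T$ in $O(n)$ time and space, using any of the standard linear-time algorithms for integer alphabets of polynomial size (here we may assume $\Sigma\sub[1\dd n^{O(1)}]$, as symbols beyond that range can be remapped in $O(n)$ time by sorting).

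Next I would use the well-known characterization of distinct substrings via the suffix array: a suffix $T[SA[i]\dd n]$ contributes a \emph{new} length-$\ell$ substring (not shared with a lexicographically smaller suffix) if and only if $\LCP[i] < \ell \le n - SA[i] + 1$ (with the convention $\LCP[1]=0$). Summing over $i$, this yields
\[
T(\ell) \;=\; \bigl|\{i \in [1\dd n] : \LCP[i] < \ell \le n-SA[i]+1\}\bigr|.
\]
Hence $T(\ell)$ is a sum of $n$ indicator functions, each nonzero on an interval of $\ell$-values. I would compute $T(\ell)$ for every $\ell\in[1\dd n]$ using a difference array $D[1\dd n+1]$: for each $i\in[1\dd n]$, increment $D[\LCP[i]+1]$ by $1$ and decrement $D[\min(n-SA[i]+2,\,n+1)]$ by $1$; then $T(\ell)$ is the prefix sum $\sum_{k\le \ell} D[k]$. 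This takes $O(n)$ time and space.

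Finally, I would scan $\ell$ from $1$ to $n$ maintaining the running prefix sum $T(\ell)$ and keeping track of the maximum of $T(\ell)/\ell$. To avoid floating-point issues, the comparison between candidate maxima $T(\ell)/\ell$ and $T(\ell')/\ell'$ can be done via the cross-multiplication $T(\ell)\cdot \ell'$ versus $T(\ell')\cdot \ell$, which fits in $O(1)$ machine words since $T(\ell),\ell\le n$. The overall running time and space are $O(n)$.

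No step here is really an obstacle: the only subtlety is handling the boundary case $\ell > n-SA[i]+1$ (where the suffix is too short to contain a length-$\ell$ prefix), which is dealt with by the upper endpoint of the interval above, together with agreeing that $T(\ell)=0$ for $\ell$ exceeding the longest suffix length. The correctness of the formula for $T(\ell)$ is standard: among all suffixes sharing a common length-$\ell$ prefix (and hence forming a contiguous run in $SA$), exactly the first one satisfies $\LCP[i] < \ell$, so each distinct length-$\ell$ substring is counted once.
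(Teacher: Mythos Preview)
Your proof is correct and complete. It differs from the paper's argument in the choice of data structure: the paper builds the suffix tree of $T$, traverses its nodes (adding $1$ to $T(\ell)$ at each leaf of string depth $\ell$ and subtracting $k-1$ from $T(\ell')$ at each internal node of string depth $\ell'$ with $k$ children), and then takes a suffix sum so that $T(\ell)$ ends up counting the implicit and explicit suffix-tree nodes at string depth $\ell$. You instead use the suffix array and LCP array together with the standard interval characterization of distinct substrings, accumulating via a difference array. Both routes are linear and rest on the same underlying fact (the number of distinct length-$\ell$ substrings equals the number of depth-$\ell$ nodes in the suffix tree, equivalently the number of SA positions $i$ with $\LCP[i]<\ell\le |T[SA[i]\dd n]|$). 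Your version is arguably more elementary to implement and avoids tree traversal; the paper's version is a direct read-off from the suffix-tree structure. Your explicit handling of the comparison $T(\ell)/\ell$ via cross-multiplication is a nice touch the paper leaves implicit.
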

\begin{proof}
Computing $\delta$ boils down to computing
$T(\ell)$ for all $1 \le \ell \le n$. This is easily computed from a suffix
tree on $T$ \cite{DBLP:conf/focs/Weiner73} (which is built in $O(n)$ time). We first initialize
all the counters $T(\ell)$ at zero. Then we traverse the suffix tree:
for each leaf with string depth $\ell$ we add 1 to $T(\ell)$, and for each 
non-root internal node with $k$ children and string depth $\ell'$ we subtract 
$k-1$ from $T(\ell')$. Finally, for all the $\ell$ values, from $n-1$ to $1$,
we add $T(\ell+1)$ to $T(\ell)$. Thus, the leaves count the unique substrings
they represent, and the latter step accumulates the leaves descending from
each internal node. The value subtracted at internal nodes accounts for the
fact that their $k$ distinct children should count only once toward their
parent.
\end{proof}

We now show that $\delta$ can be used as a replacement of $\gamma$ to build
the grammar.

\begin{theorem} \label{thm:rlcfg2}
Let $T[1\dd n]$ have a minimum attractor of size $\gamma$. Then we can build a
locally-balanced locally-consistent RLCFG of size 
$g = O(\gamma\log(n/\gamma))$ and height $O(\log n)$  that generates (only) $T$ in $O(n)$
expected time and $O(n)$ working space, without knowing $\gamma$.
\end{theorem}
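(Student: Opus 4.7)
The plan is to adapt the proof of \Cref{thm:rlcfg} so that it uses the efficiently computable measure $\delta$ in place of the unknown $\gamma$. First I would apply \Cref{lem:compute delta} to obtain $\delta$ in $O(n)$ time and $O(n)$ space, and I would rely on \Cref{lem:delta} ($\delta\le\gamma$) together with the monotonicity of $x\mapsto x\log(n/x)$ on $[1,n/e]$ to reduce a size bound of the form $O(\delta\log(n/\delta))$ to the desired $O(\gamma\log(n/\gamma))$; the degenerate regime $\gamma=\Omega(n/e)$ is handled separately by observing that a grammar of trivial size $O(n)$ is already within the bound.

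I would then execute the same randomized parsing as in Sections~\ref{sec:ec}--\ref{sec:attractor}: at each round $r$, draw a uniformly random permutation $\pi_r$, locate the local minima of $\hat{T}_r$, and introduce nonterminals for each distinct run and each distinct block. Since I do not know $\gamma$, I cannot terminate at round $\log(n/\gamma)$, so I continue for $O(\log n)$ rounds until $n_r=1$, which accounts for the weaker height bound. The extra rounds $r>\log(n/\gamma)$ contribute at most $\sum_{r>\log(n/\gamma)} n_r = O(\gamma)$ to the grammar, since $n_r\le n/2^r$ by \Cref{obs:length2}; this excess is absorbed in the $O(\gamma\log(n/\gamma))$ target.

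The main obstacle is the Las Vegas step inside \Cref{lem:linear}, which requires rejecting permutations whose contribution exceeds a constant multiple of $\gamma$. My proposal is to use a doubling threshold seeded with $\delta$: at round~$0$, try the threshold $c\delta$ and, after a bounded number of rejections, double it. Because $\gamma\ge\delta$ and \Cref{lem:linear} guarantees that a constant fraction of permutations yields a round contribution of $O(\gamma)$, the doubling halts at some value $\tau=\Theta(\gamma)$ after $O(\log(\gamma/\delta))$ steps. The same $\tau$ then serves for all subsequent rounds because $\gamma_r\le O(\gamma)$ holds throughout the construction. The delicate point is keeping the expected running time at $O(n)$ despite the doublings; I would argue that the total parsing work of the rejected permutations in round~$0$ is geometrically dominated by the final successful attempt (so that round~$0$ costs $O(n)$ in expectation) and that, once the threshold has been calibrated, every subsequent round succeeds in an expected constant number of attempts, so its cost is $O(n_r)$.

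Putting everything together, the per-round grammar contribution is $O(\tau)=O(\gamma)$, giving total size $O(\gamma\log(n/\gamma))+O(\gamma)=O(\gamma\log(n/\gamma))$; the expected construction time is $O(n)$, since round costs sum to $O(\sum_r n_r)=O(n)$; and the working space is $O(n)$, dominated by the intermediate texts, the accumulated rules, and the hash tables used to deduplicate blocks and runs.
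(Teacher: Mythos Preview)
Your high-level plan (compute $\delta$, run $\lfloor\log n\rfloor$ rounds, observe that the rounds past $\log(n/\gamma)$ contribute only $O(\gamma)$) matches the paper. The difference---and the gap---lies in how you replace the threshold $2c\gamma$ in the Las Vegas step of \Cref{lem:linear}.

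Your doubling scheme does not achieve $O(n)$ expected time. Each rejected attempt at round~$0$ parses the full text $T_0$ and costs $\Theta(n)$; these costs are all of the same order, not ``geometrically dominated by the final successful attempt''. To reach a threshold $\tau=\Theta(\gamma)$ starting from $c\delta$ you must double $\Theta(\log(\gamma/\delta))$ times, each time making a bounded number of $\Theta(n)$-cost attempts, so round~$0$ alone costs $\Theta(n\log(\gamma/\delta))$ in expectation. Since nothing prevents $\gamma/\delta$ from being large, the claimed $O(n)$ bound does not follow. (A secondary issue: you never argue that the doubling \emph{stops} at $\tau=O(\gamma)$; once the threshold exceeds $2c\gamma$ each attempt succeeds with probability $\ge\frac12$, but you could overshoot by further doublings during the unlucky trials, which would inflate the per-round size bound.)

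The paper sidesteps all of this by proving a \emph{stronger} expectation bound: it shows directly that the total length of the distinct blocks formed in a round is $O(\delta)$ in expectation, not merely $O(\gamma)$. The argument is different from that of \Cref{lem:linear}: instead of summing over attractor positions, it sums over all distinct substrings of $\hat{T}$ of each length $\ell+3$ (of which there are at most $(\ell+3)\delta$ by the very definition of $\delta$) weighted by the probability $\le 2/(\ell/2)!$ that the central length-$\ell$ part becomes a block; the series $\sum_{\ell\ge 2}(\ell+3)\ell\cdot 2/(\ell/2)!$ converges, giving expectation $O(\delta)$. With this in hand, the threshold $2c\delta$ works \emph{immediately}, no doubling is needed, and a constant expected number of attempts per round yields the $O(n)$ total time. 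The paper also recomputes $\delta_r$ on each $T_r$ (rather than calibrating a single $\tau$ at round~$0$), using $\delta_r\le\gamma_r\le 5\gamma$ to bound the size; this avoids reasoning about how a threshold calibrated at round~$0$ behaves in later rounds.
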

\begin{proof}
We carry out $\log n$ iterations instead of $\log(n/\gamma)$, and the grammar 
is still of size $O(\gamma\log(n/\gamma))$; the extra iterations add only
$O(\gamma)$ to the size. 


The only other place where we need to know $\gamma$ is when applying 
Lemma~\ref{lem:linear}, to check that the total length of the distinct blocks 
resulting from the parsing, using a randomly chosen permutation, is at most
$2c \cdot \gamma$.
A workaround to this problem is to use measure $\delta \le \gamma$, which
(unlike $\gamma$) can be computed efficiently.

To obtain a bound on the sum of the lengths of the blocks formed, we 
add up all the possible substrings multiplied by the probability that they 
become a block.
Consider a substring $\hat{S}[1\dd \ell+3]$ of $\hat{T}$. Whether $\hat{S}$ occurs as a mapped block  
extension, that is, whether it occurs with $\hat{S}' = \hat{S}[3\dd \ell+2]$ being a block, depends only on $\pi$ and 
$\hat{S}$, because by Lemma~\ref{lem:extend}, if $\hat{S}'$ forms a block inside one occurrence of $\hat{S}$,
it must form a block inside each occurrence of $\hat{S}$. Let us now consider
the probability that $\hat{S}'$ forms a block. As in the proof of 
Lemma~\ref{lem:linear}, $\hat{S}[3\dd \ell/2+2]$ must have an increasing sequence of $\pi$-values or $\hat{S}[\ell/2+3\dd \ell+2]$ must have a decreasing sequence of $\pi$-values, and this holds for at most two out of $(\ell/2)!$ permutations $\pi$.

Therefore, any distinct substring of length $\ell+3$ (of which there are 
$T(\ell+3) \le (\ell+3)\delta$) contributes a block of length $\ell$ to the 
grammar size with probability at most $2 / (\ell/2)!$ (note that we may be 
counting the same block several times within different block extensions).
The total expected contribution to the grammar size is therefore
$\sum_{\ell \ge 2} (\ell+3)\delta \cdot \ell \cdot 2 / (\ell/2)! = O(\delta)$.

As in the proof of Lemma~\ref{lem:linear}, given the expectation of the form
$c \cdot \delta$, we can try out permutations until the total contribution to
the grammar size is at most $2c \cdot \delta$. After $O(1)$ attempts, in
expectation, we obtain a grammar of size $O(\delta) \subseteq O(\gamma)$ without
knowing $\gamma$.

We repeat the same process for each text $T_r$, since we know from Theorem~\ref{thm:rlcfg} that every $T_r$ has an attractor of size at most $5\gamma$, so the value $\delta_r$ we compute on $T_r$ satisfies $\delta_r \le 5\gamma$. The sizes of all texts $T_r$ add up to $O(n)$.
\end{proof}



\section{An Index Based on our Grammar} \label{sec:index}

Let $G$ be a locally-balanced RLCFG of $r$ rules and size $g \ge r$ on text 
$T[1\dd n]$, formed with the procedure of Section~\ref{sec:attractor}, thus
$g = O(\gamma\log(n/\gamma))$ with $\gamma$ being the smallest size of an attractor of $T$. We show how to build an index of size $O(g)$ 
that locates the $occ$ occurrences of a pattern $P[1\dd m]$ in time 
$O(m+(occ+1) \log^\epsilon n)$.

We make use of the parse tree and the ``grammar tree'' \cite{CNspire12} of $G$,
where the grammar tree is derived from the parse tree. We extend the concept of
grammar trees to RLCFGs.

\begin{definition} \label{def:grammar-tree} For CFGs,
the {\em grammar tree} is obtained by pruning the parse tree: all but the 
leftmost occurrence of each nonterminal is converted into a leaf and its 
subtree is pruned. Then the grammar tree has exactly one internal node per
distinct nonterminal and the total number of nodes is $g+1$: $r$ internal nodes 
and $g+1-r$ leaves. 
For RLCFGs, we treat rules $A \rightarrow A_1^s$ as $A \rightarrow A_1
A_1^{[s-1]}$, 
where the node labeled $A_1^{[s-1]}$ is always a leaf ($A_1$ 
may also be a leaf, if it is not the leftmost occurrence of $A_1$). Since we
define the size of $A_1^s$ as $2$, the grammar tree is still of size $g+1$.
\end{definition}

We will identify a nonterminal $A$ with the only internal 
grammar tree node labeled $A$. When there is no confusion on the referred node,
we will also identify terminal symbols $a$ with grammar tree leaves.

We extend an existing approach to grammar indexing \cite{CNspire12} to the
case of our RLCFGs. We start by classifying the occurrences in
$T$ of a pattern $P[1\dd m]$ into primary and secondary.

\begin{definition} \label{def:occs}
The leaves of the grammar tree induce a partition of $T$ into $f = g+1-r$ {\em 
phrases}. An occurrence of $P[1\dd m]$ at $T[t\dd t+m-1]$ is {\em 
primary} if the lowest grammar tree node deriving a range of $T$ that contains
$T[t\dd t+m-1]$ is internal (or, equivalently, the occurrence crosses the 
boundary between two phrases); otherwise it is {\em secondary}.
\end{definition}

\subsection{Finding the primary occurrences}
\label{sec:primary}

Let nonterminal $A$ be the lowest (internal) grammar tree node that
covers a primary occurrence $T[t\dd t+m-1]$ of $P[1\dd m]$. Then, if $A
\rightarrow A_1 \cdots A_s$, there exists some $i\in [1\dd s-1]$ and
$q\in [1\dd m-1]$ such that (1) a suffix of $\exp(A_i)$ matches $P[1\dd q]$, and
(2) a prefix of $\exp(A_{i+1}) \cdots \exp(A_s)$ matches $P[q+1\dd m]$.
The idea is to index all the pairs $(\exp(A_i)^{rev},\exp(A_{i+1}) \cdots 
\exp(A_s))$ and find those where the first and second component are prefixed by
$(P[1\dd q])^{rev}$ and $P[q+1\dd m]$, respectively. Note that there is exactly
one such pair per border between two consecutive phrases (or leaves in the
grammar tree). 

\begin{definition} \label{def:locus}
Let $v$ be the lowest 
(internal) grammar tree node that covers a primary occurrence $T[t\dd t+m-1]$ of $P$,
$[t\dd t+m-1] \subseteq proj(v)$. Let $v_i$ be the leftmost child 
of $v$ that overlaps $T[t\dd t+m-1]$, $[t\dd t+m-1] \cap proj(v_i) \not=
\emptyset$. We say that node $v$ is the {\em parent} of the primary occurrence 
$T[t\dd t+m-1]$ of $P$, and node $v_i$ is its {\em locus}.
\end{definition}

We build a multiset $\mathcal{G}$ of $f-1=g-r$ string pairs
containing, for every rule $A\to A_1\cdots A_s$,
the pairs $(\exp(A_i)^{rev}, \exp(A_{i+1})\cdots \exp(A_s))$ for $1\le i < s$.
The $i$th pair is associated with the $i$th child of the (unique) $A$-labeled internal node of the grammar tree.
The multisets $\mathcal{X}$ and $\mathcal{Y}$ are then defined as projections of $\mathcal{G}$ to the first and second coordinate,
respectively. We lexicographically sort these multisets, and represent each pair $(X,Y)\in \mathcal{G}$ by the pair $(x,y)$
of the ranks of $X\in\mathcal{X}$ and $Y\in \mathcal{Y}$, respectively.
As a result, $\mathcal{G}$ can be interpreted as a subset of the two-dimensional integer grid $[1\dd g-r] \times [1\dd g-r]$.

Standard solutions~\cite{CNspire12} to find the primary occurrences consider the partitions $P[1\dd q] \cdot 
P[q+1\dd m]$ for $1\le q<m$. For each such partition, we search for 
$(P[1\dd q])^{rev}$ in $\mathcal{X}$ to find the range $[x_1\dd x_2]$ of symbols
$A_i$ whose suffix matches $P[1\dd q]$, search for $P[q+1\dd m]$ in $\mathcal{Y}$ 
to find the range $[y_1\dd y_2]$ of rule suffixes $A_{i+1}\cdots A_s$ whose
prefix matches $P[q+1\dd m]$, and finally search the two-dimensional grid for 
all the points in the range $[x_1\dd x_2] \times [y_1\dd y_2]$. This retrieves all 
the primary occurrences whose leftmost intersected phrase ends with $P[1\dd q]$. 

From the locus $A_i$ associated with each point $(x,y)$ found, and knowing $q$,
we have sufficient information to report the position in $T$ of this primary 
occurrence and all of its associated secondary occurrences; we describe this
process in Section~\ref{sec:secondary}.

This arrangement follows previous strategies to index CFGs \cite{CNspire12}.
To include rules $A \rightarrow A_1^s$, we just index the pair $(\exp(A_1)^{rev},
\exp(A_1)^{s-1})$, which corresponds precisely to treating the rule as
$A \rightarrow A_1 A_1^{[s-1]}$ to build the grammar tree. It is not necessary 
to index other positions of the rule, since their pairs will look like 
$(\exp(A_1)^{rev},\exp(A_1)^{s'})$ with $s' < s-1$, and if $P[q+1\dd m]$ matches a 
prefix of $\exp(A_1)^{s'}$, it will also match a prefix of $\exp(A_1)^{s-1}$. The 
other occurrences inside $\exp(A_1)^{s-1}$ will be dealt with as secondary 
occurrences.

Finally note
that, by definition, a pattern $P$ of length $m=1$ has no primary
occurrences. We can, however, find all of its occurrences at the end of a
phrase boundary by searching for $P[1\dd 1]^{rev} = P[1]$ in $\mathcal{X}$, to 
find $[x_1\dd x_2]$, and assuming $[y_1\dd y_2] = [1\dd g-r]$. We can only miss the
end of the last phrase boundary, but this is symbol \$, which (just as \#)
is not present in search patterns. We can just treat these points $(x,y)$ as the
primary occurrences of $P$, and report them and their associated secondary
occurrences with the same mechanism we will describe for general patterns
in Section~\ref{sec:secondary}.

A geometric data structure can represent our grid of size $(g-r)\times(g-r)$
with $g-r$ points in $O(g-r) \subseteq O(g)$ space, while performing each range
search in time $O(\log^\epsilon g)$ plus $O(\log^\epsilon g)$ per primary 
occurrence found, for any constant $\epsilon>0$ \cite{DBLP:conf/compgeom/ChanLP11}. 

\subsection{Parsing the pattern} \label{sec:pattern}

In most previous work on grammar-based indexes, all the $m-1$ partitions
$P = P[1\dd q] \cdot \allowbreak P[q+1\dd m]$ are tried out. We now show that, in our locally-consistent parsing, the number of positions that must
be tried is reduced to $O(\log m)$.

\begin{lemma} \label{lem:pattern}
Using our grammar of Section~\ref{sec:attractor}, there are only $O(\log m)$ 
positions $q$ yielding primary occurrences of $P[1\dd m]$.
These positions belong to $M(P)+1$ (see \Cref{def:m}).
\end{lemma}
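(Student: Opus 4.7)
The plan is to show that any split $q$ yielding a primary occurrence satisfies $q-1\in M(P)$, so the $O(\log m)$ bound follows from \Cref{lem:m}\eqref{it:msize}.

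Fix a primary occurrence $T[t\dd t+m-1]$ with parent $v$ and locus $v_i$; the suffix of $\exp(A_i)$ matching $P[1\dd q]$ ends at $T[t+q-1]$, so the split $q$ is precisely the offset of $v_i$'s right endpoint. I will distinguish cases by the type of rule expanding $v$. If $v$ is a run-length nonterminal $A\to A_1^s$ (a symbol of some $\hat{T}_r$ whose range is a single level-$r$ run), its child boundaries are $\mathcal{P}_r$-boundaries, so $q-1\in B_r(t,t+m-1)$. If $v$ is a block nonterminal $A\to A_1\cdots A_s$ with $A_i\in \hat{T}_r$ (a symbol of some $T_{r+1}$), its child boundaries are $\hat{\mathcal{P}}_r$-boundaries, so $q-1\in \hat{B}_r(t,t+m-1)$. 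The root rule $S\to T_{r^*}$, whose children are $T_{r^*}$-symbols, falls under the first case with $r=r^*$.

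The key step is to use that $v_i$ is the \emph{leftmost} child of $v$ overlapping $[t,t+m-1]$ to force $q-1$ to equal the \emph{minimum} of the relevant set. Within $v$'s range $[a,b]$ (where $a\le t$ and $b\ge t+m-1$), every boundary at the level of $v$'s children is in fact a child boundary of $v$: in the run case this holds because a level-$r$ run contains no internal $\hat{\mathcal{P}}_r$-boundary, and in the block case because a $T_{r+1}$-block contains no internal $\mathcal{P}_{r+1}$-boundary. Suppose for contradiction, in the run case, that $q-1>\min B_r(t,t+m-1)$; then $t+\min B_r(t,t+m-1)\in[t,t+m-1]\subseteq[a,b]$ is a child boundary of $v$ strictly to the left of $v_i$'s right endpoint, so the corresponding earlier child $v_j$ (with $j<i$) has right endpoint $\ge t$ and overlaps $[t,t+m-1]$, contradicting the leftmost property of $v_i$. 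Hence $q-1=\min B_r(t,t+m-1)\in M(P)$ by \Cref{lem:m}\eqref{it:mleft}. The block case is symmetric, forcing $q-1=\min \hat{B}_r(t,t+m-1)\in M(P)$ by \Cref{lem:m}\eqref{it:mleftprim}.

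The main obstacle to handle carefully is verifying that the candidate earlier boundary genuinely is a child boundary of $v$ lying strictly to the left of $v_i$'s right endpoint and strictly inside $v$'s range, rather than coinciding with $v$'s endpoints. The ``strictly to the left'' concern reduces to the strict inequality assumed for contradiction; ``strictly inside $[a,b]$'' follows from $\min B_r(t,t+m-1)\ge 0$ together with $a\le t$ and $b\ge t+m-1>t+q-1>t+\min B_r(t,t+m-1)$. With these pieces in place, \Cref{def:m} together with \Cref{lem:m}\eqref{it:msize} gives $|M(P)+1|=|M(P)|\le 3\lceil\log m\rceil$, concluding the proof.
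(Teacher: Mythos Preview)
Your proof is correct and follows essentially the same approach as the paper: both argue that for a primary occurrence with parent $A$, the split position $q-1$ equals $\min B_r(t,t+m-1)$ (run-length case) or $\min \hat{B}_r(t,t+m-1)$ (block case), and then invoke \Cref{lem:m}. Your version is more explicit in justifying the ``minimum'' claim via the leftmost-locus property and in handling the root rule separately, whereas the paper simply asserts these equalities; the underlying argument is the same.
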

\begin{proof}
Let $A$ be the parent of a primary occurrence $T[t\dd t+m-1]$, and let $r$ be 
the round where $A$ is formed. There are two possibilities:
\begin{enumerate}
\item $A \rightarrow A_1 \cdots A_s$ is a block-forming rule, and for some
$1 \le i < s$, a suffix of $\exp(A_i)$ matches $P[1\dd q]$, for some $1 \le q <
m$. This means that $q-1 = \min  \hat{B}_{r-1}(t,t+m-1)$. 
\item $A \rightarrow A_1^s$ is a run-length nonterminal, and a suffix of
$\exp(A_1)$ matches $P[1\dd q]$, for some $1 \le q < m$. 
This means that $q-1 = \min B_{r}(t,t+m-1)$.
\end{enumerate}
In either case, $q\in M(P)+1$ by \Cref{lem:m}.
\end{proof}


In order to construct $M(P)$ using \Cref{def:m,def:bp},
we need to already have an occurrence of $P$, which is not feasible in our context.
Hence, we imagine parsing two texts, $T$ and $P^* = \#P\$$, simultaneously
using the permutations $\pi_r$ we choose for $T$ at each round $r$.
It is easy to verify that the results of \Cref{sec:lcp,sec:locality} remain valid
across substrings of both $T$ and $P^*$, because they do not depend on how the 
permutations are chosen. 

Hence, our goal is to parse $P^*$ at query time in order to build $M(P)$ using the occurrence
of $P$ in $P^*$. We now show how to implement this step in $O(m)$ time. 
To carry out the parsing, we must preserve the permutations $\pi_r$ of the 
alphabet used at each of the $O(\log n)$ rounds of the parsing of $T$,
so as to parse $P^*$ in the same way. The alphabets in each round are disjoint
because all the blocks are of length 2 at least. Therefore the total size of 
these permutations coincides with the total number of terminals and 
nonterminals in the grammar, thus by \Cref{lem:linear} and \Cref{thm:rlcfg} they require $O(\gamma)$ space per round and $O(g)$ space overall.

Let us describe the first round of the parsing. We first traverse $P^* = P^*_0$ 
left-to-right and identify the runs $a^\ell$. Those are sought in
a perfect hash table where we have stored all the first-round pairs $(a,\ell)$ 
existing in the text, and are replaced by their corresponding nonterminal
$A \rightarrow a^\ell$ (see below for the case where $a^\ell$ does not appear in the text).
The result of this pass is a new sequence $\hat{P^*} = \hat{P}^*_0$. We then traverse $\hat{P^*}$,
finding the local minima (and thus identifying the blocks) in $O(m)$ time. For this, we have stored the values $\pi(a)=\pi_0(a)$ associated with each terminal $a$ in another perfect hash table (for the nonterminals  $A \rightarrow a^\ell$ just created, we have $\pi(A)=\pi(a)$; recall \Cref{sec:lcp}).
To convert the identified blocks 
$A \rightarrow A_1 \cdots A_k$ into nonterminals for the next round, such tuples $(A_1 \cdots A_k)$ have been stored in yet another perfect hash table, from which the nonterminal $A$ is obtained.
This way, we can identify all the blocks in time $O(m)$, and proceed 
to the next round on the resulting sequence of nonterminals, $P^*_1$.
The size of the first two hash tables is proportional to the number of
terminals and nonterminals in the level, and the size of the tuples stores in
the third table is proportional to the right-hand-sides of the rules created
during the parsing. By \Cref{thm:rlcfg}, those sizes are $O(\gamma)$ per round and $O(g)$ added over all the rounds.

Since the grammar is locally balanced, $P^*$ is parsed in $O(\log m)$ 
iterations, where at the $r$th iteration we parse $P^*_{r-1}$ into a sequence of blocks whose 
total number is at most half of the preceding one, by 
Observation~\ref{obs:length2}. Since we can find the partition into blocks in 
linear time at any given level, the whole parsing takes time $O(m)$. Construction of the sets $B_r(P)$, $\hat{B}_r(P)$,
and $M(P)$ from \Cref{def:bp,def:m}, respectively, also takes $O(m)$ time.

Note that $P^*_r$ might contain blocks and runs that do not occur in $T_r$. By \Cref{lem:lcg}, if a block in $P^*_r$ is not among 
the leftmost 4 or rightmost 2 blocks, then it must also appear within any occurrence of $P$ in $T$, and as a result, the same must also be true for runs in $P^*_{r+1}$.
Consequently, if a block (or a run) is not among those 6 extreme ones yet it does not appear in the hash table, we can abandon the search. 
As for the $O(1)$ allowed new blocks (and runs), we gather them in order to consistently assign new nonterminals and (in case of blocks) arbitrary unused $\pi_r$-values. We then proceed normally with subsequent levels of the parsing. Note that the newly formed blocks cannot appear anymore since distinct levels use distinct symbols, so we do not attempt to insert them into the perfect hash tables.

\subsection{Searching for the pattern prefixes and suffixes}
\label{sec:ztrie}

As a result of the previous section, we need only search for $\tau = O(\log m)$
(reversed) prefixes and suffixes of $P$ in $\mathcal{X}$ or $\mathcal{Y}$,
respectively. In this section we show that the corresponding ranges $[x_1\dd x_2]$
and $[y_1\dd y_2]$ can be found in time $O(m + \tau\log^2 m) = O(m)$. We build on
the following result.

\begin{lemma}[cf.\ \cite{BBPV18,gagie2014lz77,GNP18}] \label{lemma: z-fast}
        Let $\mathcal S$ be a set of strings and assume we have a data
structure supporting extraction of any length-$\ell$ prefix of strings in
$\mathcal S$ in time $f_e(\ell)$ and computation of a given Karp--Rabin
signature $\kappa$ of any length-$\ell$ prefix of strings in $\mathcal S$ in time $f_h(\ell)$.
We can then build a data structure of $O(|\mathcal S|)$ words such that,
later, we can solve the following problem in $O(m + \tau( f_h(m) +\log m ) +
f_e(m))$ time: given a pattern $P[1\dd m]$ and $\tau>0$ suffixes
$Q_1,\dots,Q_\tau$ of $P$, find the ranges of strings in (the
lexicographically-sorted) $\mathcal S$ prefixed by $Q_1,\dots,Q_\tau$.
\end{lemma}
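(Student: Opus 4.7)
The plan is to build a z-fast trie \cite{BBPV18,gagie2014lz77,GNP18} over $\mathcal{S}$: each node stores the Karp--Rabin signature of its root-to-node path and is indexed in an auxiliary hash table keyed on $(\text{depth}, \text{signature})$ pairs, so that the node at a given depth with a given path signature (if any) can be retrieved in $O(1)$. The whole structure uses $O(|\mathcal{S}|)$ words.

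At query time I would first spend $O(m)$ to compute $\kappa(P[1\dd \ell])$ for every $\ell\in[1\dd m]$, so that afterwards $\kappa$ of any prefix of any $Q_i$ can be read in $O(1)$. For each query I would then run the standard z-fast blind search---a fat binary search on the exit depth in which each of the $O(\log m)$ steps is a constant-time hash-table lookup using a precomputed query-side signature. This produces a candidate exit node, and hence a candidate lexicographic range $[x_1^i \dd x_2^i]$ of leaves, in $O(\log m)$ time per query.

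Each candidate still has to be verified to rule out Karp--Rabin collisions or the query exiting in the interior of a Patricia edge. I would spend $f_h(m)$ per query to compute the signature of the length-$|Q_i|$ prefix of a representative string in the candidate range and compare it to $\kappa(Q_i)$; any mismatch means the range is empty. Finally, one explicit extraction of $f_e(m)$ time suffices: compare the length-$|Q_{\max}|$ prefix of a candidate string for the longest query $Q_{\max}$ against $Q_{\max}$ character-by-character. If it matches, $Q_{\max}$---and hence every $Q_i$, which is a suffix of $Q_{\max}$---is a substring of $T$, and the no-collision guarantee of $\langle \kappa_1, \kappa_2 \rangle$ on substrings of $T$ upgrades every hash-equal candidate to a certified match.

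The principal obstacle is keeping the explicit extraction cost down to a single $f_e(m)$ term despite having $\tau$ separate candidates to verify. The crucial observation is that $Q_1,\dots,Q_\tau$ are nested suffixes of a common pattern $P$: verifying the longest one explicitly once anchors the trust globally, since it places all the queries inside the collision-free regime of $\kappa$ and therefore makes all $\tau$ hash comparisons definitive.
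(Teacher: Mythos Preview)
Your overall architecture---z-fast trie, $O(m)$ precomputation of pattern-side signatures, $O(\log m)$ blind search per query, one $f_h(m)$ signature check per candidate, then a single $f_e(m)$ extraction---matches the paper's. The gap is in the last step.

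You argue: extract the candidate $t_{\max}$ for the longest query $Q_{\max}$ and compare; if they match, every $Q_i$ is a substring of $T$, so the collision-free guarantee of $\kappa'$ upgrades all $\tau$ hash-equalities to real equalities. That is correct \emph{when the extraction succeeds}. But you do not handle the case where $t_{\max}\neq Q_{\max}$ (or where $Q_{\max}$ has no candidate at all). In that case $Q_{\max}$'s range is empty, yet shorter $Q_i$'s may still have non-empty ranges. You cannot certify them: you have no evidence that $Q_i$ is a substring of $T$, so the hash equality $\kappa'(Q_i)=\kappa'(t_i)$ may be a collision, and you cannot afford a second extraction.

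The paper inserts an intermediate step that your plan is missing: before the extraction, it checks \emph{using hashes only} that the candidate strings $t_1,\dots,t_j$ (sorted by length) form a nested chain of suffixes, i.e., that $\kappa'(t_a)=\kappa'(t_b[|t_b|-|t_a|+1\dd |t_b|])$ for consecutive surviving pairs $a<b$. Any $t_b$ failing this test is discarded, since then $Q_b\neq t_b$ follows from $\kappa'(Q_a)=\kappa'(t_a)$. After this pass, all surviving $t_i$ are genuine suffixes of the longest surviving $t_j$. Now a single extraction of $t_j$ and character-by-character comparison with $Q_j$ yields their longest common suffix length $s$: every surviving $t_i$ with $|t_i|\le s$ equals the corresponding suffix of $Q_j$, namely $Q_i$, and every $t_i$ with $|t_i|>s$ is discarded. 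This is what lets one $f_e(m)$ suffice even when the longest candidate fails to match.
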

\begin{proof}
The proof simplifies a lemma from \citeN[Lem 5.2]{GNP18}.

        First, we require a Karp--Rabin function $\kappa$ that is collision-free between equal-length text substrings whose length is a power of two. We can find such a function at index construction time in $O(n\log n)$ expected time and $O(n)$ space \cite{DBLP:journals/jda/BilleGSV14}.
        We extend the collision-free property to pairs of equal-letter strings of arbitrary length by switching to the hash function $\kappa'$ defined as $\kappa'(T[i\dd i+\ell-1]) = \langle \kappa(T[i\dd i+2^{\lfloor \log \ell \rfloor}-1]), \kappa(T[i+\ell-2^{\lfloor \log \ell \rfloor}\dd i+\ell-1]), \ell \rangle$.

        Z-fast tries \cite[Sec.~H.2]{BBPV18} solve the \emph{weak} part of the
        lemma in $O(m\log(\sigma)/w + \tau\log m)$ time. They have the same topology of
        a compact trie on $\mathcal{S}$, but use function $\kappa'$ to find a candidate
        node for $Q_i$ in time $O(\log|Q_i|)=O(\log m)$.
        We compute the $\kappa'$-signatures of all pattern suffixes $Q_1, 
        \dots, Q_\tau$ in $O(m)$ time, and then search the z-fast trie
        for the $\tau$ suffixes $Q_i$ in time $O(\tau\log m)$.

        By \emph{weak} we mean that
        the
        returned answer for each suffix $Q_i$ is not
        guaranteed to be correct if $Q_i$ does not prefix any string in $\mathcal S$:
        we could therefore have false positives among the answers, though false
        negatives cannot occur. A procedure for discarding false positives
\cite{gagie2014lz77} requires extracting substrings and their signatures from $\mathcal S$. We describe and simplify this strategy  in detail in order to analyze its time complexity in our scenario.

        Let $Q_1,\dots, Q_j$ be the pattern suffixes for which the z-fast trie found a
        candidate node. Order the pattern suffixes so that $|Q_1| < \dots < |Q_j|$, that is, $Q_i$ is a suffix of $Q_{i'}$ whenever $i<i'$. In addition, let $v_1, \dots, v_j$ be the candidate nodes (explicit or implicit) of the z-fast trie such that all substrings below them are prefixed by $Q_1, \dots, Q_j$ (modulo false positives), respectively, and let $t_i = \mathit{string}(v_i)$ be the substring read from the root of the trie to $v_i$. Our goal is to discard all nodes $v_k$ such that $t_k \neq Q_k$.
        
        Note that it is easy to check (in $O(\tau\cdot f_h(m))$ time) that $\kappa'(Q_i) = \kappa'(t_i)$ for all $i=1, \dots, j$. If a string $t_i$ does not pass this test, then clearly $v_i$ needs to be discarded because it must be the case that $Q_i \neq t_i$. We can thus safely assume that $\kappa'(Q_i) = \kappa'(t_i)$ for all $i=1, \dots, j$.
        
        As a second simplification, we note that it is also easy to check (again in $O(\tau\cdot f_h(m))$ time) that $t_a$ is a suffix of $t_b$ whenever $1 \leq a<b \leq j$. Starting from $a=1$ and $b=2$, we check that $\kappa'(t_a) = \kappa'(t_b[|t_b|-|t_a|+1 \dd |t_b|])$. If the test succeeds, we know for sure that $t_a$ is a suffix of $t_b$, since $\kappa'$ is collision-free among text substrings: we increment $b \leftarrow b+1$, set $a$ to the next index such that $v_a$ was not discarded (at the beginning of the procedure, no $v_a$ has been discarded), and repeat. Otherwise, we clearly need to discard $v_b$ since $\kappa'(Q_b[|t_b|-|t_a|+1 \dd |t_b|]) = \kappa'(Q_a) = \kappa'(t_a) \neq \kappa'(t_b[|t_b|-|t_a|+1 .. |t_b|])$, therefore $Q_b \neq t_b$. Then, we discard $v_b$ and increment $b\leftarrow b+1$. From now on we can thus safely assume that $t_a$ is a suffix of $t_b$ whenever $1 \leq a<b \leq j$.
        
        The last step is to compare explicitly $t_j$ and $Q_j$ in $O(f_e(m))$ time. Since we established that (i) $t_a$ is a suffix of $t_b$ whenever $1 \leq a<b \leq j$, (ii) by definition, $Q_a$ is a suffix of $Q_b$ whenever $1 \leq a<b \leq j$, and (iii) $|Q_i| = |t_i|$ for all $i=1, \dots, j$ (since function $\kappa'$ includes the string's length and we know that $\kappa'(Q_i) = \kappa'(t_i)$ for all $i=1, \dots, j$), checking $t_j = Q_j$ is enough to establish that $t_i = Q_i$ for all $i=1, \dots, j$. 
        However, $t_j \neq Q_j$ is not enough to discard all $v_i$: it could also be the case that only a proper suffix of $t_j$ matches the corresponding suffix of $Q_j$, and some $v_i$ pass the test. We therefore compute the longest common suffix $s$ between $t_j$ and $Q_j$, and discard only those $v_i$ such that $|t_i|>s$.

        To analyze the running time,
        note that we compute $\kappa'$-signatures of strings that are always suffixes of prefixes of length at most $m$ of strings in $\mathcal S$ (because our candidate nodes $v_1, \dots, v_j$ are always at depth at most $m$).
        By definition, to retrieve $\kappa'(t_i)$ we need to compute the two $\kappa$-signatures of the length-$2^e$ prefix and suffix of $t_i$, for some $e\leq \log |t_i| \leq \log m$, $1\leq i \leq j$.
        Computing the required $\kappa'$-signatures reduces therefore to the problem of computing $\kappa$-signatures of suffixes of prefixes of length at most $m$ of strings in $\mathcal S$.
        Let $R' = t_b[|t_b|-s+1\dd |t_b|]$ be such a length-$s$ string of which we need to compute $\kappa(R')$. Then, $\kappa(R') = \kappa(t_b) - \kappa(t_b[1\dd |t_b|-s]) \cdot  c^{s}\ \rm{mod}\ \mu$.
        Both signatures on the right-hand side are prefixes of suffixes of length at most $m$ of strings in $\mathcal S$.
        The value $c^{s}\ \rm{mod}\ \mu$ can moreover be computed in $O(\log m)$ time using the fast exponentiation algorithm.
        It follows that, overall, computing the required $\kappa'$-signatures takes $O(f_h(m) + \log m)$ time per candidate node.
        For the last candidate, we extract the prefix $t_j$ of
        length at most $m$ ($O(f_e(m))$ time) of one of the strings in $\mathcal S$ and compare it with the longest
        candidate pattern suffix ($O(m)$ time). There are at most $\tau$ candidates, so the verification takes time $O(m + \tau\cdot (f_h(m)+\log m) + f_e(m))$. Added to the time to find the candidates in the z-fast trie, we obtain the claimed bounds.
\end{proof}

Therefore,
when $\mathcal{S}$ is $\mathcal{X}$ or $\mathcal{Y}$, we need to extract 
length-$\ell$ prefixes of reverse phrases (i.e., of some $\exp(A_i)^{rev}$) or 
prefixes of consecutive phrases (i.e., of some $\exp(A_{i+1})\cdots \exp(A_s)$)
in time $f_e(\ell)$. The next result implies that we can obtain 
$f_e(\ell)=O(\ell)$.

\begin{lemma}[{cf.\ \cite{DBLP:conf/dcc/GasieniecKPS05},
\cite[Sec.~4.3]{CNspire12}}]\label{lem:extract from rlcfg} Given a RLCFG
of size $g$, there exists a data structure of size $O(g)$ such that any prefix
or suffix of $\exp(A)$ can be obtained from any nonterminal $A$ in
real time.
\end{lemma}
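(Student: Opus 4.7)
The plan is to extend the standard prefix/suffix extraction machinery for CFGs (e.g., \cite[Sec.~4.3]{CNspire12}, building on \cite{DBLP:conf/dcc/GasieniecKPS05}) to our RLCFGs. I would store the grammar tree from \Cref{def:grammar-tree} with $O(1)$ fields per node (parent, previous/next sibling, first/last child), and for each nonterminal $A$ an additional pointer to the first terminal of $\exp(A)$ and one to the last terminal; both sets of pointers are computed in $O(g)$ time by one traversal of the grammar tree. The total space remains $O(g)$ since the grammar tree has $g+1$ nodes.

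For prefix extraction starting at $A$, I would simulate a left-to-right DFS of the parse tree rooted at $A$, emitting each leaf's terminal as it is visited. The current root-to-leaf path is kept on an explicit stack. Starting from the precomputed first terminal of $\exp(A)$, to produce the next output character we pop until reaching a node that has a right sibling and then descend along the leftmost path of that sibling to a terminal leaf, using the first-terminal pointers to initialize the descent. Each parse-tree edge inside the subtree of $A$ that actually appears in the grammar tree is traversed at most twice over the entire extraction (once descending, once ascending), so producing a prefix of length $\ell$ costs $O(\ell)$ time, matching the ``real-time'' guarantee required. Suffix extraction is entirely symmetric, starting from the last terminal of $\exp(A)$ and moving leftward using last-child and previous-sibling pointers.

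The only new ingredient for RLCFGs is handling rules $A\to A_1^s$, whose parse-tree node has $s$ identical children represented in the grammar tree by the single subtree of $A_1$ plus the extra leaf $A_1^{[s-1]}$. I would encode such a node implicitly by augmenting its stack entry with a counter $k\in[1\dd s]$ recording the current child index: when the DFS returns from the $k$-th copy of $A_1$, we increment $k$ and, if $k<s$, descend into $A_1$ again instead of ascending past the run-length node. The main subtlety is that the ``edge'' from a run-length node to $A_1$ is effectively traversed $s$ times rather than twice, so the standard charging of work to grammar-tree edges no longer directly yields $O(\ell)$. This is resolved by the observation that each descent into $A_1$ must subsequently emit $|\exp(A_1)|\ge 1$ output characters before another descent can be triggered, so the repeated descent cost is amortized against the produced output, preserving the $O(\ell)$ bound for emitting $\ell$ characters.
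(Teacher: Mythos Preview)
Your proposal has a genuine gap: the DFS-with-first-terminal-pointer scheme does not achieve the real-time guarantee, and the key structure of the paper's proof is missing. Consider the CFG $A_k\to A_{k-1}\,b$ (for $k\ge 2$), $A_1\to a\,b$, so $\exp(A_k)=ab^k$. Your first-terminal pointer emits `$a$' in $O(1)$, but at that moment the stack does not hold $A_{k-1},\ldots,A_1$; to emit the next symbol you must know that $a$'s right sibling is $b$ inside $A_1$, which forces either pushing the whole leftmost path ($\Theta(k)$ delay before the second symbol) or climbing grammar-tree parent pointers from the `$a$' leaf. The latter is unsound: grammar-tree parents reflect the \emph{first} occurrence of a nonterminal, not the current context (with $X\to A\,B$, $Y\to A\,C$, climbing from the leaf under $A$ reaches $X$, so extracting $\exp(Y)$ would output $\exp(A)\exp(B)$ instead of $\exp(A)\exp(C)$). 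Also, ``real time'' means $O(1)$ worst-case delay per symbol, not $O(\ell)$ amortized, and your accounting (``each parse-tree edge that appears in the grammar tree is traversed at most twice'') is incorrect: such an edge is re-traversed once per occurrence of its nonterminal in the parse tree.

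The paper instead builds a forest $T_G$ with one node per symbol, where the parent of $A$ is the leftmost symbol of $A$'s rule; the root above $A$ is the first terminal of $\exp(A)$, and the $A$-to-root path is exactly $A$'s leftmost parse-tree path. A constant-time \emph{level-ancestor} structure on $T_G$ gives $O(1)$ access to the depth-$d$ node on that path, so after emitting the first terminal one processes depths $d=2,3,\ldots$, at each depth recursing on the non-leftmost children of the rule there; tail-calling the rightmost child prevents a cascade of returns and yields $O(1)$ worst-case delay. Your counter trick for run-length rules $A\to A_1^s$ is the right idea and matches the paper's handling, but it must sit on top of this level-ancestor machinery, which your proposal lacks.
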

\begin{proof}
\citeN{DBLP:conf/dcc/GasieniecKPS05} show how to extract any prefix of any $\exp(A)$ 
in a CFG of size $g$ in Chomsky Normal Form, in real time, using a data 
structure of size $O(g)$. This was later extended to general CFGs 
\cite[Sec.~4.3]{CNspire12}. We now extend the result to RLCFGs.

Let us first consider prefixes.
Define a forest of tries $T_G$ with one node per distinct nonterminal or
terminal symbol. Let us identify symbols with nodes of $T_G$. Terminal symbols
are trie roots, and $A_1$ is the parent of $A$ in $T_G$ iff $A_1$ is the 
leftmost symbol in the rule that defines
$A$, that is, $A \rightarrow A_1\cdots$. For the rules $A \rightarrow A_1^s$, 
we also let $A_1$ be the parent of $A$. We augment $T_G$ to support
constant-time level ancestor queries \cite{BF04}, which return the ancestor
at a given depth of a given node. To extract $\ell$ symbols
of $\exp(A)$, we start with the node $A$ of $T_G$ and immediately return the
terminal $a$ associated with its trie root (found with a level ancestor query).
We now find the ancestor of $A$ at depth 2 (a child of the trie root). Let
$B$ be this node, with $B \rightarrow a B_2 \cdots B_s$. We recursively
extract the symbols of $\exp(B_2)$ until $\exp(B_s)$, stopping after emitting
$\ell$ symbols. If we obtain the whole $\exp(B)$ and still do not emit $\ell$
symbols, we go to the ancestor of $A$ at depth 3. Let $C$ be this node,
with $C \rightarrow B C_2 \cdots C_r$, then we continue with $\exp(C_2)$,
$\exp(C_3)$, and so on. At the top level of the recursion, we might finally
arrive at extracting symbols from $\exp(A_2)$, $\exp(A_3)$, and so on. 
In this process, when we have to obtain the
next symbols from a nonterminal $D \rightarrow E^s$, we treat it exactly as
$D \rightarrow E \cdots E$ of size $s$, that is, we extract $\exp(E)$ $s-1$
further times.

Overall, we output $\ell$ symbols in time $O(\ell)$. The extraction is not
yet real-time, however, because there may be several returns from the
recursion
between two symbols output. To ensure $O(1)$ time between two consecutive
symbols obtained, we avoid the recursive call for the rightmost child of each
nonterminal, and instead move to it directly.

Suffixes are analogous, and can be obtained in real-time in reverse order
by defining a similar tree $T_G'$ where $A_s$ is the parent of $A$ iff $A_s$ is
the rightmost symbol in the rule that defines $A$, $A \rightarrow \cdots A_s$.
For rules $A \rightarrow A_1^s$, $A_1$ is still the parent of $A$.
\end{proof}

By slightly extending the same structures, we can compute any required 
signature in time $f_h(\ell) = O(\log^2 \ell)$ in our grammars.

\begin{lemma} \label{lem:kr}
In the grammar of Section~\ref{sec:attractor}, we can compute Karp--Rabin 
signatures of prefixes of
length $\ell$ of strings in $\mathcal{X}$ or $\mathcal{Y}$ in time $f_h(\ell) = 
O(\log^2 \ell)$.
\end{lemma}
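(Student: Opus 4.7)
My plan is to store, for every nonterminal $A$, the pair $(|A|,\kappa(\exp(A)))$, occupying $O(g)$ total space; concatenations are then combined through the standard identity $\kappa(XY)=\kappa(X)+c^{|X|}\kappa(Y)\bmod\mu$, and $c^k\bmod\mu$ is produced in $O(\log k)$ time by fast exponentiation. For each rule $A\to A_1\cdots A_s$ I additionally precompute partial-length and partial-signature arrays over the children, which amount to $O(g)$ space in total.

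To answer a query for the length-$\ell$ prefix signature of $Y=\exp(A_{i+1})\cdots\exp(A_s)\in\mathcal{Y}$, I descend the grammar along the right boundary of the prefix. At each step, given a current nonterminal and remaining request, a binary search in the corresponding partial-length array locates, in $O(\log s)$ time, the unique child $A_j$ straddled by the boundary; the partial-signature array then gives, in $O(1)$, the signature of the fully-included siblings to its left, which I splice into the running signature using one fast exponentiation. I then recurse inside $A_j$ with a smaller remaining request. For a run-length rule $A\to B^s$, a length-$\ell'$ prefix of $\exp(A)$ decomposes as $k=\lfloor\ell'/|B|\rfloor$ full copies of $\exp(B)$ plus a shorter prefix, whose joint signature equals
\[
\kappa(\exp(B))\cdot\bigl(1+c^{|B|}+c^{2|B|}+\cdots+c^{(k-1)|B|}\bigr)\bmod\mu.
\]
This geometric sum is evaluated in $O(\log k)$ time via the doubling recurrence $\sigma(2k)=\sigma(k)(1+c^{k|B|})$, after which the recursion continues into $B$ with residue $\ell'-k|B|<|B|$.

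The running time is the product of the descent depth and the per-level cost. Each level costs $O(\log \ell)$, dominated either by a binary search (with $s\le \ell$ on each rule touched), a fast exponentiation, or a geometric-sum evaluation with exponent $k\le\ell$. For the descent depth, I combine local balance (Lemma~\ref{lem:locally-balanced}: every nonterminal $D$ satisfies height at most $2\log|D|$) with the minimum block length of $2$ (Observation~\ref{obs:length2}): once the recursion reaches a nonterminal $D$ with $|D|=O(\ell)$, local balance caps the remaining depth at $O(\log \ell)$, and a heavy-child style argument using the $\ge 2$ block-length property shows that such a $D$ is itself reached in $O(\log \ell)$ steps. Combining, the total cost is $O(\log^2\ell)$.

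Prefixes of elements of $\mathcal{X}=\{\exp(A_i)^{rev}\}$ are handled symmetrically by maintaining a second Karp--Rabin function $\kappa^{rev}(S)=\kappa(S^{rev})$ with precomputed values on all nonterminals, and descending from the right through the companion tree $T_G'$ of Lemma~\ref{lem:extract from rlcfg}. The main obstacle is the $O(\log \ell)$ descent bound, since the global parse-tree height is $\Theta(\log n)$ and a naive descent need not respect $\ell$; the argument must therefore exploit both local balance and the minimum block length of the specific grammar built in Section~\ref{sec:attractor}. A secondary obstacle is the run-length case, where a naive pass over the $s$ copies of $B$ would break the bound; the geometric-series fast exponentiation above keeps the per-level work at $O(\log k)\le O(\log \ell)$.
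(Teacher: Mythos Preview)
Your overall decomposition (store per-rule prefix lengths and signatures, splice pieces with fast exponentiation, handle run-length rules via a geometric sum) is fine and matches the spirit of the paper. The gap is in your bound on the \emph{descent depth}.

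You claim that ``a heavy-child style argument using the $\ge 2$ block-length property shows that such a $D$ is itself reached in $O(\log\ell)$ steps,'' but this does not follow. When you descend step by step from the starting nonterminal $A_{i+1}$ towards position $\ell$, each step only lowers the \emph{round} by one (or stays put on a run-length rule); it does not force the current expansion size to shrink relative to $\ell$. Concretely, if $\ell \le |A_{i+1}|$ and position $\ell$ lies in the leftmost child at every level, you walk the entire leftmost path of $A_{i+1}$, whose length is its height, i.e., $\Theta(\log|A_{i+1}|)$ and possibly $\Theta(\log n)$. The $\ge 2$ block-length property only yields $|N| \ge 2^r$ for a round-$r$ symbol $N$; it gives no constant-factor shrinkage of the child you actually enter. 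So your descent depth is $O(\log n)$, not $O(\log\ell)$, and your time becomes $O(\log n\log\ell)$ rather than $O(\log^2\ell)$; this would break the $O(m)$ bound in Section~\ref{sec:ztrie} for short patterns.

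The paper avoids exactly this by \emph{not} descending node by node. It stores, for each nonterminal $X$, the signatures of all suffixes (resp.\ prefixes) of $\exp(X)$ aligned at child boundaries, and uses the tries $T'_G$ and $T_G$ of Lemma~\ref{lem:extract from rlcfg}, augmented with constant-time level-ancestor queries, to perform an \emph{exponential search over the ancestors of $X$} in the trie. This jump lands directly on the highest node $B$ on the rightmost (resp.\ leftmost) path of $X$ with $|B|\le\ell$; by local balance (Lemma~\ref{lem:locally-balanced}) $B$ has height $O(\log\ell)$, and all subsequent siblings $B_i$ we recurse into are at the same round as $B$ and hence also have height $O(\log\ell)$. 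The whole process then stays at height $O(\log\ell)$ and iterates $O(\log\ell)$ times, each costing $O(\log\ell)$. The piece your argument is missing is precisely this level-ancestor jump that bypasses the potentially $\Theta(\log n)$-long monotone path inside a large nonterminal.
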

\begin{proof}
Analogously as for extraction (Lemma~\ref{lem:extract from rlcfg}), we consider
the $O(\log \ell)$ levels of the grammar subtree containing the desired prefix.
For each level, we find in $O(\log \ell)$ time the prefix/suffix of the rule
contained in the desired prefix. Fingerprints of those prefixes/suffixes of
rules are precomputed. 

Strings in $\mathcal{X}$ are reversed expansions of nonterminals. Let every
nonterminal $X$ store the signatures of the reverses of all the suffixes of
$\exp(X)$ that start at $X$'s children. That is, if $X \rightarrow X_1\cdots
X_s$, store the signatures of $(\exp(X_i)\cdots \exp(X_s))^{rev}$ for all $i$.
We use the trie $T'_G$ of the proof of Lemma~\ref{lem:extract from rlcfg}, 
where each trie node is a grammar
nonterminal and its parent is the rightmost symbol of its defining rule. To
extract the signature of the reversed prefix of length $\ell$ of a
nonterminal $X$, we go to the node of $X$ in $T'_G$ and run an exponential
search over its ancestors, so as to find in time $O(\log \ell)$ the lowest one
whose expansion length is $\le \ell$. Let $B$ be that nonterminal, then $B$ is
the first node in the rightmost path of the parse tree from $X$ with $|B| \le
\ell$. Note that the height of $B$ is $O(\log\ell)$ because the grammar is
locally balanced (Lemma~\ref{lem:locally-balanced}), and moreover the parent 
$A \rightarrow B_1 \cdots B_{s-1} B$ of $B$ satisfies $|A| > \ell$. We then 
exponentially search the preceding siblings of $B$
until we find the largest $i$ such that $|B_i|+\cdots+|B| > \ell$ (we must
store these cumulative expansion lengths for each $B_i$). This takes
$O(\log \ell)$ time. We collect the stored signature of
$(\exp(B_{i+1})\cdots \exp(B))^{rev}$; this is part of the signature we will
assemble. Now we repeat the process from $B_i$, collecting the signature from
the remaining part of the desired suffix. Since the depth of the involved nodes
decreases at least by 1 at each step, the whole process takes
$O(\log^2 \ell)$ time. 

The case of $\mathcal{Y}$ is similar, now using the trie $T_G$ of the proof of
Lemma~\ref{lem:extract from rlcfg} and computing prefixes of signatures.
The only difference is that we start from a given child $Y_i$ of a nonterminal
$Y \rightarrow Y_1 \cdots Y_t$ and the signature may span up to the end of
$Y$. So we start with the exponential search for the leftmost $Y_j$ such that
$|Y_i|+\cdots+|Y_j|>\ell$; the rest of the process is similar.

When we have rules of the form $A \rightarrow A_1^s$, we find in constant time
the desired copy $A_i$, from $\ell$ and $|A_1|$. Similarly, we can compute the
signature $\kappa$ of the last $i$ copies of $A_1$ as $\kappa(\exp(A_1)^i) =
\left(\kappa(\exp(A_1))\cdot \frac{c^{|A_1|\cdot i}-1}{c^{|A_1|}-1}\right) \!\!\mod \mu$: 
$c^{|A_1|} \!\!\mod \mu$ and $(c^{|A_1|}-1)^{-1} \!\!\mod \mu$ can be 
stored with $A_1$, and the exponentiation can be computed in 
$O(\log i) \subseteq O(\log\ell)$ time. 
\end{proof}

Overall, we find the $m$ ranges in the grid in time 
$O(m + \tau(f_h(m) +\log m) + f_e(m)) = O(m+\tau \log^2 m) = O(m+\log^3 m) =
O(m)$, as claimed.

\subsection{Reporting secondary occurrences}
\label{sec:secondary}

We report each secondary occurrence in constant amortized time, by adapting
and extending an existing scheme for CFGs \cite{CNspire12} to RLCFGs.
Our data structure enhances the grammar tree with some fields per node $v$
labeled $A$ (where $A$ is a terminal or a nonterminal):
\begin{enumerate}
\item $v.\mathit{anc} = u$ is the nearest ancestor of $v$, labeled $B$, such 
that $u$ is the root or $B$ labels more than one node in the grammar tree. Note
that, since $u$ is internal in the grammar tree, it has the leftmost occurrence
of label $B$ in preorder. This field is undefined in the nodes labeled
$A^{[s-1]}$ we create in the grammar tree (these do not appear in the parse tree).
\item $v.\mathit{offs} = v_i-u_i$, where $proj(v)=[v_i\dd v_j]$ and 
$proj(u)=[u_i\dd u_j]$, is the offset of the projection $\exp(A)$ of $v$ inside 
the projection $\exp(B)$ of $u$. This field is also undefined in the nodes
labeled $A^{[s-1]}$.
\item $v.\mathit{next} = v'$ is the next node in preorder labeled $A$, our 
$null$ if $v$ is the last node labeled $A$ (those next appearances of
$A$ are leaves in the grammar tree). If $B \rightarrow A^s$, the internal
node $u$ labeled $B$ has two children: $v$ labeled $A$ and $v'$ labeled
$A^{[s-1]}$. In this case, $v.next = v'$, and $v'.next$ points to the next
occurrence of a node labeled $A$, in preorder.
\end{enumerate}

Let $u$, labeled $A$, be the parent of 
primary occurrence of $P$, with $A \rightarrow A_1 \cdots A_s$, and $v$, 
labeled $A_i$, be its locus. The grid defined in 
Section~\ref{sec:primary} gives us the pointer to $v$. We then know
that the relative offset of this primary occurrence inside $A_i$ is
$|A_i|-q+1$. We then move to the nearest ancestor of $v$ we have recorded, 
$u' = v.\mathit{anc}$, where the occurrence of $P$ starts at offset 
$\mathit{offs} = |A_i|-q+1+v.\mathit{offs}$ (note that $u'$ can be $u$ or an
ancestor of it). From now on, to find the offset of this occurrence in $T$, we
repeatedly add $u'.\mathit{offs}$ to $\mathit{offs}$ and move to $u' \leftarrow 
u'.\mathit{anc}$. When $u'$ reaches the root, $\mathit{offs}$ is the position 
in $T$ of the primary occurrence. 

At every step of this upward path to the root, we also take the rightward 
path to $u'' \leftarrow
u'.\mathit{next}$. If $u'' \not= null$, we recursively report the copy of the 
primary occurrence inside $u''$, continuing from the same current value of 
$\mathit{offs}$ we have for $u'$.

In other words, from the node $u'=v.\mathit{anc}$ we recursively continue by 
$u'.\mathit{anc}$ and $u'.\mathit{next}$, forming a binary tree of recursive 
calls. All the leaves of this binary tree that are ``left'' children (i.e., by 
$u'.\mathit{anc}$) reach the root of the grammar tree and report a distinct 
offset in $T$ each time. The total number of nodes in this tree is proportional
to the number of occurrences reported, and therefore the amortized cost per 
occurrence reported is $O(1)$.

In case $A \rightarrow A_1^s$, the internal grammar tree node $u$ labeled $A$
has two children: $v$ labeled $A_1$ and $v'=v.\mathit{next}$ labeled 
$A_1^{[s-1]}$. If $P$ has a primary occurrence where $P[1\dd q]$ matches a suffix
of $\exp(A_1)$, the grid will send us to the node $v$, where the occurrence starts
at offset $|A_1|-q+1$. This is just the leftmost occurrence of $P$ within
$\exp(A)$, with offset $|A_1|-q+1$ as well. We must also report all the 
secondary occurrences inside $\exp(A)$, that is, all the offsets 
$i\cdot|A_1|-q+1$, for $i=1,2,\ldots$ as long as 
$i\cdot|A_1|-q+m \le s\cdot |A_1|$. For each such offset we
continue the reporting from $u'=v.\mathit{anc}$, with offset
$\mathit{offs} = i\cdot|A_1|-q+1+v.\mathit{offs}$. 

We might also arrive at such a node $v$ by a $\mathit{next}$ pointer, in which
case the occurrence of $P$ is completely inside $\exp(A_1)$, with offset
$\mathit{offs}$. In this case, we must similarly propagate all the other $s-1$ 
copies of $A_1$ upwards, and then continue to the right. Precisely, we
continue from $u'=v.\mathit{anc}$ and offset $\mathit{offs}+i\cdot|A_1|+
v.\mathit{offs}$, for all $0 \le i < s$. Finally, we continue rightward to
node $v'.\mathit{next}$ and with the original value $\mathit{offs}$.

Our amortized analysis stays valid on these run-length nodes, because we still
do $O(1)$ work per new occurrence reported (these are $s$-ary nodes
in our tree of recursive calls).

\subsection{Short patterns}\label{sec:short patterns}

All our data structures use $O(g)$ space. After parsing the pattern to
find the $\tau = O(\log m)$ relevant cutting points $q$ in time $O(m)$
(Section~\ref{sec:pattern}), and finding the $\tau$ grid ranges $[x_1\dd x_2]
\times [y_1\dd y_2]$ by searching $\mathcal{X}$ and $\mathcal{Y}$ in time $O(m)$
as well (Section~\ref{sec:ztrie}), we look for the primary and secondary
occurrences. Finding the former requires $O(\log^\epsilon g)$ time for
each of the $\tau$ ranges, plus $O(\log^\epsilon g)$ time per primary
occurrence found (Section~\ref{sec:primary}). The secondary occurrences
require just $O(1)$ time each (Section~\ref{sec:secondary}).
This yields total time $O(m+\log m\log^\epsilon g + occ \log^\epsilon g)$ to 
find the $occ$ occurrences of $P[1\dd m]$. 

Next we show how to remove the additive
term $O(\log m \log^\epsilon g)$ by dealing separately with short patterns: we use $O(\gamma)$ further space and leave only an additive 
$O(\log^\epsilon g)$-time term needed for short patterns that do not 
occur in $T$; we then further reduce this term.

The cost $O(\log m \log^\epsilon g)$ comes from the $O(\log m)$ geometric 
searches, each having a component $O(\log^\epsilon g)$ that cannot be charged 
to the primary occurrences found \cite{DBLP:conf/compgeom/ChanLP11}. That cost, however, impacts on
the total search complexity only for short patterns: it can be $\omega(m)$ only
if $m = O(\ell)$, with $\ell=\log^\epsilon g\log\log g$.

We can then store sufficient information to avoid this cost for the 
short patterns.
Since $T$ has an attractor of size $\gamma$, there can be at most $\gamma
\ell$ substrings of length $\ell$ crossing an attractor element, and all the others
must have a copy crossing an attractor element. Thus, there are at most $\gamma \ell$
distinct substrings of length $\ell$ in $T$, and at most $\gamma \ell^2$ 
distinct substrings of length up to $\ell$. We store all these substrings in a 
succinct perfect hash table $H$ \cite{BBD09}, using the function $\kappa'$ of 
Lemma~\ref{lemma: z-fast}
as the key. The associated value for each such substring are the $O(\log \ell)
 = O(\log\log g)$ split points $q$ that are relevant for its search
(Section~\ref{sec:pattern}) {\em and} have points in the corresponding grid 
range (Section~\ref{sec:primary}). Since each partition position $q$ can be
represented in $O(\log\ell) = O(\log\log g)$ bits, we encode all this 
information in $O(\gamma \ell^2 \log^2\ell)$ bits, which is 
$O(\gamma)$ space for any $\epsilon<\frac{1}{2}$. Succinct perfect hash tables 
require only linear-bit space on top of the stored data \cite{BBD09},
$O(\gamma \ell^2)$ bits in our case. Avoiding the partitions that do not
produce any result effectively removes the $O(\log m \log^\epsilon g)$ additive
term on the short patterns, because that cost can be charged to the first 
primary occurrence found. 

Note, however, that function $\kappa'$ is collision-free only among the
substrings of $T$, and therefore there could be short patterns that do not
occur in $T$ but still are sent to a position in $H$ that
corresponds to a short substring of $T$ (within $O(g)$ space we cannot afford 
to store a locus to disambiguate). To discard those patterns, we
proceed as follows. If the first partition returned by $H$ yields no grid points, then this was due to a collision with another pattern, and we
can immediately return that $P$ does not occur in $T$. If, on the other hand,
the first partition does return occurrences, we immediately extract the text 
around the first one in order to verify that the substring is actually $P$.
If it is not, then this is also due to a collision and we return that $P$ does 
not occur in $T$. 

Obtaining the locus $v$ of the first primary occurrence 
from the first partition $q$ takes time $O(\log^\epsilon g)$, and extracting 
$m$ symbols around it takes time $O(m)$, by using 
Lemma~\ref{lem:extract from rlcfg} around $v$.
Detecting that a short pattern $P$ does not
occur in $T$ then costs $O(m+\log^\epsilon g)$. 

We can slightly reduce this cost to  $O(m+\log^\epsilon \gamma)$, as 
follows. Since $g=O(\gamma\log(n/\gamma))$, we have $\log^\epsilon g \in
O(\log^\epsilon \gamma + \log\log(n/\gamma))$. Let $\ell'=\log\log(n/\gamma)$.
We store all the $\gamma\ell'$ distinct text substrings of length 
$\ell'$ in a compact trie $C$,
using perfect hashing to store the children of each node, and associating the
locus $v$ of a primary occurrence with each trie node. The internal trie
nodes represent all the distinct
substrings shorter than $\ell'$. The compact trie $C$ requires 
$O(\gamma \ell') \subseteq O(\gamma\log(n/\gamma))$ space. 
A search for a pattern of length $m \le \ell'$ that does not 
occur in $T$ can then be discarded in $O(m)$ time, by traversing $C$ and
then verifying the pattern around the locus. Thus the additive term 
$O(\log^\epsilon g)$ is reduced to $O(\log^\epsilon\gamma)$.

\subsection{Construction} \label{sec:constr}

\Cref{thm:rlcfg} shows that we can build a suitable grammar in $O(n)$
expected time and $O(g)$ working space, if we know $\gamma$. If not,
\Cref{thm:rlcfg2} shows that the working space rises to $O(n)$.

The grammar tree is then easily built in $O(g)$ time by traversing the grammar 
top-down and left-to-right from the initial symbol, and marking nonterminals 
as we find them for the first time; the next times they are found correspond 
to leaves in the grammar tree, so they are not further explored. By recording
the sizes $|A|$ of all the nonterminals $A$, we also obtain the positions
where phrases start.

Let us now recapitulate the data structures used by our index:
\begin{enumerate}
\item The grid of Section~\ref{sec:primary} where the points of $\mathcal{X}$
and $\mathcal{Y}$ are connected.
\item The perfect hash tables storing the permutations $\pi$, the runs $a^\ell$,
and the blocks generated, for each round of parsing, used in 
Section~\ref{sec:pattern}.
\item The z-fast tries on $\mathcal{X}$ and $\mathcal{Y}$, for
Section~\ref{sec:ztrie}. This includes finding a collision-free Karp--Rabin 
function $\kappa'$.
\item The tries $T_G$ and $T_G'$, provided with level-ancestor queries and
with the Karp--Rabin signatures of all the prefixes and suffixes of
$A_1 \cdots A_s$ for any rule $A \rightarrow A_1 \cdots A_s$. 
\item The extra fields on the grammar tree to find secondary occurrences in
Section~\ref{sec:secondary}.
\item The structures $H$ and $C$ for the short patterns, in
Section~\ref{sec:short patterns}
\end{enumerate}

\citeN[Sec.~4]{NP18} carefully analyze the construction 
cost of points 1 and 3:\footnote{Their $w$ corresponds to our $g$: an upper
bound to the number of phrases in $T$.} 
The multisets $\mathcal{X}$ and $\mathcal{Y}$ can be built from a suffix array
in $O(n)$ time and space, but also from a sparse suffix
array in $O(n\sqrt{\log g})$ expected time and $O(g)$ space \cite{DBLP:conf/soda/GawrychowskiK17}; this
time drops to $O(n)$ if we allow the output to be correct w.h.p.\ only.
A variant of the grid structure of point 1 is built in $O(g\sqrt{\log g})$ 
time and $O(g)$ space \cite{DBLP:conf/soda/BelazzouguiP16}. 
The z-fast tries of point 3 are built in $O(g)$ expected time and space.
However, ensuring that $\kappa'$ is collision-free requires $O(n\log n)$
expected time and $O(n)$ space \cite{DBLP:journals/jda/BilleGSV14}, which is dominant. 
Otherwise, we can build in $O(n)$ expected time and no extra space a signature 
that is collision-free w.h.p.

The structures of point 2 are of total size $O(g)$ and are already built in
$O(g)$ expected time and space during the parsing of $T$. It is an easy 
programming exercise to build the structures of points 4 and 5 in $O(g)$ time; 
the level-ancestor data structure is built in $O(g)$ time as well 
\cite{BF04}.

To build the succinct perfect hash table $H$ of point 6, we traverse the text
around the $g-r$ phrase borders; this is sufficient to spot all the primary
occurrences of all the distinct patterns. There are at most $g\ell^2$ 
substrings of length up to $\ell$ crossing a phrase boundary, where
$\ell = \log^\epsilon g \log\log g$. All their Karp--Rabin signatures $\kappa'$ 
can be computed in time $O(g\ell^2)$ as well, and inserted into a regular hash
table to obtain the $O(\gamma\ell^2)$ distinct substrings. We then build $H$
on the signatures, in $O(\gamma \ell^2)$ expected time \cite{BBD09}. Therefore,
the total expected time to create $H$ is $O(g\ell^2)$, whereas the space is 
$O(\gamma\ell^2)$ (we can obtain this space even without knowing $\gamma$, by 
progressively doubling the size of the hash table as needed).

This construction space can be reduced to $O(\gamma\ell)$ by building a 
separate table $H_m$ for each distinct length $m \in [1\dd \ell]$. Further, since 
we can spend $O(m)$ time when searching for a pattern of length $m$, we can 
split $H_m$ into up to $m$ subtables $H_{m,i}$, which can then be built
separately within $O(g)$ total space: We stop our traversal each time we 
collect $g$ distinct substrings of length $m$, build a separate succinct hash
table $H_{m,i}$ on those, and start afresh to build a new table $H_{m,i+1}$. 
Since there are at most $\gamma m \le g\,\! m$ distinct substrings, we will 
build at most $m$ tables $H_{m,1},\ldots,H_{m,m}$. Note that, in order to 
detect whether each substring appeared previously, we must search all the 
preceding tables $H_{m,1},\ldots,H_{m,i-1}$ for it, which raises the 
construction time to $O(g\ell^3)$. At search time, our pattern may appear in 
any of the $m$ tables $H_{m,i}$, so we search them all in $O(m)$ time.

In order to compute the information on the partitions of each distinct
substring,
we can simulate its pattern search. Since we only need to find its relevant 
split points $q$ (Section~\ref{sec:pattern}), their grid ranges 
(Section~\ref{sec:ztrie}), and which of these are nonempty 
(Section~\ref{sec:primary}), the total time spent per substring of length
up to $\ell$ is $O(\ell + \log \ell \log^\epsilon \gamma) = O(\ell)$.
Added over the up to $\gamma \ell^2$ distinct substrings, the time is 
$O(\gamma\ell^3)$. The whole process then takes
$O(g\ell^3)$ expected time and $O(g)$ space. We enforce 
$\epsilon < \frac{1}{6}$ to keep the time within $O(g \sqrt{\log g})$.

We also build the compact trie $C$ on all the distinct substrings of length 
$\ell' = \log\log(n/\gamma)$. We can collect their signatures $\kappa'$ in 
$O(g\ell')$ time around phrase boundaries, storing them in a temporary hash 
table that collects at most $O(\gamma \ell')$ distinct signatures. For 
each such distinct signature we find, we insert the corresponding substring
in $C$, recording its corresponding locus, in $O(\ell')$ time. The locus
must also be recorded for the internal trie nodes $v$ we traverse, if the 
substring represented by $v$ also crosses the phrase boundary; this must
happen for some descendant leaf of $v$ because $v$ must have a primary
occurrence. Since 
we insert at most $\gamma \ell'$ distinct substrings, the total work on the 
trie is $O(\gamma \ell'\,\!^2)$. Then the expected construction time of $C$
is $O(g\ell' + \gamma\ell'\,\!^2) \subseteq O(g\ell'\,\!^2)
\subseteq O(\gamma\log(n/\gamma)(\log\log(n/\gamma))^2) \subseteq O(n)$. 
The construction space is $O(\gamma\ell') = O(\gamma\log\log(n/\gamma)) 
\subseteq O(\gamma\log(n/\gamma))$.

Note that we need to know $\gamma$ to determine $\ell'$. If we do not know
$\gamma$, we can try out all the lengths, from $\ell' = \log\log(n/g)$ to 
$\log\log n$; note that the unknown correct value is in this range because
$\gamma \le g$. For each length, we build the structures to collect the
distinct substrings of length $\ell$, but stop if we exceed $g$ distinct ones.
Note that we cannot exceed $g$ distinct substrings for
$\ell' \le \log\log(n/\gamma)$ because, in the 
grammar of Section~\ref{sec:attractor}, it holds that $g \ge 
\gamma\log(n/\gamma) \ge \gamma\log\log(n/\gamma) \ge \gamma \ell'$, and this 
is the maximum number of distinct substrings of length $\ell'$ we can produce.
We therefore build the trie $C$ for the value $\ell'$ such that the construction
is stopped for the first time with $\ell'+1$. This value must be $\ell' \ge
\log\log(n/\gamma)$, sufficiently large to ensure the time bounds of 
Section~\ref{sec:short patterns}, and sufficiently small so that the extra 
space is in $O(g)$. The only penalty is that we carry out $\ell'$
iterations in the construction of the hash table (the trie itself is built
only after we find $\ell'$), which costs $O(g\ell'\,\!^2)$ time. This is
the same construction cost we had, but now $\ell'$ can be up to $\log\log n$;
therefore the construction cost is $O(g(\log\log n)^2)$. The construction space
stays in $O(g)$ by design.

\medskip

The total construction cost is then $O(n\log n)$ expected time and $O(n)$
space, essentially dominated by the cost to ensure a collision-free Karp--Rabin 
signature.

\begin{theorem} \label{thm:main}
Let $T[1\dd n]$ have an attractor of size $\gamma$. Then, there exists a data
structure of size $g=O(\gamma\log(n/\gamma))$ that can find the $occ$ 
occurrences of any pattern $P[1\dd m]$ in $T$ in time 
$O(m+\log^\epsilon \gamma + occ\log^\epsilon g) \subseteq 
O(m+(occ+1)\log^\epsilon n)$ for any constant $\epsilon>0$. 
The structure is built in $O(n\log n)$ expected time and $O(n)$ space, 
without the need to know $\gamma$.
\end{theorem}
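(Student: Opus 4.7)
The plan is to assemble the theorem by combining all the ingredients developed across Sections~\ref{sec:locality}--\ref{sec:index}. First, I would invoke Theorem~\ref{thm:rlcfg2} to construct a locally-balanced, locally-consistent RLCFG of size $g = O(\gamma\log(n/\gamma))$ in $O(n)$ expected time and $O(n)$ working space \emph{without knowing} $\gamma$. On top of this grammar, I would build, as itemized in Section~\ref{sec:constr}, (i) the grammar tree with the node fields of Section~\ref{sec:secondary}; (ii) the perfect hash tables for pattern parsing (Section~\ref{sec:pattern}); (iii) the multisets $\mathcal{X},\mathcal{Y}$ and the two-dimensional grid of Section~\ref{sec:primary} together with the z-fast tries of Section~\ref{sec:ztrie} and the extraction/fingerprint structures of Lemmas~\ref{lem:extract from rlcfg} and~\ref{lem:kr}; and (iv) the short-pattern structures $H$ and $C$ of Section~\ref{sec:short patterns}. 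Putting together the construction costs analyzed there gives $O(n\log n)$ expected time and $O(n)$ space, the dominant cost being the construction of a collision-free Karp--Rabin function.

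For the search, I would proceed in three phases. First, parse $P^* = \# P \$$ in $O(m)$ time using the stored permutations (Section~\ref{sec:pattern}), producing the $\tau = O(\log m)$ candidate split points in $M(P)+1$ provided by Lemma~\ref{lem:pattern}. Second, use Lemma~\ref{lemma: z-fast} on $\mathcal{X}$ and $\mathcal{Y}$ together with Lemmas~\ref{lem:extract from rlcfg} and~\ref{lem:kr}, giving $f_e(m) = O(m)$ and $f_h(m) = O(\log^2 m)$, so that the corresponding $\tau$ grid ranges are located in $O(m+\tau(f_h(m)+\log m)+f_e(m)) = O(m)$ total time. Third, report primary occurrences via the geometric structure of \citeN{DBLP:conf/compgeom/ChanLP11} in $O(\log^\epsilon g)$ per range plus $O(\log^\epsilon g)$ per primary occurrence, and propagate each primary occurrence to its secondary ones in $O(1)$ amortized time using the $\mathit{anc}$/$\mathit{offs}$/$\mathit{next}$ fields of Section~\ref{sec:secondary}, paying special attention to the run-length case $A\to A_1^s$.

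The only delicate step is eliminating the additive $O(\log m \cdot \log^\epsilon g)$ term arising from the $\tau = O(\log m)$ ``empty'' geometric queries, which dominates for short patterns of length $m = O(\log^\epsilon g \log\log g)$. Here I would appeal to the short-pattern machinery of Section~\ref{sec:short patterns}: the hash table $H$ stores, for each distinct substring of length up to $\ell = \log^\epsilon g \log\log g$, precisely the split points that produce non-empty grid ranges, so the $O(\log m)$ geometric searches get charged to the first primary occurrence. False positives due to Karp--Rabin collisions for patterns \emph{not} occurring in $T$ are ruled out by a single extraction around the first candidate occurrence in $O(m+\log^\epsilon g)$ time; this final additive term is further reduced to $O(\log^\epsilon \gamma)$ by using the compact trie $C$ over substrings of length $\ell' = \log\log(n/\gamma)$, since $\log^\epsilon g = O(\log^\epsilon\gamma + \log\log(n/\gamma))$.

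Combining these bounds gives total search time $O(m + \log^\epsilon\gamma + occ\log^\epsilon g)$, which is contained in $O(m+(occ+1)\log^\epsilon n)$ because $g \le n$ and $\gamma\le n$. The main obstacle in writing this up cleanly is verifying the short-pattern construction: ensuring that $H$ and $C$ can be built in $O(n)$ space and $O(n\log n)$ time without knowing $\gamma$, which requires the doubling trick for $\ell'$ and the splitting of $H_m$ into subtables $H_{m,i}$ described in Section~\ref{sec:constr}, and checking that the resulting construction of $O(g\ell^3)$ time stays within $O(g\sqrt{\log g})$ for $\epsilon < \tfrac{1}{6}$. Everything else is a direct assembly of the lemmas and constructions established earlier.
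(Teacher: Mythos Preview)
Your proposal is correct and follows essentially the same approach as the paper: the theorem is not proved by a separate argument but is the summary statement obtained by assembling the grammar of Theorem~\ref{thm:rlcfg2}, the primary/secondary reporting machinery of Sections~\ref{sec:primary}--\ref{sec:secondary}, the $O(\log m)$ split points of Lemma~\ref{lem:pattern}, the z-fast trie search of Lemma~\ref{lemma: z-fast} with $f_e(m)=O(m)$ and $f_h(m)=O(\log^2 m)$, and the short-pattern structures $H$ and $C$ of Section~\ref{sec:short patterns}, with construction costs tallied exactly as in Section~\ref{sec:constr}. Your identification of the short-pattern construction (the $H_{m,i}$ subtables, the doubling search for $\ell'$, and the $\epsilon<\tfrac16$ restriction) as the only point requiring care is also spot on.
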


An index that is correct w.h.p.\ can be built
in $O(n+g\sqrt{\log g}+g(\log\log n)^2) \subseteq O(n+g\sqrt{\log g})$ expected
time. If we know $\gamma$, such an index can be built with $O(\log(n/\gamma))$ expected left-to-right passes on $T$ (to build the grammar) plus $O(\gamma\log(n/\gamma))$ main-memory space.

\medskip
Finally, note that if we want to report only $k < occ$ occurrences of $P$, their locating time does not anymore amortize to $O(1)$ as in Section~\ref{sec:secondary}. Rather, extracting each occurrence requires us to climb up the grammar tree up to the root. In this case, the search time becomes $O(m+(k+1)\log n)$.

\subsection{Optimal search time} \label{sec:optimal}

We now explore various space/time tradeoffs for our index, culminating with a
variant that achieves, for the first, time, optimal search time within space
bounded by an important family of repetitiveness measures. The tradeoffs are
obtained by considering other data structures for the grid of 
Section~\ref{sec:primary} and for the perfect hash tables of 
Section~\ref{sec:short patterns}.
Table~\ref{tab:tradeoffs} summarizes the results in a slightly simplified form;
the construction times stay as in \Cref{thm:main}.

\begin{table}[t]
\begin{center}
\tbl{Space-time tradeoffs within attractor-bounded space; formulas
are slightly simplified.
\label{tab:tradeoffs}}
{\begin{tabular}{l|c|c}
Source & Space & Time \\
\hline
Baseline \cite{NP18} 
	& $O(\gamma\log(n/\gamma))$ 
	& $O(m\log n + occ \log^\epsilon n)$ \\
\hline
\Cref{thm:main}            
	& $O(\gamma\log(n/\gamma))$ 
	& $O(m + (occ+1) \log^\epsilon n)$ \\
Corollary~\ref{cor:tradeoff1}            
	& $O(\gamma\log n)$ 
	& $O(m + occ \log^\epsilon n)$ \\
Corollary~\ref{cor:tradeoff2}            
	& $O(\gamma\log(n/\gamma)\log\log n)$ 
	& $O(m + (occ+1) \log\log n)$ \\
Corollary~\ref{cor:tradeoff3}            
	& $O(\gamma\log n\log\log n)$ 
	& $O(m + occ \log\log n)$ \\
\Cref{thm:opt}            
	& $O(\gamma\log(n/\gamma)\log^\epsilon n)$ 
	& $O(m+occ)$ \\
\hline
\end{tabular}}
\end{center}
\end{table}

A first tradeoff is obtained by discarding the table $H$ of
Section~\ref{sec:short patterns} and using only a compact trie $C'$, now to 
store the locus of a primary occurrence and the relevant split points of each 
substring of length up to 
$\ell = \log^\epsilon g \log\log g$. This adds $O(\gamma\ell)$ to the space, 
but it allows verifying that the short patterns actually occurs in $T$ in time 
$O(m)$ without using the grid. As a result, the additive term 
$O(\log^\epsilon \gamma)$ disappears from the search time.

As seen in Section~\ref{sec:constr}, the extra construction time for $C'$ is
now $O(g\ell^2)$, plus $O(\gamma\ell^3)$ to compute the relevant split points.
This is within the $O(g\ell^3)$ time bound obtained for
\Cref{thm:main}. The construction space is $O(\gamma\ell)$,
which we can assume to be $O(n)$ because it is included in the final
index size; if this is larger than $n$ then the result holds trivially by
using instead a suffix tree on $T$.

\begin{corollary} \label{cor:tradeoff1}
Let $T[1\dd n]$ have an attractor of size $\gamma$. Then, there exists a data
structure of size
$g=O(\gamma(\log(n/\gamma)+\log^\epsilon(\gamma\log(n/\gamma))\log\log(\gamma\log(n/\gamma)))) \subseteq O(\gamma \log n)$ that can find the $occ$ 
occurrences of any pattern $P[1\dd m]$ in $T$ in time 
$O(m+occ\log^\epsilon g) \subseteq O(m+occ\log^\epsilon n)$ for any constant
$\epsilon>0$. 
The structure is built in $O(n\log n)$ expected time and $O(n)$ space, 
without the need to know $\gamma$.
\end{corollary}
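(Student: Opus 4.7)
The plan is to modify the index of \Cref{thm:main} by replacing the short-pattern table $H$ (which had to be disambiguated via an explicit extraction/verification step that cost $O(\log^\epsilon \gamma)$) with a compact trie $C'$ on all distinct substrings of $T$ of length at most $\ell = \log^\epsilon g \log\log g$. Each trie node $v$ representing a substring $Q$ stores (i) the locus of a primary occurrence of $Q$ and (ii) the list of relevant split points $q\in M(Q)+1$ for $Q$ that correspond to non-empty geometric ranges in the grid of \Cref{sec:primary}. Using perfect hashing for the children of each trie node, $C'$ occupies $O(\gamma\ell)$ space: by the attractor property of $T$, there are at most $\gamma\ell$ distinct substrings of length exactly $\ell$, hence the compact trie has $O(\gamma\ell)$ edges; each trie node stores $O(\log\ell)$ relevant split points, each of $O(\log\log g)$ bits, which fits in $O(1)$ words.

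The key point is that a search for a pattern $P$ of length $m\le \ell$ can be resolved without ever querying the two-dimensional range structure when $P\notin T$. First I would descend $C'$ using $P$ in $O(m)$ time; if the descent fails, $P$ does not occur in $T$ and we return immediately. Otherwise, we obtain the locus of one primary occurrence (or we may verify $P$ against the text around it in $O(m)$ time, if desired), together with the list of exactly those split points $q$ contributing occurrences, and we proceed as in \Cref{thm:main} using the grid and the secondary-occurrence procedure of \Cref{sec:secondary}. This eliminates the additive $O(\log^\epsilon \gamma)$ term that previously came from hash collisions in $H$, since $C'$ allows deterministic disambiguation. For patterns of length $m > \ell$, the $O(\log m\log^\epsilon g)$ term arising from $\tau = O(\log m)$ grid probes is absorbed into $O(m)$, so the total search time is $O(m + occ\log^\epsilon g)$.

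For the construction, I would traverse $T$ around the $g-r$ phrase boundaries, computing Karp--Rabin signatures of all substrings of length up to $\ell$ that cross a boundary; by the same argument as in \Cref{sec:constr}, this enumerates all distinct substrings of length $\le\ell$ of $T$ because each of them has a primary occurrence. Inserting them into $C'$ with their associated loci takes $O(g\ell^2)$ expected time and $O(\gamma\ell)$ space. Then, for each distinct substring, I simulate the pattern-search procedure to identify the at most $O(\log\ell)$ split points that yield a non-empty grid range; this costs $O(\ell + \log\ell\log^\epsilon g)=O(\ell)$ per substring, giving $O(\gamma\ell^2)\subseteq O(g\ell^2)$ in total, which for $\epsilon<\tfrac16$ stays within $O(g\sqrt{\log g})$. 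As in \Cref{thm:main}, when $\gamma$ is unknown we substitute $\ell=\log^\epsilon g \log\log g$ based on $g$ alone and use the hash-table-doubling trick to stay in $O(n)$ construction space.

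The main obstacle is the space accounting: we must be careful that storing the relevant split points at every internal trie node, not just leaves, still fits within $O(\gamma\ell)$ words. This is fine because the number of explicit trie nodes (including internal ones) is at most twice the number of leaves, and the compact representation charges edge labels to the grammar via succinct pointers into the existing extraction structures of \Cref{lem:extract from rlcfg}. Adding the base grammar size $O(\gamma\log(n/\gamma))$ to $O(\gamma\ell)$ with $\ell = \log^\epsilon g\log\log g$ and $g=O(\gamma\log(n/\gamma))$ yields exactly the stated bound, which is $O(\gamma\log n)$ since $\log^\epsilon g\log\log g = o(\log n)$.
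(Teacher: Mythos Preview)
Your proposal matches the paper's own argument: discard the hash table $H$, keep a compact trie $C'$ on all text substrings of length up to $\ell=\log^\epsilon g\log\log g$ with a locus and the non-empty split points attached, so that short patterns are verified in $O(m)$ time without ever touching the grid, which kills the additive $O(\log^\epsilon\gamma)$ term. The one small discrepancy is bookkeeping: the paper charges $O(\gamma\ell^3)$ (not $O(\gamma\ell^2)$) for precomputing split points, because it does so for all $O(\gamma\ell^2)$ distinct substrings of length up to $\ell$---not just the $O(\gamma\ell)$ explicit trie nodes---which is also what you need if a pattern ends at an implicit node; this does not affect the stated construction bound.
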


By using $O(g\log\log g)$ space for the grid, the range queries run in time
$O(\log\log g)$ per query and per returned item \cite{DBLP:conf/compgeom/ChanLP11}. This reduces the
query time to $O(m+\log m \log\log g + occ \log\log g)$, which can be further
reduced with the same techniques of Section~\ref{sec:short patterns}: The
additive term can be relevant only if $m = O(\ell)$ with $\ell =
\log\log g \log\log\log g$. We then store in $H$ all the
$\gamma \ell^2$ patterns of length up to $\ell$, with their relevant 
partitions, using $O(\gamma \ell^2 (\log\ell)^2) =
O(\gamma (\log\log g)^2 (\log\log\log g)^4)$ bits, which is $O(\gamma)$
space. We may still need $O(\log\log g)$ time to determine that a short pattern
does not occur in $T$. By storing the patterns of length $\ell' =
\log\log\log(n/\gamma)$ in trie $C$, this time becomes $O(\log\log\gamma)$.

The grid structure can be built in time $O(g\log g)$. 
The construction time for $H$ and $C$ is lower than in
Section~\ref{sec:constr}, because $\ell$ and $\ell'$ are smaller here.

\begin{corollary} \label{cor:tradeoff2}
Let $T[1\dd n]$ have an attractor of size $\gamma$. Then, there exists a data
structure of size $g=O(\gamma\log(n/\gamma)\log\log(\gamma\log(n/\gamma)))
\subseteq O(\gamma \log (n/\gamma)\log\log n)$ that can find the $occ$ 
occurrences of any pattern $P[1\dd m]$ in $T$ in time 
$O(m+\log\log\gamma+occ\log\log g) \subseteq O(m+(occ+1)\log\log n)$.
The structure is built in $O(n\log n)$ expected time and $O(n)$ space, 
without the need to know $\gamma$.
\end{corollary}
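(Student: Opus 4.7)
The plan is to adapt the construction of \Cref{thm:main} (and of \Cref{cor:tradeoff1}) by swapping two components: the two-dimensional grid, and the hash/trie structures handling short patterns. For the grid I would use the $O(g\log\log g)$-space variant of \citeN{DBLP:conf/compgeom/ChanLP11}, which answers range queries in $O(\log\log g)$ time per query and $O(\log\log g)$ per reported point. Since \Cref{lem:pattern} guarantees only $\tau = O(\log m)$ candidate split points, the primary-occurrence search then takes $O(\log m\log\log g + occ\log\log g)$ time, while secondary occurrences still cost amortized $O(1)$ each (Section~\ref{sec:secondary}). The parsing of $P$ (Section~\ref{sec:pattern}) and the $z$-fast tries (Section~\ref{sec:ztrie}) are unchanged, yielding an intermediate bound of $O(m + \log m\log\log g + occ\log\log g)$.

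To remove the additive $O(\log m\log\log g)$ term I would follow the short-pattern strategy of Section~\ref{sec:short patterns}, but with the new threshold $\ell = \log\log g\cdot\log\log\log g$, since only patterns of length $m = O(\ell)$ can let that term dominate. Using the attractor bound, there are at most $\gamma\ell^2$ distinct substrings of length at most $\ell$; I store them in a succinct perfect hash table $H$ keyed by a collision-free Karp--Rabin signature $\kappa'$, with associated values recording the $O(\log\ell)$ split points that are relevant and yield non-empty grid ranges. Each split point needs $O(\log\ell)$ bits, so the payload uses $O(\gamma\ell^2(\log\ell)^2) = O(\gamma(\log\log g)^2(\log\log\log g)^4) \subseteq O(\gamma)$ bits, which fits in $O(\gamma)$ words; \citeN{BBD09} then provide $H$ within an additional linear-bit overhead. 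Collisions with patterns absent from $T$ are ruled out exactly as in Section~\ref{sec:short patterns}: either the first returned partition has no grid points (so $P$ does not occur), or we extract the text around the first primary occurrence using Lemma~\ref{lem:extract from rlcfg} and verify equality in $O(m)$ time.

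The remaining piece is to certify the absence of very short patterns in $T$ without paying $O(\log\log g)$ for a grid probe. Since $\log\log g = O(\log\log\gamma + \log\log\log(n/\gamma))$, setting $\ell' = \log\log\log(n/\gamma)$ and storing the $\gamma\ell'$ distinct substrings of length $\ell'$ in a compact trie $C$ (with perfect hashing on children, exactly as in Section~\ref{sec:short patterns}) allows rejecting in $O(m)$ time any pattern of length $m\le\ell'$ that is not in $T$, so the verification cost after a potential collision becomes $O(m + \log\log\gamma)$. All told the search time is $O(m + \log\log\gamma + occ\log\log g) \subseteq O(m + (occ+1)\log\log n)$, and the total space is dominated by the grid, namely $O(g\log\log g) = O(\gamma\log(n/\gamma)\log\log(\gamma\log(n/\gamma)))$.

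For construction I would reuse the scheme of Section~\ref{sec:constr}: the grammar and its auxiliary data structures are built in $O(n\log n)$ expected time and $O(n)$ space, and the $O(g\log\log g)$-space grid admits a build in $O(g\log g)$ time. The hash $H$ and trie $C$ are built by scanning $O(\ell)$-neighborhoods of phrase boundaries exactly as in Section~\ref{sec:constr}; with the smaller thresholds $\ell,\ell'$ used here the bookkeeping is strictly cheaper than for \Cref{thm:main}, so the overall $O(n\log n)$ expected time and $O(n)$ working space bound is preserved, and the construction does not require knowing $\gamma$ (we determine $\ell'$ by the same doubling argument of Section~\ref{sec:constr}). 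The one delicate point that will need verification is that the $O(\gamma)$-word payload for $H$ really fits in the claimed space for every admissible $\epsilon$, i.e.\ that $(\log\log g)^2(\log\log\log g)^4 = O(w)$; this is immediate because $w = \Omega(\log n)$.
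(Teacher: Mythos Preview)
Your proposal is correct and follows essentially the same approach as the paper: swap in the $O(g\log\log g)$-space grid of \citeN{DBLP:conf/compgeom/ChanLP11}, redo the short-pattern machinery of Section~\ref{sec:short patterns} with the smaller thresholds $\ell=\log\log g\,\log\log\log g$ and $\ell'=\log\log\log(n/\gamma)$, and observe that construction is only cheaper than in Section~\ref{sec:constr}. The only superfluous remark is the reference to ``every admissible $\epsilon$'' in your final paragraph, since this corollary has no $\epsilon$; the payload bound $(\log\log g)^2(\log\log\log g)^4=o(\log n)$ is immediate regardless.
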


By discarding $H$ and building $C'$ on the substrings of length
$\ell=\log\log g \log\log\log g$, we increase the space by 
$O(\gamma\ell^2)$ and remove the additive term in the search time.
The construction time for the grid is still $O(g\log g)$, but that of $C$ is 
within the bounds of Corollary~\ref{cor:tradeoff1}, because $\ell$ is smaller 
here.

\begin{corollary} \label{cor:tradeoff3}
Let $T[1\dd n]$ have an attractor of size $\gamma$. Then, there exists a data
structure of size $g=O(\gamma(\log(n/\gamma)\log\log(\gamma\log(n/\gamma))+
(\log\log(\gamma\log(n/\gamma))\log\log\log(\gamma\log(n/\gamma)))^2))
\subseteq O(\gamma \log n\log\log n)$ that can find the $occ$ 
occurrences of any pattern $P[1\dd m]$ in $T$ in time 
$O(m+occ\log\log g) \subseteq O(m+occ\log\log n)$.
The structure is built in $O(n\log n)$ expected time and $O(n)$ space, 
without the need to know $\gamma$.
\end{corollary}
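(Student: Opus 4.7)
The plan is to combine the two orthogonal improvements developed earlier in this subsection: the $O(g\log\log g)$-space grid of \citeN{DBLP:conf/compgeom/ChanLP11} used in Corollary~\ref{cor:tradeoff2} (which answers each two-sided range query in $O(\log\log g)$ time per query plus $O(\log\log g)$ per output point), and the short-pattern trie $C'$ used in Corollary~\ref{cor:tradeoff1} (which replaces the perfect hash table $H$ so that short patterns are verified in $O(m)$ time, eliminating the additive term coming from the grid). Installing both modifications simultaneously, with $\ell$ rescaled to the new threshold, gives precisely the claimed bounds.

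More concretely, I first replace the grid of Section~\ref{sec:primary} by the $O(g\log\log g)$-space structure of \citeN{DBLP:conf/compgeom/ChanLP11}, so a single range query answers in $O(\log\log g)$ time and reports each primary occurrence in $O(\log\log g)$ additional time. Following Corollary~\ref{cor:tradeoff2}, the naive search time is $O(m+\log m\cdot \log\log g + occ\log\log g)$, and the additive $\log m\cdot\log\log g$ term is relevant only when $m=O(\ell)$ for $\ell=\log\log g\cdot \log\log\log g$. I then drop the perfect hash table $H$ of Section~\ref{sec:short patterns} and instead build a compact trie $C'$ storing the at most $\gamma\ell$ distinct substrings of each length up to $\ell$, together with the locus of a primary occurrence (if any) and the $O(\log\ell)$ relevant split points. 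Since each split point fits in $O(\log\ell)=O(\log\log\log g)$ bits, this adds $O(\gamma\ell\log\ell)$ words, i.e., $O(\gamma(\log\log g\cdot\log\log\log g)^2)$ space, matching the space expression in the statement. As in Corollary~\ref{cor:tradeoff1}, descending $C'$ in $O(m)$ time either locates the pattern's node or certifies that $P$ does not occur in $T$, so the additive $\log\log\gamma$ term vanishes and the search cost becomes $O(m+occ\log\log g)$, which is $O(m+occ\log\log n)$.

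For construction, I combine the analyses of Sections~\ref{sec:constr} and the derivations of the previous two corollaries. The grid of \citeN{DBLP:conf/compgeom/ChanLP11} is built in $O(g\log g)$ time; the trie $C'$ is built by the same blueprint as in Corollary~\ref{cor:tradeoff1}, but with the smaller $\ell=\log\log g\cdot\log\log\log g$, so all its construction costs (signature computation, hash-table splitting to keep working space in $O(g)$, simulation of searches to compute the relevant split points) are asymptotically dominated by the bounds already established. The collision-free Karp--Rabin signatures still require $O(n\log n)$ expected time \cite{DBLP:journals/jda/BilleGSV14}, so the total expected build time remains $O(n\log n)$ and the working space remains $O(n)$; as in \Cref{thm:main} the grammar itself can be built without knowing $\gamma$ via \Cref{thm:rlcfg2}, and $\ell$ can be determined without $\gamma$ by the same doubling argument used in Section~\ref{sec:constr} for the value $\ell'$.

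The only real obstacle is bookkeeping: one must verify that the new trie $C'$ can be built within $O(n)$ working space despite $\ell$ now depending on $\log\log g$ rather than the larger $\log^\epsilon g$ of \Cref{thm:main}, and that the $O(\gamma\ell\log\ell)$-word addition to the index size indeed absorbs both the split-point data and the locus pointers. Both points follow from the same argument as in Section~\ref{sec:constr}, so no new technique is required, and the final space expression simplifies to the advertised $O(\gamma\log n\log\log n)$ bound.
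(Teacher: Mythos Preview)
Your approach is essentially the same as the paper's: discard $H$, keep the $O(g\log\log g)$-space grid of \citeN{DBLP:conf/compgeom/ChanLP11}, and build the trie $C'$ on all substrings of length up to $\ell=\log\log g\cdot\log\log\log g$, storing for each its locus and its relevant (non-empty) split points. This removes the additive term and gives the claimed time.

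One minor slip: you write that the trie adds $O(\gamma\ell\log\ell)$ words and then equate this to $O(\gamma(\log\log g\cdot\log\log\log g)^2)=O(\gamma\ell^2)$; these two quantities are not the same. The bound the statement actually carries is $O(\gamma\ell^2)$, which is what the paper states and which is what you need if you store the split-point data for all $O(\gamma\ell^2)$ distinct substrings of length up to $\ell$ (a compact trie has only $O(\gamma\ell)$ explicit nodes, but the relevant split points for an implicit locus on an edge need not coincide with those of the surrounding explicit nodes). Also, $\ell$ depends only on $g$, which is known once the grammar is built, so no doubling argument is needed to determine it. These are bookkeeping issues; the core argument is correct and matches the paper.
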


Finally,
a larger geometric structure \cite{DBLP:conf/focs/AlstrupBR00} uses $O(g\log^\epsilon g)$ space, for 
any constant $\epsilon>0$, and reports in $O(\log\log g)$ time per query and 
$O(1)$ per result. This yields $O(m+\log m\log\log g+occ)$ search time.
To remove the second term, we again index all the patterns of length $m\le\ell$,
for $\ell = \log\log g \log\log\log g$, of which there are at most 
$\gamma \ell^2$. Just storing the relevant split points $q$ is not sufficient 
this time, however, because we cannot even afford the $O(\log\log g)$ time to 
query the nonempty areas. 

Still, note that the search time can be written as $O(m+\ell+occ)$. Thus, we
only care about the short patterns that, in addition, occur less than $\ell$ 
times, since otherwise the third term, $O(occ)$, absorbs the second. Storing 
all the occurrences of such patterns requires $O(\gamma\ell^2)$ space: An
enriched version $C''$ of the compact trie $C$ records the number of
occurrences in $T$ of each node. Only the leaves 
(i.e., the patterns of length exactly $\ell$) store their occurrences (if they are
at most $\ell$). Since there are at most $\gamma \ell$ leaves, the total
space to store those occurrences is $O(\gamma \ell^2)$, dominated by the grid
size. Shorter patterns 
correspond to internal trie nodes, and for them we must traverse all the 
descendant leaves in order to collect their occurrences. 

To handle a pattern $P$ of length up to $\ell$, then, we traverse $C''$ and
verify $P$ around its locus. If $P$ occurs in $T$, we see if the trie node
indicates it occurs more than $\ell$ times. If it does, we use the normal
search procedure using the geometric data structure and propagating the
secondary occurrences. Otherwise, its (up to $\ell$) occurrences are obtained
by traversing all the leaves descending from its trie node: if an internal node
occurs less than $\ell$ times, its descendant leaves also occur less than
$\ell$ times, so all the occurrences of the internal node are found in the 
descendant leaves. The search time is then always $O(m+occ)$.

The expected construction time of the geometric structure \cite{DBLP:conf/focs/AlstrupBR00}
is $O(g \log g)$, and its construction space is $O(g\log^\epsilon g)$. 
Note that if the construction space exceeds $O(n)$, then so does the size of 
our index. In this case, a suffix tree obtains linear construction time and
space with the same search time. Thus, we can assume the construction
space is $O(n)$.

The trie $C''$ is not built in the same way $C$ is built 
in Section~\ref{sec:constr}, because we need to record the number of
occurrences of each string of length up to $\ell$. We slide the window of 
length $\ell$ through the whole text $T$ instead of only around phrase 
boundaries. We maintain the distinct signatures $\kappa'$ found in a regular 
hash table, with the counter of how many times they appear in $T$. When a new 
signature appears, its string is inserted in $C''$, a pointer from the hash 
table to the 
corresponding trie leaf is set, and the list of occurrences of the substring 
is initialized in the trie leaf, with its first position just found. Further 
occurrence positions of the string are collected at its trie leaf, until they 
exceed $\ell$, in which case they are deleted. Thus we spend $O(n)$ expected 
time in the hash table and collecting occurrences, plus $O(\gamma \ell^2)$ 
time inserting strings in $C''$. From the number of occurrences of each 
leaf we can finally propagate those counters upwards in the trie, in 
$O(\gamma\ell)$ additional time.

\begin{theorem} \label{thm:opt}
Let $T[1\dd n]$ have an attractor of size $\gamma$. Then, there exists a data
structure of size $O(\gamma\log(n/\gamma)\log^\epsilon(\gamma\log(n/\gamma)))
\subseteq O(\gamma\log(n/\gamma)\log^\epsilon n)$, for any constant 
$\epsilon>0$, that can find the $occ$ 
occurrences of any pattern $P[1\dd m]$ in $T$ in time $O(m+occ)$.
The structure is built in $O(n\log n)$ expected time and 
$O(n)$ space, without the need to know $\gamma$.
\end{theorem}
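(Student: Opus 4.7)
The plan is to upgrade the index of Corollary~\ref{cor:tradeoff3} by swapping its two-dimensional range-reporting structure for the stronger one of \citeN{DBLP:conf/focs/AlstrupBR00}, which over $g$ points uses $O(g\log^\epsilon g)$ space and answers each query in $O(\log\log g)$ time plus $O(1)$ per reported point. Plugged into the scheme of Sections~\ref{sec:primary}--\ref{sec:secondary}, together with the $O(\log m)$ pattern partitions of Lemma~\ref{lem:pattern}, the pattern parsing of Section~\ref{sec:pattern}, the prefix/suffix range-finding of Section~\ref{sec:ztrie}, and the $O(1)$-amortized secondary reporting of Section~\ref{sec:secondary}, this gives search time $O(m+\log m\log\log g + occ)$. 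The only obstruction to optimality is the additive $\log m\log\log g$ term.

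I would remove this term with a refined version of the short-pattern treatment from Section~\ref{sec:short patterns}. Fix $\ell = \log\log g \cdot \log\log\log g$: then the additive term is $O(occ)$ whenever $m > \ell$ or whenever $occ \ge \log m\log\log g$, so the only patterns that matter are those of length $m\le\ell$ that occur fewer than $\ell$ times in $T$. Since $T$ has an attractor of size $\gamma$, there are at most $\gamma\ell^2$ distinct text substrings of length $\le \ell$, so I can store them all in a compact trie $C''$ of size $O(\gamma\ell^2)\subseteq O(\gamma\log(n/\gamma)\log^\epsilon n)$, each trie node annotated with the number of its text occurrences and its locus. Explicit occurrence lists of length at most $\ell$ are attached only to the leaves (substrings of length exactly $\ell$); for an internal trie node whose occurrence count is $\le \ell$, the matching text positions are collected by enumerating its descendant leaves, costing only $O(occ)$ total. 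When an internal node's count exceeds $\ell$ we revert to the generic search, whose $O(occ)$ term absorbs the $\log m\log\log g$ overhead. The search time is therefore $O(m+occ)$.

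For construction, the main obstacle is staying within $O(n)$ working space while building the Alstrup--Brodal--Rauhe grid and the trie $C''$. The grid is built in $O(g\log g)$ expected time and $O(g\log^\epsilon g)$ space; if the latter exceeds $n$, the whole theorem is trivialized by a plain suffix tree on $T$, so I may assume the construction space is $O(n)$. I would build $C''$ by sliding a length-$\ell$ window across $T$, hashing substrings via the collision-free Karp--Rabin signature $\kappa'$ computed during the construction of Theorem~\ref{thm:main}; on first appearance of a signature I insert its substring into $C''$ and initialize an occurrence list, and subsequent appearances extend the list up to capacity $\ell$, beyond which it is discarded. This scan costs $O(n)$ expected time, plus $O(\gamma\ell^2)$ insertion work into $C''$ and $O(\gamma\ell)$ time to propagate counts up the trie, all within $O(n)$ space. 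The remaining components inherit their costs from Section~\ref{sec:constr}, so the total is $O(n\log n)$ expected time and $O(n)$ space. Finally, the requirement to know $\gamma$ is avoided exactly as in Theorem~\ref{thm:rlcfg2}: I use $\delta$ in place of $\gamma$ when building the grammar and determine $\ell$ adaptively, doubling the attempted value and stopping the first time more than $g$ distinct length-$\ell$ substrings appear, which cannot occur while $\ell\le \log\log g \cdot \log\log\log g$ since $\gamma\ell^2 \le g$.
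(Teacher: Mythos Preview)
Your proposal is essentially the paper's own proof: the same upgrade to the Alstrup--Brodal--Rauhe grid, the same threshold $\ell=\log\log g\cdot\log\log\log g$, the same enriched trie $C''$ with occurrence counts everywhere and explicit lists of length at most $\ell$ only at the leaves, the same fallback to the generic search when $occ>\ell$, and the same sliding-window construction of $C''$ over the full text. The space and time accounting matches.

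One minor confusion: your final sentence about ``determining $\ell$ adaptively'' is unnecessary and the justification ``$\gamma\ell^2\le g$'' is not in general true. The threshold $\ell=\log\log g\cdot\log\log\log g$ depends only on $g$, the size of the grammar you have already built, so you know $\ell$ exactly without any reference to $\gamma$. (You are conflating this with the adaptive search for $\ell'=\log\log(n/\gamma)$ used for the trie $C$ in Section~\ref{sec:constr}, which genuinely depends on the unknown $\gamma$; that structure is not needed here.) Simply drop the adaptive remark and the proof goes through.
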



\section{Counting Pattern Occurrences} \label{sec:counting}

\citeN{Nav18} shows how an index like the one we describe in
Section~\ref{sec:index} can be used for counting the number of occurrences of
$P[1\dd m]$ in $T$. First, he uses the result of \citeN{DBLP:journals/siamcomp/Chazelle88} that a $p
\times p$ grid can be enhanced by associating elements of any algebraic 
semigroup to the points, so that later we can aggregate all the elements inside 
a rectangular area in time $O(\log^{2+\epsilon} p)$, for any constant 
$\epsilon>0$, with a structure using $O(p)$ space.\footnote{\citeN{Nav18} gives a simpler
explicit construction for groups.}
The structure is built in $O(p\log p)$ time and $O(p)$ space \cite{DBLP:journals/siamcomp/Chazelle88}.
Then, \citeN{Nav18} shows that one can associate with a CFG the number of secondary 
occurrences triggered by each point in a grid analogous to that of
Section~\ref{sec:primary}, so that their sums can be computed as described.

We now improve upon the space and time using our RLCFG of
Section~\ref{sec:index}. 
Three observations are in order (cf.~\cite{CNspire12,Nav18}):
\begin{enumerate}
\item The occurrences reported are all those derived from each point $(x,y)$
contained in the range $[x_1\dd x_2] \times [y_1\dd y_2]$ of each relevant 
partition $P[1\dd q] \cdot P[q+1\dd m]$.
\item Even if the same point $(x,y)$ appears in distinct overlapping ranges 
$[x_1\dd x_2] \times [y_1\dd y_2]$, each time it corresponds to a distinct value of 
$q$, and thus to distinct final offsets in $T$. Therefore, all the occurrences 
reported are distinct.
\item The number of occurrences reported by our procedure in 
Section~\ref{sec:secondary} depends only on the initial locus
associated with the grid point $(x,y)$. This will change with run-length nodes
and require special handling, as seen later.
\end{enumerate}

Therefore, we can associate with each point $(x,y)$ in the grid (and with the corresponding primary occurrence) the total
number of occurrences triggered with the procedure of Section~\ref{sec:secondary}. 
Then, counting the number of occurrences of a partition $P = P[1\dd q] \cdot P[q+1\dd m]$ corresponds 
to summing up the number of occurrences of the points that lie in the appropriate range of the grid.

As seen in Section~\ref{sec:pattern}, with our particular grammar there are only
$O(\log m)$ partitions of $P$ that must be tried in order to recover all of its
occurrences. Therefore, we use our structures of Sections~\ref{sec:primary} to
\ref{sec:ztrie} to find the $O(\log m)$ relevant ranges
$[x_1\dd x_2] \times [y_1\dd y_2]$, all in $O(m)$ time, and then we count the number
of occurrences in each such range in time $O(\log^{2+\epsilon} p)\subseteq 
O(\log^{2+\epsilon}g)$. The total counting time is then 
$O(m+\log m \log^{2+\epsilon} g)$. When the second term dominates, $m \le 
\log m \log^{2+\epsilon} g$, it holds $\log m \log^{2+\epsilon} g \in 
O(\log^{2+\epsilon} g \log\log g)$, which is $O(\log^{2+\epsilon} g)$ by 
infinitesimally adjusting $\epsilon$.

Under the assumption that there are no run-length rules (we remove this assumption later), our counting time is then $O(m+\log^{2+\epsilon} g)$. This improves sharply
upon the previous result \cite{Nav18} in space (because it builds
the grammar on a Lempel--Ziv parse instead of on attractors) and in time
(because it must consider all the $m-1$ partitions of $P$).

To build the structure, we must count the number of secondary occurrences 
triggered from any locus $v$, and then associate it with every point $(x,y)$
having $v$ as its locus. More precisely, we will compute the number of times
any node $u$ occurs in the parse tree of $T$. The process corresponds to 
accumulating occurrences
over the DAG defined by the pointers $u.\mathit{anc}$ and $u.\mathit{next}$
of the grammar tree nodes $u$. Initially, let the counter be $c(u)=0$ for every
grammar tree node $u$, except the root, where $c(root)=1$. We now traverse all
the nodes $u$ in some order, calling $\mathit{compute}(u)$ on each.
Procedure $\mathit{compute}(u)$ proceeds as follows: If $c(u)>0$ then the 
counter is already computed, so it simply returns $c(u)$. Otherwise, it sets
$c(u) = \mathit{compute}(u.\mathit{anc})+\mathit{compute}(u.\mathit{next})$,
recursively computing the counters of the two nodes. Nodes $A\rightarrow A_1^s$ are special cases. If $u.\mathit{next}$ is of the form 
$A_1^{[s-1]}$, then the correct formula is 
$c(u) = s \cdot \mathit{compute}(u.\mathit{anc})+
        \mathit{compute}(u.\mathit{next}.\mathit{next})$.
On the other hand, we do nothing for $\mathit{compute}(u)$ if $u$ is of the
form $A_1^{[s-1]}$.
The total cost is
the number of edges in the DAG, which is 2 per grammar tree node, $O(g)$.

Finally, the counter of each point $(x,y)$ associated with locus node $v$ is
the value $c(u)$, where $u$ is the parent of $v$. 
A special case arises, however, if $u$ corresponds to a run-length node 
$A \rightarrow A_1^s$, in which case
the locus $v$ is $A_1$. As seen in Section~\ref{sec:secondary}, the
number of times $u$ is reported is $s-\lceil (m-q)/|A_1| \rceil$, and
therefore the correct counter to associate with $(x,y)$ is 
$(s-\lceil (m-q)/|A_1| \rceil) \cdot c(u)$.
The problem is that such a formula depends on $m-q$, so each point $(x,y)$
could contribute differently for each alignment of the pattern. 
We then take a different approach for counting these occurrences.

Associated with loci $A_1$ with parent $A \rightarrow A_1^s$, instead of $(x,y)$, we add to the grid the points $(x,y')=(\exp(A_1)^{rev},\exp(A_1))$ with weight $c(u)$
and $(x,y'')=(\exp(A_1)^{rev},\exp(A_1)^2)$ with weight $(s-2)c(u)$,
extending the set $\mathcal{Y}$ so that it contains both $\exp(A_1)$ and $\exp(A_1)^2$.
(Note that there could be various equal 
string pairs, which can be stored multiple times, or we can accumulate their
counters.)
We distinguish three cases.
\begin{enumerate}[(i)]
\item For the occurrences
where $P[q+1\dd m]$ lies inside $\exp(A_1)$ (i.e., $m-q \le |A_1|$), 
the rule $A\to A_1^s$ is counted $c(u)+(s-2)c(u) = (s-1)c(u)$ times because both $(x,y')$
and $(x,y'')$ are in the range queried.

\item For the
occurrences where $P[q+1\dd m]$ exceeds the first $\exp(A_1)$ but does not span 
more than two (i.e., $|A_1| < m-q \le 2|A_1|$), 
the rule $A\to A_1^s$ is counted $(s-2)c(u)$ times because $(x,y'')$ is in the range queried
but $(x,y')$ is not.

\item For the occurrences where $P[q+1\dd m]$ spans more than two copies of $\exp(A_1)$,
however, the rule $A\to A_1^s$ is not counted at all because neither $(x,y')$ nor $(x,y'')$ is in the range queried.
\end{enumerate}

The key to handle the third case is that, if $P[1\dd q]$ spans a suffix of $\exp(A_1)$ and $P[q+1\dd m]$ spans at least two 
consecutive copies of $\exp(A_1)$, then it is easy to see that $P$ is 
``periodic'', $|A_1|$ being a ``period'' of $P$ \cite{Jewels}.

\begin{definition}
A string $P[1\dd m]$ has a {\em period} $p$ if $P$ consists of $\lfloor m/p 
\rfloor$ consecutive copies of $P[1\dd p]$ plus a (possibly empty) prefix of 
$P[1\dd p]$. Alternatively, $P[1\dd m-p]=P[p+1\dd m]$. The string $P$ is {\em 
periodic} if it has a period $p \le m/2$.
\end{definition}

We next show an important property relating periods and run-length nodes.

\begin{lemma} \label{lem:period}
Let there be a run-length rule $A \rightarrow A_1^s$ in our grammar. Then
$|A_1|$ is the shortest period of $\exp(A)$.
\end{lemma}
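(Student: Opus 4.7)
The plan is first to observe that $|A_1|$ is trivially a period of $\exp(A)=\exp(A_1)^s$, and then to rule out any shorter period. If $\exp(A)$ had a period $p<|A_1|$, then since $s\ge 2$ gives $|\exp(A)|=s|A_1|\ge |A_1|+p$, Fine and Wilf's theorem yields that $\exp(A)$ has period $d=\gcd(p,|A_1|)$, a proper divisor of $|A_1|$. Hence $\exp(A_1)=w^q$ with $|w|=d$, $q=|A_1|/d\ge 2$, and we may take $w$ primitive. It therefore suffices to show that $\exp(A_1)$ cannot be a nontrivial power. Since $A_1$ is a symbol of $T_r$ (for the round $r$ in which the run $A_1^s$ was detected), it is either a terminal (in which case $|A_1|=1$ and the lemma is trivial) or a block-forming nonterminal created in round $r$.

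I would prove by induction on $r$ that the expansion of every symbol of $T_r$ is primitive; the base case is immediate. For the inductive step, take $X\to B_1\cdots B_k$ with $k\ge 2$ (where $B_1\cdots B_k$ is a block of $\hat{T}_{r-1}$) and assume for contradiction $\exp(X)=w^q$ with $w$ primitive and $q\ge 2$. Each $B_i$ represents $Y_i^{t_i}$ in $T_{r-1}$ (with $t_i=1$ if $B_i$ is not a run-metasymbol), where $\exp(Y_i)$ is primitive by the inductive hypothesis. The argument splits on whether some prefix partial sum $|\exp(B_1\cdots B_j)|$ with $1\le j<k$ is a nonzero multiple of $|w|$. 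In the \emph{aligned} case, take the smallest such partial sum $a^*|w|=|\exp(B_1\cdots B_{j^*})|$; by symmetry (replacing prefix with suffix sums if necessary) we may assume $a^*\le q/2$. Iterating Lemma~\ref{lem:altr} on matching fragments of the form $w^{a^*}$ consisting of full level-$(r-1)$ blocks, with Lemma~\ref{lem:lcg} used to propagate the block-boundary status of position $a^*|w|$ to every multiple of $a^*|w|$ inside $\exp(X)$, yields $B_1\cdots B_{2j^*}=(B_1\cdots B_{j^*})^2$, giving the sequence $B_1\cdots B_k$ period $j^*$. If $j^*=1$, then all $B_i$'s would be equal, contradicting the distinctness of consecutive symbols in $\hat{T}_{r-1}$. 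If $j^*\ge 2$, picking $i_0\in[1,j^*]$ minimizing $\pi_{r-1}(B_{i_0})$ and using that consecutive $\pi_{r-1}$-values in $\hat{T}_{r-1}$ are distinct, the position $i_0$ (if interior to the first period) or $i_0+j^*$ (if $i_0\in\{1,j^*\}$) is an internal $\pi_{r-1}$-local-minimum of $B_1\cdots B_k$, contradicting the fact that $B_1\cdots B_k$ is a block. In the \emph{unaligned} case, every internal $w$-boundary $a|w|$ lies strictly within some $B_{i(a)}=Y_{i(a)}^{t_{i(a)}}$; since $\exp(B_{i(a)})$ is a factor of $w^q$, it has period $|w|$, and combining with the period $|Y_{i(a)}|$ (when $t_{i(a)}\ge 2$) via Fine and Wilf plus the primitivity of $\exp(Y_{i(a)})$ forces $|Y_{i(a)}|=|w|$ with $\exp(Y_{i(a)})$ a cyclic conjugate of $w$; aggregating these length constraints across the straddling blocks eventually produces a partial sum that is a multiple of $|w|$, reducing to the aligned case.

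The main obstacle will be the unaligned case, particularly the sub-case $t_{i(a)}=1$, where Fine and Wilf does not directly apply to $\exp(B_{i(a)})$ and the cyclic-rotation constraints on consecutive $\exp(Y_{i(a)})$'s must be combined combinatorially to force a partial sum onto a multiple of $|w|$. A secondary subtlety is verifying, via Lemma~\ref{lem:lcg}, that the boundary effects permitted by local consistency do not obstruct the alignment at position $a^*|w|$ from propagating to every multiple of $a^*|w|$ within $\exp(X)$ in the aligned case.
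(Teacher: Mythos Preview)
Your reduction to primitivity of $\exp(A_1)$ is sound, but the inductive step you sketch has real gaps that do not close with the tools you cite. In the unaligned case, your Fine--Wilf application to $\exp(B_{i(a)})$ needs $|\exp(B_{i(a)})|\ge |w|+|Y_{i(a)}|-\gcd(|w|,|Y_{i(a)}|)$, which is not guaranteed by the mere fact that $B_{i(a)}$ straddles one $w$-boundary; and even when it applies, primitivity of $\exp(Y_{i(a)})$ yields only $|Y_{i(a)}|\mid |w|$, not equality---you would additionally need $|\exp(B_{i(a)})|\ge |w|$ so that a full rotation of the primitive $w$ sits inside and inherits period $|Y_{i(a)}|$, but that length bound again need not hold. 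For $t_{i(a)}=1$ you have nothing. The ``aggregating length constraints eventually produces a multiple of $|w|$'' step is a hope, not an argument. The aligned case also has a hole: to invoke Lemma~\ref{lem:altr} you need \emph{both} occurrences of $w^{a^*}$ to consist of full level-$(r{-}1)$ blocks, but only the first one does a priori; the right endpoint of the second is not known to be a block boundary, and the exceptional positions allowed by Lemma~\ref{lem:lcg} sit precisely at those extremes.

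The paper's proof is shorter and sidesteps all of this; it does not attempt your stronger claim that every block symbol has primitive expansion. Taking $p$ to be the shortest period of $\exp(A)=T[i\dd j]$ and $d:=|A_1|/p\ge 2$ (from Fine--Wilf), it proves by induction on the level $r'\in[0\dd r]$ that $i+p-1$ and $j-p$ are level-$r'$ block boundaries. The step works because (i) the hypothesis at level $r'-1$ makes both matching fragments $T[i\dd j-p]=T[i+p\dd j]$ consist of full level-$(r'-1)$ blocks, so Lemma~\ref{lem:altr} applies cleanly, and (ii) there is a \emph{seed} boundary $i+|A_1|-1=i+dp-1$, which separates two copies of $\exp(A_1)$ and is therefore a level-$r$ (hence level-$r'$) boundary; from this seed, the period-$p$ matching propagates level-$r'$ boundaries to every $i+dp+kp-1$, and a second use of Lemma~\ref{lem:altr} on $T[i\dd j-dp]=T[i+dp\dd j]$ (both unions of whole $\exp(A_1)$-blocks) transports one of them back to $i+p-1$. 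At $r'=r$ this places a level-$r$ boundary strictly inside the first copy of $\exp(A_1)$, a contradiction. The crucial ingredient you are missing is this seed: it exists precisely because $A\rightarrow A_1^s$ is a run-length rule, whereas your attempt to prove primitivity of an arbitrary block symbol $X$ has no interior level-$r$ boundary to anchor the propagation.
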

\begin{proof}
Consider an $A$-labeled node $v$ in the parse tree of $T$
and let $proj(v)=[i\dd j]$ so that $T[i\dd j]=\exp(A)$.
Denote the shortest period of $\exp(A)$ by $p$ and note that $|A_1|$ is also a period of $\exp(A)=\exp(A_1)^s$.
We conclude from the Periodicity Lemma~\cite{fine1965uniqueness} that $p = \gcd(p, |A_1|)$ and thus $d = |A_1|/p$ is an integer. For a proof by contradiction, suppose that $d>1$. Let $r$ denote the level of the run represented by $v$ (so that $A$ is a symbol in $\hat{T}_r$ and $A_1$ is a symbol in $T_r$).
\begin{claim}
For each level $r'\in [0\dd r]$, both $i+p-1$ and $j-p$ are level-$r'$ block boundaries.
\end{claim}
\begin{proof}
We proceed by induction on $r'$. The base case for $r'=0$ holds trivially.
Thus, consider a level $r'\in [1\dd r]$ and suppose that the claim holds for $r'-1$.
By the inductive assumption, $T[i+p\dd j]=T[i\dd j-p]$ consist of full level-$(r'-1)$ blocks, so \Cref{lem:altr} yields $\hat{B}_{r'-1}(i+p,j)=\hat{B}_{r'-1}(i,j-p)$.
Since $i+dp-1$ is a level-$r'$ block boundary, this set is non-empty and its minimum satisfies $\min \hat{B}_{r'-1}(i+p,j) < dp-p$.
The final claim of \Cref{lem:altr} thus yields $B_{r'}(i+dp-p,j-p) = B_{r'}(i+dp,j)$. 
Consequently, since $i+dp-1$ is a level-$r'$ block boundary, $p-1\in B_{r'}(i+dp-p,j-p)=B_{r'}(i+dp,j)$,
so $i+dp+p-1$ is also a level-$r'$ block boundary. Iterating this reasoning $d(s-1)-2$ more times, we conclude that
$i+dp+2p-1, i+dp+3p-1,\ldots, j-p$ are all level-$r'$ block boundaries. 
Moreover, \Cref{lem:altr} applied to $T[i\dd j-dp]=T[i+dp\dd j]$,
which consist of full level-$r'$ blocks, implies $p-1 \in B_{r'}(i,j-dp)=B_{r'}(i+dp,j)$, so $i+p-1$ is also a level-$r'$ block boundary.
\end{proof}
Note that $T[i\dd j]$ consists of $s$ full level-$r$ blocks of length $dp$ each.
The claim instantiated to $r'=r$ contradicts this statement imposing blocks of
length at most $p$ at the extremities. \qed
\end{proof}

\Cref{lem:period} implies that, in the remaining case to be handled, the length $|A_1|$ must be precisely the shortest period of $P$.

\begin{lemma}
Let $P$ be contained in $\exp(A)$ and contain two consecutive copies of $\exp(A_1)$, from rule $A \rightarrow
A_1^s$. Then $|A_1|$ is the shortest period of $P$.
\end{lemma}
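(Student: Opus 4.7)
The plan is a short argument that chains the Periodicity Lemma with the preceding \Cref{lem:period}.

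First, I would observe that $|A_1|$ is trivially a period of $P$: since $P$ is a substring of $\exp(A)=\exp(A_1)^s$, and the latter clearly has period $|A_1|$, so does every substring of length at least $|A_1|$. Moreover, the hypothesis that $P$ contains two consecutive copies of $\exp(A_1)$ yields $|P|\ge 2|A_1|$, so $|A_1|\le |P|/2$ is an admissible period.

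Next, for contradiction, I would assume that $P$ has a period $p<|A_1|$. Since $p+|A_1|\le 2|A_1|\le |P|$, the Fine--Wilf Periodicity Lemma (Lemma of Fine and Wilf, \cite{fine1965uniqueness}) applies to the two periods $p$ and $|A_1|$ of $P$ and gives that $d:=\gcd(p,|A_1|)$ is also a period of $P$. By construction $d\le p<|A_1|$ and $d$ divides $|A_1|$.

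Now I would transfer this small period $d$ from $P$ back to $\exp(A)$. The substring $\exp(A_1)\exp(A_1)=\exp(A_1)^2$ of $P$ inherits the period $d$. Since $d\mid |A_1|$, the prefix of $\exp(A_1)^2$ of length $|A_1|$, which is exactly $\exp(A_1)$, is a concatenation of $|A_1|/d$ copies of its length-$d$ prefix $w$; that is, $\exp(A_1)=w^{|A_1|/d}$. Consequently, $\exp(A)=\exp(A_1)^s=w^{s|A_1|/d}$ also has period $d<|A_1|$, contradicting \Cref{lem:period}, which asserts that $|A_1|$ is the \emph{shortest} period of $\exp(A)$. The only non-routine step is identifying the correct instance of Fine--Wilf (we just need $p+|A_1|\le|P|$, which holds comfortably because $|P|\ge 2|A_1|$), so no real obstacle arises.
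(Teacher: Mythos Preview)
Your proof is correct and follows essentially the same route as the paper: both arguments show $|A_1|$ is a period with $|A_1|\le |P|/2$, assume a shorter period $p$, invoke Fine--Wilf to obtain the period $\gcd(p,|A_1|)<|A_1|$ dividing $|A_1|$, and then push this period down to $\exp(A_1)$ (and hence $\exp(A)$) to contradict \Cref{lem:period}. The only cosmetic difference is that you spell out the step $\exp(A_1)=w^{|A_1|/d}$ explicitly, whereas the paper simply notes that $P$ contains $\exp(A_1)$ and the divisibility does the rest.
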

\begin{proof}
Clearly $|A_1|$ is a period of $P$ because $P[1\dd m]$ is contained in a 
concatenation of strings $\exp(A_1)$; further, $|A_1|\le m/2$. Now assume $P$
has a shorter period, $p < |A_1|$. Since $|A_1|+p < m$, $P$ also has a period
of length $p' = \gcd(|A_1|,p)$ \cite{fine1965uniqueness}. This period is smaller 
than $|A_1|$ and divides it. Since $P$ contains $\exp(A_1)$, this implies that 
$\exp(A_1)$, and thus $\exp(A)$, also have a period $p' < |A_1|$, contradicting
Lemma~\ref{lem:period}.
\end{proof}

Therefore, all the run-length nonterminals $A \rightarrow A_1^s$, where $A_1$ 
is a locus of $P$ with offset $q$ and
$m \ge 2|A_1|$, must satisfy $\exp(A_1)=P[q+1\dd q+p]$, where $p$ is the
shortest period of $P$. The shortest period $p$ is easily computed in $O(m)$ 
time \cite[Sections~1.7 and 3.1]{Jewels}.

It is therefore sufficient to compute the Karp--Rabin fingerprints
$k=\kappa'(\exp(A_1))$ (which we 
easily retrieve from the data we store for Lemma~\ref{lem:kr})
for all the run-length rules $A \rightarrow A_1^s$, and store them in a
perfect hash table with information on $A_1$. 
Let $s(A_1) = \{ s \ge 3, A \rightarrow A_1^s\}$ be the different exponents associated with $A_1$. To each $s \in s(A_1)$, we associate two values 
\[
        c(A_1,s) = \sum \{ c(A) : A \rightarrow A_1^{s'}, s'\geq s \} \quad \text{and} \quad
        c'(A_1,s) = \sum \{ s'\cdot c(A) : A \rightarrow A_1^{s'}, s'\geq s\}.
\]
where $c(A)$ refers to $c(u)$ for the (only) internal grammar tree node $u$ corresponding to nonterminal $A$.
The total space to store the sets $s(A_1)$ and associated values is $O(g)$. 

For each of the $O(\log m)$ relevant splits 
$P[1\dd q] \cdot P[q+1\dd m]$ obtained in Section~\ref{sec:pattern}, if $m-q > 2p$, then
we look for $k=\kappa'(P[q+1\dd q+p])$ in the hash table. If we find it mapped to a non-terminal $A_1$, 
then we add $c'(A_1,s_{\min})-c(A_1,s_{\min})\lceil (m-q)/p \rceil$ to the result, 
where $s_{\min} = \min \{ s \in s(A_1), (s-1)|A_1|\ge m-q\}$.
This ensures that each rule $A\to A_1^{s}$ with $s\ge 3$
and $|A_1|(s-1)\ge m-q$ is counted $(s-\lceil (m-q)/p \rceil)\cdot c(A)$ times. We find
$s_{\min}$ by exponential search
on $s(A_1)$ in $O(\log m)$ time, which over all the splits adds up to $O(\log^2 m)$.

Note that all the Karp--Rabin fingerprints
for all the substrings of $P$ can be computed in $O(m)$ time (see
Section~\ref{sec:ztrie}), and that we can easily rule out false positives:
\Cref{lemma: z-fast} filters out any decomposition of $P$ for which $P[q+1\dd m]$ is not a prefix
of any string $y\in \mathcal{Y}$. Since $\exp(A_1)^{s-1}\in \mathcal{Y}$ for every rule $A\to A_1^s$
and since $\mathcal{Y}$ consists of substrings of $T$, this guarantees that $\kappa'$
does not admit any collision between $P[q+1\dd q+p]$ and a substring of $T$.

\begin{theorem} \label{thm:count}
Let $T[1\dd n]$ have an attractor of size $\gamma$. Then, there exists a data
structure of size $g = O(\gamma\log(n/\gamma))$ that can count the number of
occurrences of any pattern $P[1\dd m]$ in $T$ in time 
$O(m+\log^{2+\epsilon} g) \subseteq O(m+\log^{2+\epsilon} n)$ for 
any constant $\epsilon>0$.
The structure can be built in $O(n\log n)$ expected time and $O(n)$ space,
without the need to know $\gamma$.
\end{theorem}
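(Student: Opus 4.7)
My plan is to plug the machinery of Sections~\ref{sec:attractor} and~\ref{sec:index} into Chazelle's semigroup range-aggregation structure on our two-dimensional grid $\mathcal{G}$. I would first build the RLCFG of \Cref{thm:rlcfg2} and all the index components of \Cref{thm:main} (grid, z-fast tries on $\mathcal{X}$ and $\mathcal{Y}$, parse-time hash tables, tries $T_G$ and $T_G'$ for extraction and fingerprints). For every grammar tree node $u$, I would precompute the counter $c(u)$ equal to the number of times $u$ is reached by the traversal of Section~\ref{sec:secondary}, via a single DAG pass over the \emph{anc}/\emph{next} pointers in $O(g)$ time, treating run-length nodes $A\to A_1^s$ by the formula $c(u) = s\cdot c(u.\mathit{anc}) + c(u.\mathit{next}.\mathit{next})$. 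For each point $(x,y)\in \mathcal{G}$ whose locus $v$ has parent $u$, I would set its weight to $c(u)$. Chazelle's construction then yields a structure of size $O(g)$ built in $O(g\log g)$ time that aggregates weights in any axis-aligned rectangle in $O(\log^{2+\epsilon} g)$ time.

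At query time, I parse $P$ with the stored permutations as in Section~\ref{sec:pattern} to obtain the $O(\log m)$ relevant splits via $M(P)+1$ (\Cref{lem:pattern}), and for each split I locate the grid range $[x_1\dd x_2]\times[y_1\dd y_2]$ using \Cref{lemma: z-fast} in $O(m)$ total time. Querying the aggregation structure on each range and summing gives the count, using observations 1--2 recalled in the excerpt to ensure no double counting across splits. The total time is $O(m + \log m \cdot \log^{2+\epsilon} g)$; when the second term dominates, $\log m \le \log^{2+\epsilon} g$, so $\log m \cdot \log^{2+\epsilon} g = O(\log^{2+\epsilon} g \cdot \log\log g)$, which fits within $O(\log^{2+\epsilon'} g)$ after adjusting $\epsilon$ infinitesimally.

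The main obstacle is that a run-length locus $A_1$ with parent $A\to A_1^s$ contributes $(s-\lceil (m-q)/|A_1|\rceil)\cdot c(A)$ occurrences rather than a fixed count, so a single weight cannot be attached to its grid point. I would handle this by splitting the contribution into three regimes based on $m-q$ versus $|A_1|$. For cases (i) $m-q\le |A_1|$ and (ii) $|A_1|<m-q\le 2|A_1|$, I would insert two artificial points, $(\exp(A_1)^{rev}, \exp(A_1))$ with weight $c(A)$ and $(\exp(A_1)^{rev}, \exp(A_1)^2)$ with weight $(s-2)c(A)$, so the sought prefix of $P[q+1\dd m]$ of length at most $2|A_1|$ selects exactly the right combination of these two points via the $\mathcal{Y}$-range. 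For case (iii), $m-q>2|A_1|$, \Cref{lem:period} forces $|A_1|$ to be the shortest period $p$ of $P$, which is computable in $O(m)$ time.

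Thus for case (iii) I would maintain, for each distinct $A_1$ appearing as the base of a run-length rule, two arrays indexed by the sorted exponents $s(A_1)$ storing the suffix sums $c(A_1,s)=\sum_{A\to A_1^{s'},\, s'\ge s} c(A)$ and $c'(A_1,s)=\sum_{A\to A_1^{s'},\, s'\ge s} s'\cdot c(A)$. For each split with $m-q>2p$, I would look up $\kappa'(P[q+1\dd q+p])$ in a perfect hash table mapping Karp--Rabin fingerprints of $\exp(A_1)$ to $A_1$, binary-search $s(A_1)$ for the smallest $s_{\min}$ with $(s_{\min}-1)|A_1|\ge m-q$, and add $c'(A_1,s_{\min})-\lceil(m-q)/p\rceil\cdot c(A_1,s_{\min})$ to the count; \Cref{lemma: z-fast}'s filtering and the fact that $\mathcal{Y}$ contains $\exp(A_1)^{s-1}$ rule out fingerprint collisions. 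The total extra cost is $O(\log^2 m)$, absorbed by $O(\log^{2+\epsilon} g)$. Construction cost follows from \Cref{thm:main} and the $O(g\log g)$ setup of Chazelle's structure, giving $O(n\log n)$ expected time and $O(n)$ working space, with no need to know $\gamma$ by \Cref{thm:rlcfg2}.
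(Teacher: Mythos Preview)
Your proposal is correct and follows essentially the same approach as the paper: Chazelle's semigroup range-aggregation on the grid with precomputed counters $c(u)$, the same two auxiliary points $(\exp(A_1)^{rev},\exp(A_1))$ and $(\exp(A_1)^{rev},\exp(A_1)^2)$ with weights $c(A)$ and $(s-2)c(A)$ to handle run-length loci when $m-q\le 2|A_1|$, and the same periodicity-based hash-table lookup with suffix sums $c(A_1,s)$, $c'(A_1,s)$ for the remaining case. The only cosmetic differences are that the paper uses exponential rather than binary search on $s(A_1)$ (same $O(\log m)$ cost) and that the claim ``$|A_1|$ is the shortest period of $P$'' is stated in the paper as a separate lemma derived from \Cref{lem:period} rather than as \Cref{lem:period} itself.
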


An index that is correct w.h.p.\ can be built
in $O(n+g\log g)$ expected time (the structures for secondary occurrences
and for short patterns, Sections~\ref{sec:secondary} and 
\ref{sec:short patterns}, are not needed).
If we know $\gamma$, the index can be built in $O(\log(n/\gamma))$ expected left-to-right passes on $T$ plus $O(g)$ main memory space.

\subsection{Optimal time}

\citeN{DBLP:journals/siamcomp/Chazelle88} offers other tradeoffs for operating the elements in a range, all very similar and with the 
same construction cost: $O(\log^2 p \log\log p)$ time and $O(p\log\log p)$ 
space, $O(\log^2 p)$ time and $O(p\log^\epsilon p)$ space. These yield, for our
index, $O(m+(\log n\log\log n)^2)$ time and $O(g\log\log g)$ space, and
$O(m+\log^2 n\log\log n)$ time and $O(g\log^\epsilon g)$ space.

If we use $O(p\log p)$ space, however, the cost to compute the 
sum over a range decreases significantly, to $O(\log p)$ \cite{Wil85,DBLP:conf/focs/AlstrupBR00}. The
expected construction cost becomes $O(p\log^2 p)$ \cite{DBLP:conf/focs/AlstrupBR00}.
Therefore, using $O(g\log g) \subseteq O(\gamma\log(n/\gamma)\log n)$ space, 
we can count in time $O(m+\log m\log g) \subseteq O(m+\log g\log\log g)
\subseteq O(m+\log n\log\log n)$, which is yet another tradeoff.

More interesting is that we can reduce this latter time to the optimal $O(m)$.
We index in a compact trie like $C''$ of Section~\ref{sec:optimal} all the text
substrings of length up to $\ell=2\log n \log(n/\gamma)$, directly storing their
number of occurrences (but not their occurrence lists as in $C''$). Since there
are $\gamma \ell$ distinct substrings of length $\ell$, this requires 
$O(\gamma \log n \log(n/\gamma))$ space. 


Consider our counting time $O(m+\log m\log n)$.
If $\log(n/\gamma) \le \log\log n$, then $\gamma \ge n/\log n$, and thus a 
suffix tree using space $O(n) = O(\gamma\log n)$ can count in optimal time 
$O(m)$. Thus, assume $\log(n/\gamma) > \log\log n$. The counting time can 
exceed $O(m)$ only if $m \le \log m \log n$.  In this case,
since $m \le \log m \log n \le \log^2 n$, we have $m \le 2\log n \log\log n \le
2\log n \log(n/\gamma) = \ell$. All the queries for patterns of those lengths 
are directly answered using our variant of $C''$, in time $O(m)$, and thus our
counting time is always $O(m)$.

We can still apply this idea if we do not know $\gamma$. Instead, we compute
$\delta$ (recall Section~\ref{sec:delta}) and use $\ell = 2\log
n\log(n/\delta)$. Since there are $T(\ell) \le \delta \ell$ distinct
substrings of length $\ell$ in $T$, the space for $C''$ is $O(\delta \ell)
= O(\delta \log n \log(n/\delta)) \subseteq O(\gamma \log n \log(n/\gamma))$,
the latter by Lemma~\ref{lem:delta}. 
The reasoning of the previous paragraph
then applies verbatim if we replace $\gamma$ by $\delta$.

The total space is then $O(g\log g + \gamma \log n \log(n/\gamma)) =
O(\gamma \log n \log(n/\gamma))$. The construction cost of $C''$ is 
$O(n+\gamma\log^2 n \log^2(n/\gamma))$ time and $O(\gamma\log n\log(n/\gamma))$
space.\footnote{If we use
$\ell=2\log n\log(n/\delta)$, then $C''$ is built in $O(\delta\log^2 n
\log^2(n/\delta)) \subseteq O(\gamma\log^2 n \log^2(n/\gamma))$ time and
$O(\delta\log n \log(n/\delta)) \subseteq O(\gamma\log n \log(n/\gamma))$ 
space, because the costs increase with $\delta$.}
Alternatively we can obtain it by pruning the suffix tree of $T$ in
time and space $O(n)$. 
The cost to build the grid is $O(g\log^2 g) \subseteq (g\log^2 n)$.
Note that, if $\gamma\log(n/\gamma) \log n > n$, we trivially obtain the result with a suffix tree;
therefore the construction time of the grid is in $O(n\log n)$.


\begin{theorem} \label{thm:count2}
Let $T[1\dd n]$ have an attractor of size $\gamma$. Then, there exists a data
structure of size $O(\gamma\log(n/\gamma)\log n)$ that can count the number of
occurrences of any pattern $P[1\dd m]$ in $T$ in time $O(m)$.
The structure can be built in $O(n\log n)$ expected time and 
$O(n)$ space, without the need to know $\gamma$.
\end{theorem}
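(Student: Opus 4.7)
The plan is to combine \Cref{thm:count} with two ingredients: a faster (but larger) range-summation data structure, and a compact trie that directly stores counts for short patterns. First I would replace the grid of \Cref{thm:count} by the $O(p\log p)$-space structure of Willard and Alstrup--Brodal--Rauhe that answers range-sum queries in $O(\log p)$ time and is built in $O(p\log^2 p)$ expected time. Plugged into the counting scheme of \Cref{sec:counting}, this immediately yields a counting time of $O(m+\log m\log g)\subseteq O(m+\log n\log\log n)$ within space $O(g\log g)\subseteq O(\gamma\log(n/\gamma)\log n)$; the handling of run-length rules via Karp--Rabin fingerprints and the values $c(A_1,s),c'(A_1,s)$ carries over verbatim because it contributes only $O(\log^2 m)\subseteq O(m)$ extra time.

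Next, to push the time down to the optimum $O(m)$, I would handle short patterns with a precomputed table. I would build a compact trie $C''$ (in the spirit of \Cref{sec:optimal}) containing every substring of $T$ of length at most $\ell=2\log n\log(n/\gamma)$, storing at every node the exact number of text occurrences of the represented substring. Since each distinct length-$\ell$ substring must have a copy crossing an attractor position, there are at most $\gamma\ell$ such substrings, so $C''$ fits in $O(\gamma\log n\log(n/\gamma))$ space, which is absorbed by the target bound. The key observation is then that the generic $O(m+\log m\log n)$ counting time can be super-$O(m)$ only when $m\le\log m\log n\le\log^2 n$; provided $\log(n/\gamma)>\log\log n$ this forces $m\le 2\log n\log\log n\le\ell$, so the query is answered by a single $O(m)$-time traversal of $C''$. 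In the remaining regime $\log(n/\gamma)\le\log\log n$ we have $\gamma\ge n/\log n$, and a plain suffix tree already fits in the budget $O(\gamma\log n)$ and counts in optimal time.

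To lift the dependence on $\gamma$ (which is NP-hard to compute), I would reuse the substitute measure $\delta\le\gamma$ from \Cref{sec:delta}: set $\ell=2\log n\log(n/\delta)$, which can be computed in $O(n)$ time by \Cref{lem:compute delta}. Since $T(\ell)\le\delta\ell$, the trie still has size $O(\delta\log n\log(n/\delta))\subseteq O(\gamma\log n\log(n/\gamma))$ by \Cref{lem:delta}, and the thresholding argument of the previous paragraph goes through with $\delta$ in place of $\gamma$. For construction, $C''$ can be obtained by pruning the suffix tree of $T$, which takes $O(n)$ time and space; the grid is built in $O(g\log^2 g)\subseteq O(n\log n)$ expected time (otherwise $\gamma\log(n/\gamma)\log n>n$ and the suffix tree trivially meets the bound); and all remaining structures of \Cref{sec:counting} are built within $O(n\log n)$ expected time and $O(n)$ space.

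The main obstacle I foresee is verifying the threshold computation $m\le\ell$: one has to be careful that $\log m\log n$ really is bounded by $2\log n\log(n/\gamma)$ in the relevant regime, and that the edge case $\log(n/\gamma)\le\log\log n$ is genuinely covered by the suffix-tree fallback (and likewise with $\delta$ when $\gamma$ is unknown). Everything else is a clean composition: the range-sum upgrade is black-box, the short-pattern trie reuses the counting field of \Cref{sec:counting}, and the construction argument is the same amortization over the attractor that was already used in \Cref{sec:optimal}.
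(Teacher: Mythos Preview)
Your proposal is correct and follows essentially the same approach as the paper: upgrade the grid to the $O(p\log p)$-space, $O(\log p)$-time range-sum structure of Willard/Alstrup--Brodal--Rauhe, handle short patterns with a compact trie $C''$ storing occurrence counts for all substrings up to length $\ell=2\log n\log(n/\gamma)$, use the same case split on $\log(n/\gamma)\lessgtr\log\log n$ (with the suffix-tree fallback in the dense regime), and replace $\gamma$ by $\delta$ for the construction. Even the construction accounting (pruning the suffix tree for $C''$, the $O(g\log^2 g)\subseteq O(n\log n)$ bound for the grid with the suffix-tree fallback when $\gamma\log(n/\gamma)\log n>n$) matches the paper's argument.
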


If we know $\gamma$, then an index that is correct w.h.p.\ can be built in 
$O(g \log n)$ space apart from the passes on $T$, but we must build $C''$ without using a suffix tree, in
additional time $O(\gamma \log^2 n \log^2(n/\gamma))$.
Table~\ref{tab:count} summarizes the results.

\begin{table}[t]
\begin{center}
\tbl{Space-time tradeoffs for counting; formulas are slightly simplified.
\label{tab:count}}
{\begin{tabular}{l|c|c}
Source & Space & Time \\
\hline
Baseline \cite{Nav18}
        & $O(z\log(n/z))$
        & $O(m\log^{2+\epsilon} n)$ \\
\hline
Theorem~\ref{thm:count}
        & $O(\gamma\log(n/\gamma))$
        & $O(m+\log^{2+\epsilon} n)$ \\
Theorem~\ref{thm:count2}
        & $O(\gamma\log(n/\gamma)\log n)$
        & $O(m)$ \\
\hline
\end{tabular}}
\end{center}
\end{table}


\section{Conclusions}

The size $\gamma$ of the smallest string attractor of a text $T[1..n]$ is a recent measure of compressibility \cite{KP18} that is particularly well-suited to express the amount of information in repetitive text collections. It asymptotically lower-bounds many other popular dictionary-based compression measures like the size $z$ of the Lempel--Ziv parse or the size $g$ of the smallest context-free grammar generating (only) $T$, among many others. It is not known whether one can always represent $T$ in compressed form in less than $\Theta(\gamma\log(n/\gamma))$ space, but within this space it is possible to offer direct access and reasonably efficient searches on $T$ \cite{KP18,NP18}.

In this article we have shown that, within $O(\gamma\log(n/\gamma))$ space, one can offer much faster searches, in time competitive with, and in most cases better than, the best existing results built on other dictionary-based compression measures, all of which use $\Omega(z\log(n/z))$ space. By building on the measure $\gamma$, our results immediately apply to any index that builds on other dictionary measures like $z$ and $g$. Our results are even competitive with self-indexes based on statistical compression, which are much more mature: we can locate the $occ$ occurrences in $T$ of a pattern $P[1..m]$ in $O(m+(occ+1)\log^\epsilon n)$ time, and count them in $O(m+\log^{2+\epsilon} n)$ time, whereas the fastest statistically-compressed indexes obtain $O(m+occ \log^\epsilon n)$ time to locate and $O(m)$ time to count, in space proportional to the statistical entropy of $T$ \cite{Sad03,BN13}.

Further, we show that our results can be obtained without even knowing an attractor nor its minimum size $\gamma$. Rather, we can compute a lower bound $\delta \le \gamma$ in linear time and use it to achieve $O(\gamma\log(n/\gamma))$ space without knowing $\gamma$. This is relevant because computing $\gamma$ is NP-hard \cite{KP18}. Previous work \cite{NP18} assumed that, although they obtained indexes bounded in terms of $\gamma$, one would compute some upper bound on it, like $z$, to apply it in practice. With our result, we obtain results bounded in terms of $\gamma$ without the need to find it.

Finally, we also obtain for the first time optimal search time using any index bounded by a dictionary-based compression measure. Within space $O(\gamma\log(n/\gamma)\log^\epsilon n)$, for any constant $\epsilon>0$, we can locate the occurrences in time $O(m+occ)$, and within $O(\gamma\log(n/\gamma)\log n)$ space we can count them in time $O(m)$. This is an important landmark, showing that it is possible to obtain the same optimal time reached by suffix trees in $O(n)$ space, now in space bounded in terms of a very competitive measure of repetitiveness. Such optimal time had also been obtained within space bounded by other measures that adapt to repetitiveness \cite{GNP18,BC17}, but these are weaker than $\gamma$ both in theory and in practice. Further, no statistical-compressed self-index using $o(n)$ space has obtained such optimal time.

As a byproduct, our developments yield a number of new or improved results on accessing and indexing on RLCFGs and CFGs; these are collected in Appendix~\ref{sec:rlcfgs}.

\paragraph{Future work}
There are still several interesting challenges ahead:

\begin{itemize}
\item While one can compress any text $T$ to $O(z)$ or $O(g)$ space (and even to smaller measures like $O(b)$ \cite{SS82}), it is not known whether one can compress it to $o(\gamma\log(n/\gamma))$ space. This is important to understand the nature of the concept of attractor and of measure $\gamma$.
\item While one can support direct access and searches on $T$ in space $O(g)$, it is not known whether one can support those in $o(z\log(n/z))$ or $o(\gamma\log(n/\gamma))$ space. Again, determining if this is a lower bound would yield a separation between $\gamma$, $z$, and $g$ in terms of indexability.
\item If we are given the size $\gamma$ of some attractor, we can build our indexes in a streaming-like mode, with $O(\log(n/\gamma))$ expected passes on $T$ plus main-memory space bounded in terms of $\gamma$, with high probability. This is relevant in practice when indexing huge text collections. It would be important to do the same when no bound on $\gamma$ is known. Right now, if we do not know $\gamma$, we need $O(n)$ extra space for a suffix tree that computes the measure $\delta \le \gamma$.
\item It is not clear if we can reach optimal search time in the ``minimum'' space $O(\gamma\log(n/\gamma))$, or what is the best time we can obtain in this case.
\item The measure $\delta$ is interesting on its own, as it lower-bounds
$\gamma$. It is interesting to find more precise bounds in terms of $\gamma$, and whether we can compress $T$, and even offer direct access and indexed searches on
it, within space $O(\delta\log(n/\delta))$.
\item The fact that only $O(\log m)$ partitions of $P$ are needed to spot all of its occurrences, which outperforms previous results \cite{NIIBT15,DBLP:conf/soda/GawrychowskiKKL18}, was fundamental to obtain our bounds, and we applied them to counting in order to obtain optimal times as well. It is likely that this result is of even more general interest and can be used in other problems related to dictionary-compressed indexing and beyond.
\item The result we obtain on counting pattern occurrences in $O(\gamma\log(n/\gamma))$ space is generalized to CFGs in Appendix~\ref{sec:rlcfgs}, but we could not generalize our result on specific RLCFGs to arbitrary ones. It is open whether this is possible or not.
\end{itemize}

\section*{Acknowledgements}
We thank Travis Gagie for pointing us the early reference related to $\delta$ \cite{RRRS13} and to Dmitry Kosolobov, who pointed out that a referenced result holds for constant alphabets only \cite{gagie2014lz77}.

\bibliographystyle{acmsmall}
\bibliography{paper}

\appendix

\newcommand{\T}{\mathcal T}
\newcommand{\F}{\mathcal F}

\section{Some Results on Run-Length Context-Free Grammars}
\label{sec:rlcfgs}

Along the article we have obtained a number of results for the specific RLCFG we build. Several of those can be generalized to arbitrary RLCFGs, leading to the same state of the art that CFGs now enjoy. We believe it is interesting to explicitly state those new results in general form: not only RLCFGs are always smaller than CFGs (and the difference can be asymptotically relevant, as in text $T = a^n$), but also our results in this article require space $O(\gamma\log(n/\gamma))$, whereas there always exists a RLCFG of size $g_{rl} = O(\gamma\log(n/\gamma))$. Indexes of size $O(g_{rl})$ have then the potential to be smaller than those built on attractors (e.g., $T=a^n$ is generated by a RLCFG of size $O(1)$, whereas $\gamma\log(n/\gamma) = O(\log n)$).

\subsection{Extracting substrings}
\label{sec:rlcfg-extract}

The following result exists on CFGs \cite{BLRSRW15}. They present their result on straight-line programs (SLPs, i.e., CFGs where right-hand sides are two nonterminals or one terminal symbol).
While any CFG of size $g$ can be converted into an SLP of size $O(g)$, we start by describing their structure generalized to arbitrary CFGs, which may be interesting when the grammar cannot be modified for some reason. We then show how to handle run-length rules $A \rightarrow A_1^s$
in order to generalize the result to RGCFGs.

\begin{theorem} \label{thm:rlcfg-extract}
Let a RLCFG of size $g_{rl}$ generate (only) $T[1\dd n]$. Then there exists
a data structure of size $O(g_{rl})$ that extracts any substring $T[p\dd 
p+\ell-1]$ in time $O(\ell+\log n)$.
\end{theorem}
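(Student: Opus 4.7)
The plan is to adapt the heavy-path-decomposition random-access data structure of \citeN{BLRSRW15} from SLPs to RLCFGs. Work with the grammar DAG (one node per nonterminal) together with the stored expansion lengths $|A|$. For every non-leaf nonterminal $A$ I declare a \emph{heavy child} $h(A)$: if $A\to A_1\cdots A_s$ has some $A_i$ with $|A_i|>|A|/2$, let $h(A)$ be that (necessarily unique) child; otherwise $h(A)$ is undefined. Crucially, a run-length rule $A\to A_1^s$ with $s\ge 2$ satisfies $|A_1|=|A|/s\le |A|/2$, so no child is heavy and $h(A)$ is undefined. The $h$-edges partition the nonterminals into maximal \emph{heavy chains} $A_0,A_1,\ldots,A_k$ with $A_{j+1}=h(A_j)$; every non-heavy edge leaves a chain and cuts the remaining expansion by at least $2$, so any root-to-leaf navigation in the parse tree leaves at most $O(\log n)$ chains.

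To answer a locate query $\mathrm{access}(A_0,p)$ I descend as far as possible along the heavy chain containing $A_0$, then follow one non-heavy edge, and recurse. For each chain I precompute the cumulative left offsets $\ell_j$ such that $A_{j+1}$ begins at position $\ell_j+1$ inside $\exp(A_0)$, and store them in a \emph{biased} (weight-balanced) BST in which the weight of the entry for $A_j$ equals $|A^{\mathrm{light}}_{j+1}|$, the size of the light sibling entered when the chain is left at $A_j$. A classical biased-search-tree bound gives a query time of $O(1+\log(|A_0|/|A^{\mathrm{light}}_{j+1}|))$ per chain traversal. Summing these contributions over the $O(\log n)$ chains visited telescopes to $O(\log n)$, because the successive $|A^{\mathrm{light}}_{j+1}|$ values shrink geometrically along the navigation. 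A run-length rule $A\to A_1^s$ met during the descent is handled in $O(1)$ via $p'=((p-1)\bmod |A_1|)+1$ and contributes one of the allowed non-heavy descents; enforcing $s\ge 2$ on every run-length rule (which is W.l.o.g.) ensures the halving property.

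After $O(\log n)$ time I have the leaf of the parse tree containing position $p$, together with the ancestor stack needed for a right-to-left rewinding. To emit $T[p\dd p+\ell-1]$ in $O(\ell)$ time I reuse the machinery of Lemma~\ref{lem:extract from rlcfg}: the trie $T_G$ (parent of $A$ is the leftmost child of its rule) augmented with constant-time level-ancestor queries lets me carry out a rightward, real-time in-order traversal of the parse tree that outputs one terminal per $O(1)$ step. Run-length nodes $A\to A_1^s$ are handled exactly as in that lemma by walking through the $s$ copies of $\exp(A_1)$ in order; this keeps the per-symbol cost amortized $O(1)$. Combining the two phases yields total time $O(\log n+\ell)$, and since heavy-path decomposition, biased BSTs on the chains, and the trie $T_G$ each store $O(1)$ data per nonterminal, the total space is $O(g_{rl})$.

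The main obstacle is the time accounting along heavy chains in the presence of run-length rules. The biased-BST analysis of \citeN{BLRSRW15} relies on two invariants: that every chain is left through a strict light child (so the expansion at least halves), and that the weights assigned to chain entries match the sizes of the light children actually descended into. Both are preserved once run-length rules are classified as ``never heavy,'' because a descent $A\to A_1$ inside $A\to A_1^s$ with $s\ge 2$ behaves like a light step and only contributes $O(1)$ work (not a chain traversal). With this classification the proof of Bille et al.\ applies verbatim, so the only new ingredients are the $O(1)$ modular-arithmetic detour per run-length node during locate and the $s$-copy unrolling during extraction, both handled above.
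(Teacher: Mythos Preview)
Your high-level approach---heavy-path decomposition on the grammar DAG with a telescoping time analysis, following Bille et al.---is the same as the paper's. Your treatment of run-length rules is in fact a cleaner variant: by observing that $|A_1|\le |A|/2$ whenever $s\ge 2$, you declare run-length nodes to have \emph{no} heavy child, so each one terminates its chain and contributes a single $O(1)$-time light descent. The paper instead declares the first copy of $A_1$ to be the heavy child, keeps $A$ on the chain, and then special-cases the descent at $A$ (computing the child index arithmetically instead of storing $C(A)$). Both are correct; yours avoids the special case.

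There is, however, a real gap in your extraction phase. You invoke the trie $T_G$ of Lemma~\ref{lem:extract from rlcfg} to obtain the rightward traversal in $O(\ell)$ time once the leaf $T[p]$ and the $O(\log n)$ light-descent frames are in hand. But $T_G$ is the \emph{leftmost}-child trie: it supports real-time extraction of $\exp(A)$ starting at position~$1$, not resumption from an arbitrary interior position. The root-to-leaf path you navigated consists of $O(\log n)$ light edges interleaved with heavy-chain segments of \emph{unbounded} length (your biased BST jumps over each segment in one shot, so you never materialize it). To emit $T[p\dd p+\ell-1]$ you must climb back up each segment from the exit node $Y_i$ toward the entry node $X_i$, outputting the right siblings of the heavy child at every chain node; in the forest $\F$ this is the child-direction, which is not unique, so you cannot simply follow parent pointers. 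Worse, whenever the heavy child happens to be the rightmost child (e.g., a long chain of binary rules $A\to BC$ with $|C|>|A|/2$), a chain node contributes nothing to the output, and you can spend time proportional to the chain length without emitting a single symbol. The paper handles exactly this with the precomputed $r(\cdot)$ (and $l(\cdot)$) pointers, which skip directly to the next chain node whose heavy child has right (resp.\ left) siblings. You need the analogous device; $T_G$ alone does not provide it. A secondary looseness: your biased-BST weighting (``the light sibling'') tacitly assumes binary rules---either binarize first (constant-factor blowup, run-length rules are already size~$2$), or let every light child along the chain be its own weighted entry.
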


Consider the parse tree $\T$ of $T[1\dd n]$. A {\em heavy path} starting at a 
node $v \in \T$ with children $v_1,\ldots,v_s$
chooses the child $v_i$ that maximizes $|v_i|$, and continues by $v_i$ in the 
same way, up to reaching a leaf. We say that $v_i$ is the {\em heavy child} of
$v$ and define $h(v)=v_i$. The edge connecting $v$ with its heavy child $v_i$ is
said to be {\em heavy}; those connecting $v$ with its other children are 
{\em light}. Note that, if $v_j \neq h(v)$, then $|v_j| \le |v|/2$; otherwise 
$v_j$ would be the heavy child of $v$. Then, every time we descend by a 
light
edge, the lenght of the node halves, and as a consequence no path from the root
to a leaf may include more than $\log n$ light edges. A decomposition into heavy
paths consists of the heavy path starting at the root of $\T$ and, recursively,
all those starting at the children by light edges.

\subsubsection{Accessing $T[p]$}

For every internal node $v$ with children $v_1,\ldots,v_s$ we define the 
starting positions of its children as $p_1(v)=1$, $p_i(v)=p_{i-1}(v)+|v_{i-1}|$,
for $2 \le i \le s$, and $p_{s+1}=|v|+1$. We then store the set $C(v) = \{ 
p_1(v), p_2(v), \ldots, p_{s+1}(v) \}$. Let us define $c(v)=p_i(v)$, where 
$v_i=h(v)$, as the starting position of the heavy child of $v$. Then, if $v$
roots a heavy path $v=v^0, v^1, \ldots, v^k$, where $v^j=h(v^{j-1})$ for 
$1 \le j \le k$, and $v^k$ is a leaf, we define
the starting positions in the heavy path as $s_1(v)=c(v)$ and 
$s_j(v)=s_{j-1}(v)-1+c(v^{j-1})$ for $2 \le j \le k$, and the ending 
positions as $e_j(v)=s_j(v)+|v^j|$ for $1 \le j \le k$. We then associate
with $v$ the increasing set $P(v) = \{ s_1(v), s_2(v), \ldots, s_k(v), e_k(v), 
\ldots, e_2(v), e_1(v) \}$; note $e_k(v)=s_k(v)+1$.

To find $T[p]$, we start at the root $v$ of $\T$ (so $1 \le p 
\le |v|$) with children $v_1,\ldots,v_s$. We make a predecessor search on $C(v)$
to determine that $p_i(v) \le p < p_{i+1}(v)$. If $v_i \neq h(v)$, we traverse 
the light edge to $v_i$ and continue the search from $v_i$ with $p 
\leftarrow p - p_i(v) + 1$. Otherwise, since $v_i=h(v)$, it holds that
$p \ge p_i(v) = c(v) = s_1(v)$ and $p < p_{i+1}(v) = c(v)+|h(v)| = e_1(v)$.
We then jump to the proper node in the heavy path that starts in $v$ by making
a predecessor search for $p$ in $P(v)$. If we determine that $s_j(v) \le p
< s_{j+1}(v)$ or that $e_{j+1}(v) \le p < e_j(v)$, we continue the search from 
$v^j$ and $p \leftarrow p-s_j(v)+1$. Otherwise, $p=s_k(v)$ and the answer
is the terminal symbol associated with the leaf $v^k$. Note that, when we
continue from $v^j$, this is not the head of a heavy path, but after searching
$C(v^j)$ we are guaranteed to continue by a light edge. In each step, then, we
perform two predecessor searches and traverse a light edge.

\citeN{BLRSRW15} describe a predecessor data structure that, when
finding the predecessor of $x$ in a universe of size $u$, takes time
$O(\log(u/(x^+-x^-)))$, where $x^+$ and $x^-$ are the predecessor and 
successor of $x$, respectively. Thus, when finding $v_i$ in $C(v)$, 
this structure takes time $O(\log(|v|/|v_i|))$. If $v_i$ is a light child, we
continue by $v_i$, so the sum over all the light edges traversed telescopes to
$O(\log |v|)$. When we descend to the heavy child, instead, we also find the
node $v^j$ in $P(v)$, which costs $O(\log(|v|/(s_{j+1}(v)-s_j(v)+1)))
= O(\log(|v|/c(v^j)))$ if $s_j(v) \le p < s_{j+1}(v)$, or 
$O(\log(|v|/(e_j(v)-e_{j+1}(v)+1))) = O(\log(|v|/(|v^j|-(c(v^j)+
|h(v^j)|))))$
if $e_{j+1}(v) \le p < e_j(v)$, or $O(\log |v|)$ if $p=s_k(v)$ (but this 
happens only once along the search). In the first two cases, we descend to 
$v^j$, which always starts
descending by a light edge to some $v^j_i$ at cost $O(\log(|v^j|/|v^j_i|))$.
Since $|v^j_i| \le c(v^j)$ (if $s_j(v) \le p < s_{j+1}(v)$) or
$|v^j_i| \le |v^j|-(c(v^j)+ |h(v^j)|)$ (if $e_{j+1}(v) \le p < e_j(v)$),
we can upper bound the cost to search $P(v)$ by $O(\log(|v|/|v^j_i|))$, and the
cost to search $C(v^j)$ by $O(\log(|v^j|/|v^j_i|)) \subseteq O(\log(|v|/|v^j_i|))$ too, and then we continue
the search from $v^j_i$. Therefore the cost also telescopes to $O(\log |v|)$
when we search a heavy path. Overall, the cost from the root of the parse tree
is $O(\log n)$.

The remaining problem is that the structure is of size $O(|\T|)=O(n)$, 
but it can be made $O(g)$ as follows. The subtrees of $\T$ rooted by all the
nodes $v$ labeled with the same nonterminal $A$ are identical, so in all of
them the node $h(v)$ has the same label, say the terminal or nonterminal $A_i$.
\citeN{BLRSRW15} define a forest $\F$ with exactly one node $v(X) 
\in \F$ for each nonterminal or nonterminal $X$. If $v \in \T$ is labeled $A$ 
and $h(v) \in \T$ is labeled $A_i$, then $v(A_i)$ is the parent of $v(A)$ in $\F$. 
The nodes $v(a)$ for terminals $a$ are roots in $\F$. A heavy path from
$v \in \T$, with $v$ labeled $A$, then corresponds to an upward path 
from $v(A) \in \F$.

The sets $C(v)$ also depend only on the label $A$ of $v \in \T$, so we 
associate them to the corresponding nonterminal $A$. The sizes of all sets
$C(A)$ add up to the grammar size, because $C(A)$ has $s+1$ elements
if the rule that defines $A$ is of the form $A \rightarrow A_1 \cdots A_s$.\footnote{To have the grammar size count only right-hand sides, rules $A \rightarrow \varepsilon$ must be removed or counted as size 1.}
The sets $P(v)$ also depend only on the label $A$ of $v \in \T$, but they are
not stored completely in $A$. Instead, each node $v(A) \in \F$,
corresponding to the nodes $v\in \T$ labeled $A$, and with parent $v(A_i) \in \F$,
stores values $s(v(A))=s(v(A_i))+c(v)-1$ and $e(v(A))=e(v(A_i))+|v|-c(v)-|h(v)|+1$.
For the roots $v(a) \in F$, we set $s(v(a)) = e(v(a)) = 0$. They then build two 
data structures for predecessor queries on tree paths, one on the $s(\cdot)$
and one on the $e(\cdot)$ values, which obtain the same complexities as on 
arrays. In order to find a position $p$ from $v(A)$, we also store the position
$p(A)$ in $\\exp(A)$ of the root in $\F$ from where $v(A)$ descends, as well as
the character $\\exp(A)[p(A)]$. If $p=p(A)$, we just return that symbol and 
finish. Otherwise, if $p<p(A)$, we search for $p(A)-p$ in the fields $s(\cdot)$
from $v(A)$ to the root, finding $s(v(B)) \ge p(A)-p > s(v(B_i))$, with $v(B_i)$ the
parent of $v(B)$ in $\mathcal{F}$. 
Otherwise, $p>p(A)$ and we search for $p-p(A)$ in the fields 
$e(\cdot)$ from $v(A)$ to the root, finding $e(v(B)) \ge p-p(A) > e(v(B_i))$, 
with $v(B_i)$ the parent of $v(B)$ in $\mathcal F$. In both cases, we must exit 
the heavy path from the node $v(B)$, adjusting $p \leftarrow p-s(v(A))+s(v(B))$.

\subsubsection{Extracting $T[p\dd q]$}

To extract $T[p\dd q]$ in time $O(q-p+\log n)$, we store additional information 
as follows. In each heavy path $v^0,\ldots,v^k$, each node $v^j$ stores a 
pointer $r(v^j) = h(v^t)$, where $j < t \le k$ is the smallest value for which 
$h(v^t)$ is not the rightmost child of $v^t$. Similarly, $l(v^j) = h(v^t)$
for the smallest $j < t \le k$ for which $h(v^t) > 1$. At query time, we apply
the procedures to retrieve $T[p]$ and $T[q]$ simultaneously until they split
at a node $v^*$, where $T[p]$ descends from the child $v^*_i$ and $T[q]$ from
the child $v^*_j$. Then the symbols $T[p\dd q]$ are obtained by traversing, in
left-to-right order, $(1)$ the children $v_{i+1},\ldots$ of every light edge
leading to $v_i$ in the way to $T[p]$; $(2)$ every sibling to the right of 
$r(v)$ for the nodes $v \in \{ v_1, r(v_1), r(r(v_1)), \ldots \}$ for every 
$v_1$ rooting a heavy path in the way to $T[p]$; $(3)$ the children 
$\{ v^*_{i+1},\ldots,v^*_{j-1}\}$ of $v^*$; $(4)$ the children 
$v_1,\ldots,v_{i-1}$ of every light edge $v_i$ in the way to $T[q]$; $(5)$
every sibling to the left of $l(v)$ for the nodes $v \in \{ v_1, l(v_1), 
l(l(v_1)), \ldots \}$ for every $v_1$ rooting a heavy path in the way to $T[q]$.
For all those nodes, we traverse their subtrees completely to obtain chunks of
$T[p\dd q]$ in optimal time (unless there are unary paths in the grammar, which
can be removed or skipped with the information on $r(\cdot)$ or $l(\cdot)$). 
The left-to-right order between nodes in $(1)$ and
$(2)$, and in $(3)$ and $(4)$, is obtained as we descend to $T[p]$ or $T[q]$.
Finally, $v^*$ is easily determined if it is the target of a light edge. 
Otherwise, if we exit a heavy path by distinct nodes $v_p$ and $v_q$, then
$v^*$ is the highest of the two.

\subsubsection{Extending to RLCFGs}

The idea to include rules $A \rightarrow A_1^s$ is to handle them exactly as if
they were $A \rightarrow A_1 \cdots A_1$, but using $O(1)$ space instead of $O(s)$.
When $v$ is labeled $A$ and this is defined as $A \rightarrow A_1^s$, we would 
have a tie in determining the heavy child $h(v)$. We then act as if we chose 
the first copy of $A_1$, $h(v)=v_1$; in particular $v(A_1)$ is the parent of $v(A)$
in $\F$. If we have to descend by another child of $v$ to reach position $p$ 
inside $v$, we choose $v_i$ with $i = \lceil p/|v_1| \rceil$ and set 
$p \leftarrow p - (i-1)\cdot|v_1|$, so we do not need to store the set
$C(A)$ (which would exceed our space budget).

No pointer $l(v^j)$ will point to $h(v)$, but pointers $r(v^j)$ will.
The pointers $r(v^j) = h(v^t)$ are actually stored as a pair $(v^t,i)$ where
$v^s_i = h(v^t)$; this allows accessing preceding and following siblings
easily. With this format, we can also refer to the $i$th child of a 
run-length node and handle it appropriately.

\subsection{Extracting prefixes and suffixes}

The following result also exists on CFGs \cite{DBLP:conf/dcc/GasieniecKPS05}, who use leftmost or rightmost paths instead of heavy paths. In our Lemma~\ref{lem:extract from rlcfg} we have extended it to arbitrary RLCFGs as well, without setting any restriction on the grammar.

\begin{theorem} \label{thm:rlcfg-prefsuf}
Let a RLCFG of size $g_{rl}$ generate (only) $T[1\dd n]$. Then there exists
a data structure of size $O(g_{rl})$ that extracts any prefix or suffix of
the expansion $\exp(A)$ of any nonterminal $A$ in real time.
\end{theorem}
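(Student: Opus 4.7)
The plan is to recall the construction of \Cref{lem:extract from rlcfg} in a slightly more self-contained form. For prefix extraction, I would build a forest $T_G$ on $O(g_{rl})$ nodes, with exactly one node for each distinct terminal or nonterminal symbol of the grammar. Terminals become roots, and the parent of a nonterminal $A$ with defining rule $A \to A_1\cdots A_s$ or $A \to A_1^s$ is taken to be $A_1$ (the leftmost child). I would augment $T_G$ with a constant-time level-ancestor data structure \cite{BF04}, which has size $O(g_{rl})$. Additionally, for every nonterminal $A$ with rule $A \to A_1 \cdots A_s$, I would store the list $A_2,\ldots,A_s$ in left-to-right order, and for rules $A \to A_1^s$, I would only store the exponent $s$; the total space is $O(g_{rl})$.

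To emit a prefix of length $\ell$ of $\exp(A)$, I would first output the terminal labelling the root of the tree in $T_G$ containing $A$ (located in $O(1)$ via a level-ancestor query). This is the first character of $\exp(A)$. Then I would walk down the path in $T_G$ from that root to $A$, one level at a time (again with level-ancestor queries); at each intermediate node $B$ (with rule $B \to B_1 B_2 \cdots B_r$), after emitting $\exp(B_1)$ I would recursively extract, in order, $\exp(B_2), \ldots, \exp(B_r)$, stopping as soon as $\ell$ characters have been emitted. For a run-length rule $B \to B_1^s$ I treat it as $s$ copies and recursively emit $\exp(B_1)$ an additional $s-1$ times. Suffix extraction is entirely symmetric: build a mirror forest $T_G'$ in which the parent of $A$ is the rightmost symbol $A_s$ of its rule (still $A_1$ in the run-length case), and emit characters from right to left.

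The main obstacle is obtaining \emph{real-time} behaviour, i.e.\ $O(1)$ delay between consecutive output symbols rather than merely $O(\ell)$ total time. A naive recursive implementation spends no amortized time per symbol, but may unwind through $\Theta(\log n)$ levels of recursion between two emissions, which violates the real-time guarantee. The standard fix, which I would adopt here, is to eliminate the recursive call on the \emph{rightmost} child of every nonterminal and replace it with a direct jump (tail call), so that after emitting the last symbol of a right-spine descendant the algorithm never has to pop multiple frames before producing the next character. Since at each grammar node we descend by one light step before each potential pop, and pops to the right spine are absorbed by the tail-call optimization, the delay between consecutive output symbols is $O(1)$.

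For the run-length rules $A \to A_1^s$, I would additionally maintain, together with each recursion frame, a small counter $i \in [1\dd s]$ tracking how many copies of $\exp(A_1)$ remain to be expanded; upon returning from expanding the $i$-th copy with $i<s$, we re-enter the expansion of $\exp(A_1)$ in $O(1)$ time instead of popping the frame, which keeps the real-time property intact without blowing up the $O(g_{rl})$ space bound (since a run-length rule contributes $O(1)$ to the grammar size by the paper's convention). This completes the construction achieving size $O(g_{rl})$ and real-time prefix/suffix extraction from any nonterminal.
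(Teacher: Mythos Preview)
Your proposal is correct and follows essentially the same construction as the paper's own proof, which simply invokes \Cref{lem:extract from rlcfg}: the leftmost/rightmost-child forests $T_G$ and $T_G'$ with constant-time level-ancestor queries, recursive expansion of the remaining children, treating $A\to A_1^s$ as $s$ copies of $A_1$, and the tail-call trick on the rightmost child to ensure $O(1)$ delay between consecutive output symbols. The only addition in your write-up is the explicit per-frame counter for run-length rules, which is a natural implementation detail fully consistent with the paper's treatment.
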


\subsection{Computing fingerprints}

The following result, already existing on CFGs \cite{BGCSVV17}, can also be extended to arbitrary RLCFGs. Note that it improves our Lemma~\ref{lem:kr} to $O(\log\ell)$ time, though we opted for a simpler variant in the body of the article.

\begin{theorem} \label{thm:rlcfg-kr}
Let a RLCFG of size $g_{rl}$ generate (only) $T[1\dd n]$. Then there exists
a data structure of size $O(g_{rl})$ that computes the Karp-Rabin signature
of any substring $T[p\dd q]$ in time $O(\log n)$.
\end{theorem}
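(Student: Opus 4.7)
The plan is to augment the heavy-path forest $\mathcal{F}$ of \Cref{thm:rlcfg-extract} with precomputed Karp--Rabin information so that $\kappa(T[p\dd q])$ can be assembled while descending to $T[p]$ and $T[q]$. Recall that $\kappa(S_1 S_2) = \kappa(S_1) + c^{|S_1|}\kappa(S_2) \bmod \mu$; hence any decomposition of $T[p\dd q]$ into $O(\log n)$ contiguous fragments whose individual fingerprints are available in $O(1)$ (given the fragment's length and a few precomputed powers of $c$) will yield the target time.

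First I would store with every nonterminal $A$ the values $|A|$, $\kappa(\exp(A))$, $c^{|A|} \bmod \mu$, and $(c^{|A|}-1)^{-1} \bmod \mu$; this adds $O(g_{rl})$ words. Next, on each heavy path $v(A_0), v(A_1), \ldots$ of $\mathcal{F}$, analogously to the fields $s(\cdot)$ and $e(\cdot)$ of \Cref{thm:rlcfg-extract}, I would attach two cumulative Karp--Rabin fields $L(v(A_j))$ and $R(v(A_j))$: $L(v(A_j))$ is the fingerprint of the concatenation of all \emph{left siblings} (children strictly preceding the heavy child) met from the root of the heavy path down to $v(A_j)$, while $R(v(A_j))$ is the symmetric fingerprint for \emph{right siblings}. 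Accompanying cumulative lengths $\lambda(v(A_j))$ and $\rho(v(A_j))$ let me combine any two such values on the same heavy path in $O(1)$ using the concatenation formula, working in the prime field $\mathbb{Z}/\mu\mathbb{Z}$ so that subtraction and division by a stored power of $c$ are constant-time.

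With this machinery, computing $\kappa(T[p\dd q])$ mimics the extraction of \Cref{thm:rlcfg-extract}. I descend simultaneously towards $T[p]$ and $T[q]$ through $\mathcal{F}$, identifying the splitting node $v^*$. The content strictly between the two descending paths decomposes into (i) the right siblings collected on the way to $T[p]$, grouped by heavy path, where each group's fingerprint is read off as the difference between $R(v^{\mathrm{enter}})$ and $R(v^{\mathrm{exit}})$ suitably realigned by a stored power of $c$; (ii) the children of $v^*$ strictly between the two branching children, whose fingerprint is likewise a difference of $R$-values; and (iii) the symmetric left siblings on the way to $T[q]$, retrieved through $L$. This yields $O(\log n)$ fragments, each obtained in $O(1)$ and stitched together with the concatenation formula, while the descent itself is $O(\log n)$ by \Cref{thm:rlcfg-extract}.

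The main obstacle is the run-length rules $A\to A_1^s$: if the heavy child $v(A_1)$ of a run-length node were treated naively, its right siblings at that level would consist of $s-1$ identical copies of $\exp(A_1)$, blowing up the stored arrays to size $\Omega(s)$. The fix is to keep a single entry per node of $\mathcal{F}$ and to evaluate the fingerprint of any run $\exp(A_1)^t$ through the closed form $\kappa(\exp(A_1)^t) = \kappa(\exp(A_1))\cdot (c^{t|A_1|}-1)\cdot (c^{|A_1|}-1)^{-1} \bmod \mu$, which is computable in $O(1)$ from the values stored at $A_1$ plus $c^{t|A_1|}$, and the latter is obtainable in $O(\log t)\subseteq O(\log n)$ by fast exponentiation. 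This correction is incurred only $O(1)$ times along each descent, precisely where it actually breaks out inside a run-length node (via the same $i=\lceil p/|A_1|\rceil$ computation used in \Cref{thm:rlcfg-extract}). Summing everything, the query time is $O(\log n)$ and the extra storage is $O(g_{rl})$, which proves the claim.
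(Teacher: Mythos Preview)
Your approach differs from the paper's in a way worth noting: the paper reduces to computing fingerprints of \emph{prefixes} $T[1\dd p-1]$ and $T[1\dd q]$ separately (so the answer is obtained by one subtraction), whereas you descend to $T[p]$ and $T[q]$ simultaneously and assemble the fingerprint from fragments between the two paths. The prefix reduction avoids the splitting node $v^*$ entirely and needs only one cumulative field $K(v(A))=\kappa(\exp(A)[1\dd p(A)-1])$ per forest node, plus the per-rule prefix fingerprints $K_i(A)=\kappa(\exp(A_1)\cdots\exp(A_{i-1}))$; this is noticeably simpler than maintaining both $L$ and $R$ and reasoning about three fragment types.

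There is a genuine gap in your time analysis for run-length rules. You claim the $O(\log t)$ fast-exponentiation cost is incurred only $O(1)$ times per descent, but that is false: a descent can pass through $\Theta(\log n)$ run-length nodes (consider a chain $A_k\to A_{k-1}^2$, $A_{k-1}\to A_{k-2}^2$, \dots), and at each one you may break to a non-heavy copy and need the closed form. Naively this gives $O(\log^2 n)$. The fix, which is what the paper does, is to observe that at a run-length node $A\to A_1^s$ the exponentiation cost is $O(\log s)=O(\log(|A|/|A_1|))$, which is exactly the budget allotted to the corresponding light edge in the telescoping argument of \Cref{thm:rlcfg-extract}; summed over the descent it is $O(\log n)$.

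A smaller issue: your $L$ and $R$ fields are cumulative along a heavy path and encode siblings \emph{of the heavy child}. When you leave a heavy path via a light edge to child $v_i$, the fragment you must collect consists of siblings of $v_i$, not of $h(v)$; similarly, at $v^*$ the two branching children need not include the heavy child. You therefore also need per-rule prefix fingerprints $K_i(A)$ (one per child, total size $O(g_{rl})$) to read off $\kappa(\exp(A_{i+1})\cdots\exp(A_{j-1}))$ in $O(1)$. With these two repairs your argument goes through.
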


Recall that, given the signatures $\kappa(S_1)$ and $\kappa(S_2)$, one can compute
the signature of the concatenation, $\kappa(S_1 \cdot S_2) =
(\kappa(S_1) + c^{|S_1|} \cdot \kappa(S_2)) \bmod \mu$. One can also compute
the signature of $S_2$ given those of $S_1$ and $S_1 \cdot S_2$,
$\kappa(S_2) = ((\kappa(S_1 \cdot S_2) - \kappa(S_1)) \cdot c^{-|S_1|})
\bmod \mu$, and the signature of $S_1$ given those of $S_2$ and $S_1 \cdot S_2$,
$\kappa(S_1) = (\kappa(S_1 \cdot S_2) - \kappa(S_2) \cdot c^{|S_1|}) \bmod \mu$.
To have the terms $c^{\pm |S_1|}$ handy, we redefine signatures $\kappa(S)$ as
triples $(\kappa(S),c^{|S|} \bmod \mu,c^{-|S|} \bmod \mu)$, which are 
easily maintained across the described operations.

We now show how to compute a fingerprint $\kappa(T[p\dd q])$ in $O(\log n)$
time on an arbitrary RLCFG. We present the current result \cite{BGCSVV17},
extended to general CFGs, and then include run-length rules.

We follow the idea of our Lemma~\ref{lem:kr}, but combine it with heavy paths.
Since we can obtain $\kappa(T[p\dd q])$ from $\kappa(T[1\dd q])$ and 
$\kappa(T[1\dd p-1])$, we only consider computing fingerprints of text prefixes.
We associate with each nonterminal $A \rightarrow A_1 \cdots A_s$ the $s$
signatures $K_i(A) = \kappa(\exp(A_1)\cdots \exp(A_{i-1}))$, for $1 \le i \le s$.
We also associate signatures to nodes $v(A)$ in $\mathcal F$,
$K(v(A)) = \kappa(\exp(A)[1\dd p(A)-1]$. Those values fit in $O(g)$ 
space. 

To compute $\kappa(T[1\dd p])$ we start with $\kappa = 0$ and follow the same 
process as for accessing $T[p]$ in Section~\ref{sec:rlcfg-extract}. In our way, every time we descend by a light 
edge from $v$ to $v_i$, where $v$ is labeled $A$, we update $\kappa \leftarrow 
(\kappa + K_i(A) \cdot c^{|A_1|+\cdots+|A_{i-1}|}) \bmod \mu$. Note that the
power of $c$ is implicitly stored together with the signature $K_i(A)$ itself.

Instead, when we descend from $v(A)$ to $v(B)$ because $s(v(B)) \ge p(A)-p >
s(v(B_i))$ or $e(v(B)) \ge p-p(A) > e(v(B_i))$, we first compute the signature 
$\kappa'$ of the prefix of $\exp(A)$ that precedes $\exp(B)$, which is of length 
$\ell = s(v(A))-s(v(B))$, and then update $\kappa \leftarrow \kappa\cdot c^\ell
+ \kappa'$ so as to concatenate that prefix (again, $c^\ell$ is computed 
together with $\kappa'$).
We compute $\kappa'$ from $K(v(B)) = \kappa(\exp(B)[1\dd p(B)-1])$ and
$K(v(A)) = \kappa(\exp(A)[1\dd p(A)-1])$. Because $\exp(A)[p(A)]$ is the
same symbol of $\exp(B)[p(B)]$, $\exp(B)[1\dd p(B)-1]$ is a suffix of
$\exp(A)[1\dd p(A)-1]$. We then use the method to extract $\kappa(S_1)$ from
$\kappa(S_1 \cdot S_2)$ and $\kappa(S_2)$. 

When we arrive at $T[p]$, we include that symbol and have computed
$\kappa = \kappa(T[1\dd p])$. The time is the same $O(\log n)$ required to
access $T[p]$.

\subsubsection{Handling run-length rules}
The proof of Lemma~\ref{lem:kr} already shows how to handle run-length rules $A \rightarrow A_1^s$: we again treat them as 
$A \rightarrow A_1 \cdots A_1$. The only complication is that now we cannot afford
to store the values $K_i(A)$ used to descend by light edges, but we can compute
them as $K_i(A) = \kappa(\exp(A_1)^{i-1}) =
\left(\kappa(\exp(A_1))\cdot \frac{c^{|A_1|\cdot (i-1)}-1}{c^{|A_1|}-1}\right) \bmod 
\mu$: $c^{|A_1|} \!\!\mod \mu$ and $(c^{|A_1|}-1)^{-1} \bmod \mu$ can be 
stored with $A_1$, and the exponentiation can be computed in time
$O(\log i) \subseteq O(\log s)$. Note
that this is precisely the $O(\log(|v|/|v_i|))$ time we are allowed to
spend when moving from node $v$ to its child $v_i$ by a light edge.

\subsection{Locating pattern occurrences}

\citeN[Cor.~1]{CNspire12} obtain a version of the following result that holds only for CFGs and offers search time $O(m^2+(m+occ)\log^\epsilon n)$. We improve their complexity and generalize it to RLCFGs.

\begin{theorem} \label{thm:rlcfg-index}
Let a RLCFG of size $g_{rl}$ generate (only) $T[1\dd n]$. Then there exists
a data structure of size $O(g_{rl})$ that finds the $occ$ occurrences in 
$T$ of any pattern $P[1\dd m]$ in time
$O(m\log n + occ \log^\epsilon n)$ for any constant $\epsilon>0$.
\end{theorem}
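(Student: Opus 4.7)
}

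The plan is to instantiate the indexing machinery of Section~\ref{sec:index} on an arbitrary RLCFG of size $g_{rl}$, replacing only those steps that relied on the local-consistency/local-balance properties of the specific grammar of Section~\ref{sec:attractor}. First I would build the grammar tree (\Cref{def:grammar-tree}) and the classification of occurrences into primary and secondary (\Cref{def:occs}); these depend only on the grammar tree and therefore carry over verbatim, including the treatment of run-length rules $A\to A_1^s$ via the virtual decomposition $A\to A_1 A_1^{[s-1]}$. The multisets $\mathcal{X},\mathcal{Y}$ of reversed phrase prefixes and phrase suffixes then yield a two-dimensional grid of $g_{rl}-r$ points in $[1\dd g_{rl}-r]^2$, on which the range reporting structure of Chan, Larsen and Pătraşcu gives $O(g_{rl})$ space and $O(\log^\epsilon g_{rl})$ query time plus $O(\log^\epsilon g_{rl})$ per reported point.

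For pattern search, without local consistency I can no longer restrict attention to $O(\log m)$ split points, so all $m-1$ partitions $P=P[1\dd q]\cdot P[q+1\dd m]$ must be tried. To avoid an $\Omega(m^2)$ cost, I would build z-fast tries on $\mathcal{X}$ and $\mathcal{Y}$ and invoke \Cref{lemma: z-fast} with $\tau=m-1$. This requires the two subroutines $f_e$ and $f_h$ on strings of $\mathcal{X}\cup\mathcal{Y}$. Extraction of a length-$\ell$ prefix of any $\exp(A)^{rev}$ or of any $\exp(A_{i+1})\cdots\exp(A_s)$ is supplied in real time by \Cref{thm:rlcfg-prefsuf}, giving $f_e(\ell)=O(\ell)$. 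A Karp--Rabin signature of such a prefix is computed by combining signatures of $O(\log n)$ text-substring fingerprints, obtained in $O(\log n)$ time each from \Cref{thm:rlcfg-kr}; using the same exponential-search scheme as in \Cref{lem:kr} on heavy paths (or directly on the $O(\log n)$-depth skeleton provided by \Cref{thm:rlcfg-extract}), this yields $f_h(m)=O(\log n)$. Substituting into \Cref{lemma: z-fast} gives
\[
O\bigl(m+\tau(f_h(m)+\log m)+f_e(m)\bigr)=O(m\log n)
\]
to obtain all $m-1$ grid ranges. Correctness of the collision-filtering step in that lemma still holds because $\mathcal{X},\mathcal{Y}$ consist of substrings (or reversed substrings) of $T$, so the Karp--Rabin function $\kappa'$ chosen to be collision-free on substrings of $T$ (built once in $O(n\log n)$ expected time as in \Cref{sec:constr}) remains collision-free here.

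Querying the $m-1$ ranges in the grid costs $O(m\log^\epsilon g_{rl})$ for set-up plus $O(\log^\epsilon g_{rl})$ per primary occurrence, all absorbed by $O(m\log n + occ\log^\epsilon n)$. Secondary occurrences are then propagated by the DAG of $\mathit{anc}/\mathit{next}/\mathit{offs}$ pointers on the grammar tree exactly as in \Cref{sec:secondary}; that construction uses only the grammar tree, not local consistency, and its amortized $O(1)$ cost per reported occurrence, including the special treatment of run-length nodes, carries over unchanged. Adding everything yields the claimed $O(m\log n + occ\log^\epsilon n)$ total time in $O(g_{rl})$ space.

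\paragraph{Main obstacle}
The crux is handling $\Theta(m)$ split points simultaneously: a naive search at each split would give $\Theta(m^2)$, as in the CFG result of \citeN{CNspire12}. The batched z-fast-trie search of \Cref{lemma: z-fast} avoids this, but requires fast fingerprinting of arbitrary prefixes of strings in $\mathcal{X}$ and $\mathcal{Y}$; this is precisely what the RLCFG extensions \Cref{thm:rlcfg-prefsuf,thm:rlcfg-kr} supply, and the run-length case is addressed by the closed-form geometric-series identity for $\kappa(\exp(A_1)^i)$ already exploited in \Cref{lem:kr}. Once these three ingredients are in place, the rest of the argument is a direct transcription of \Cref{sec:primary,sec:secondary}.
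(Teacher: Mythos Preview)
Your proposal is correct and follows essentially the same route as the paper's own proof: instantiate the machinery of Section~\ref{sec:index} on an arbitrary RLCFG, replace the $O(\log m)$ split points by all $\tau=m-1$ partitions, and apply \Cref{lemma: z-fast} with $f_e(\ell)=O(\ell)$ from \Cref{thm:rlcfg-prefsuf} and $f_h(\ell)=O(\log n)$ from \Cref{thm:rlcfg-kr} to obtain all grid ranges in $O(m\log n)$ time, then finish with the linear-space grid and the secondary-occurrence mechanism of \Cref{sec:secondary}. The paper's proof is a two-sentence summary of exactly this argument.
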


This result is essentially obtained in our Section~\ref{sec:index}. In that section we use a specific RLCFG that allows us obtain a better complexity. However, in a general RLCFG, where we must search for all the $\tau=m-1$ possible splits of $P$, the application of Lemma~\ref{lemma: z-fast} with complexities $f_e(\ell) = O(\ell)$ (Theorem~\ref{thm:rlcfg-prefsuf}) and $f_h(\ell)=O(\log n)$ (Theorem~\ref{thm:rlcfg-kr}) yields $O(m\log n)$ time to find all the $m-1$ ranges $[x_1\dd x_2] \times [y_1\dd y_2]$ to search for in the grid.

Combining that result with the linear-space grid representation and the mechanism to track the secondary occurrences on the grammar tree of a RLCFG described in Section~\ref{sec:index}, the result follows immediately.

\subsection{Counting pattern occurrences}

While we cannot generalize our result of Section~\ref{sec:counting} to arbitrary RLCFGs, our developments let us improve the best current result on arbitrary CFGs \cite{Nav18}.

\begin{theorem} \label{thm:rlcfg-count}
Let a CFG of size $g$ generate (only) $T[1\dd n]$. Then there exists
a data structure of size $O(g)$ that computes the number of occurrences in 
$T$ of any pattern $P[1\dd m]$ in time
$O(m\log^{2+\epsilon} n)$ for any constant $\epsilon>0$.
\end{theorem}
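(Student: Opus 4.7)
The plan is to follow the blueprint of \Cref{thm:count}, but specialized to an arbitrary CFG (no run-length rules) and without the gain of $O(\log m)$ relevant splits that our locally-consistent grammar afforded. First, I would build the grammar tree of the given CFG as in \Cref{def:grammar-tree} and construct the sets $\mathcal{X}$ and $\mathcal{Y}$ of (reversed) left and right phrases around each internal grammar tree boundary, together with the two-dimensional grid $\mathcal{G}\subseteq[1..g-r]\times[1..g-r]$ linking them, exactly as in \Cref{sec:primary}. Since this is a CFG, $|\mathcal{G}|=O(g)$. Next, I would compute, for every internal node $u$ of the grammar tree, its total number of parse-tree occurrences $c(u)$ by the DAG recursion along the $\mathit{anc}$ and $\mathit{next}$ pointers described in \Cref{sec:counting} (no run-length case is needed here); this takes $O(g)$ time and $O(g)$ space. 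Each grid point $(x,y)$ with locus $v$ and parent $u$ is then annotated with the weight $c(u)$, which is exactly the number of text occurrences that the procedure of \Cref{sec:secondary} would produce starting from $(x,y)$.

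Second, I would replace the linear-space point-reporting grid by Chazelle's semigroup-sum structure~\cite{DBLP:journals/siamcomp/Chazelle88}, which on $O(g)$ points supports $O(\log^{2+\epsilon} g)$-time range-sum queries within $O(g)$ space. Because in a CFG every primary occurrence of $P$ corresponds to a unique $(x,y,q)$ triple (observations (1)–(2) from \Cref{sec:counting} do not depend on the particular grammar), summing the weights of all points in the rectangle associated to a split point $q$ exactly counts the primary occurrences with that split, together with all of their secondaries, and different $q$'s contribute disjoint sets of text occurrences.

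Third, to actually obtain those rectangles I would use \Cref{lemma: z-fast} on $\mathcal{X}$ and $\mathcal{Y}$, plugging in $f_e(\ell)=O(\ell)$ from \Cref{thm:rlcfg-prefsuf} and $f_h(\ell)=O(\log n)$ from \Cref{thm:rlcfg-kr}. For $\tau=m-1$ suffixes of $P$ (and symmetrically $\tau$ reversed prefixes), this yields all $O(m)$ ranges $[x_1..x_2]\times[y_1..y_2]$ in time $O(m+m(f_h(m)+\log m)+f_e(m))=O(m\log n)$. I then issue one Chazelle range-sum query per split and add the results, spending $O(\log^{2+\epsilon} g)\subseteq O(\log^{2+\epsilon} n)$ per query; over the $m-1$ splits this dominates, giving the claimed $O(m\log^{2+\epsilon} n)$ total time. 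The total space is $O(g)$, since every component above is linear.

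The main obstacle, and the reason the statement is restricted to CFGs, is the treatment of run-length rules: in \Cref{sec:counting} each rule $A\to A_1^s$ contributes a variable number $s-\lceil(m-q)/|A_1|\rceil$ of occurrences depending on the alignment, and this is handled only because our specific grammar has very few relevant splits and because of the periodicity characterization in \Cref{lem:period}, which uses the locally-consistent construction in an essential way. For an arbitrary RLCFG neither ingredient is available, so the argument above legitimately extends only to CFGs. Everything else (grammar-tree construction, $\mathcal{X}/\mathcal{Y}$ sorting, Chazelle's structure, z-fast tries) is standard and goes through without modification, so no further subtle calculations are required.
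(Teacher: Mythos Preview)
Your proposal is correct and follows essentially the same route as the paper: combine Navarro's weighted-grid counting framework with Chazelle's $O(\log^{2+\epsilon} g)$ range-sum structure, and replace the old $O(m^2)$ cost of finding all $m-1$ ranges by the $O(m\log n)$ bound obtained from \Cref{lemma: z-fast} instantiated with $f_e(\ell)=O(\ell)$ (\Cref{thm:rlcfg-prefsuf}) and $f_h(\ell)=O(\log n)$ (\Cref{thm:rlcfg-kr}). Your discussion of why the argument does not extend to arbitrary RLCFGs also matches the paper's remark.
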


\citeN[Thm.~4]{Nav18} showed that the number of times $P[1\dd m]$ occurs in 
$T[1\dd n]$ can be computed in time $O(m^2 + m\log^{2+\epsilon} n)$ within 
$O(g)$ space for any CFG of size $g$. As explained in Section~\ref{sec:counting}, he uses the same grid of our Section~\ref{sec:index} for the primary
occurrences, but associates with each point the number of occurrences 
triggered by it (which depend only on the point). Then, a linear-space 
geometric structure \cite{DBLP:journals/siamcomp/Chazelle88} sums all the numbers in a range in time
$O(\log^{2+\epsilon} g)$. Adding over all the $m-1$ partitions of $P$,
and considering the $O(m^2)$ previous time to find all the ranges
\cite{CNspire12}, the final complexity is obtained.

With Lemma~\ref{lemma: z-fast}, and given our new results in 
Theorems~\ref{thm:rlcfg-prefsuf} and \ref{thm:rlcfg-kr}, we can now improve Navarro's result 
to $O(m\log^{2+\epsilon} n)$ because the $O(m^2)$ term becomes $O(m\log n)$.
However, this holds only for CFGs. Run-length rules introduce significant
challenges, in particular the number of secondary occurrences do not 
depend only on the points.
We only could handle this issue for the specific RLCFG we use in Section~\ref{sec:counting}. An interesting open problem is to generalize this solution to arbitrary RLCFGs.

\end{document}